\documentclass[pra,twocolumn,showpacs,groupedaddress,notitlepage,superscriptaddress,floatfix,nofootinbib]{revtex4-2}
\usepackage[english]{babel}
\usepackage{amsmath}
\usepackage{mathtools}
\usepackage{amssymb}
\usepackage{amsthm}
\usepackage{braket}
\usepackage{xcolor}
\usepackage{graphicx}
\usepackage{slashed}
\usepackage{hyperref}
\hypersetup{
  colorlinks=true,
  linkcolor=cyan,
  citecolor=magenta,
  filecolor=magenta,
  urlcolor=cyan,
  runcolor=cyan}
\usepackage[capitalise]{cleveref}

\usepackage{listings}
\newtheorem{mytheorem}{Theorem}
\newtheorem{mylemma}{Lemma}
\newtheorem{mycorollary}{Corollary}
\newcommand{\abs}[1]{\left|#1\right|}
\newcommand{\floor}[1]{\left\lfloor#1\right\rfloor}
\newcommand{\ceil}[1]{\left\lceil#1\right\rceil}
\DeclareMathOperator{\syn}{syn}
\newcommand{\tmcEc}{{\tilde{\mathcal{E}}_c}}
\DeclareMathOperator{\wt}{wt}
\newcommand{\ketbra}[1]{|{#1}\rangle\!\langle#1|}
\newcommand{\mC}{\mathcal{C}}
\newcommand{\mL}{\mathcal{L}}

\newcommand{\mP}{\mathcal{P}}
\newcommand{\mS}{\mathcal{S}}
\newcommand{\tmP}{\tilde{\mP}}
\newcommand{\tG}{\tilde{G}}
\newcommand{\tGp}{\tG^{\perp}}
\newcommand{\pmS}{\pi(\mS)}
\newcommand{\pmSL}{\pi(\mS\mL)}
\DeclareMathOperator{\Tr}{Tr}
\newcommand{\pDD}{p_{\rm DD}}
\newcommand{\pQEC}{p_{\rm QEC}}
\newcommand{\pQED}{p_{\rm QED}}
\newcommand{\FDD}{F_{\rm DD}}
\newcommand{\FLDD}{F_{\rm LDD}}
\newcommand{\FQEC}{F_{\rm QEC}}
\newcommand{\FQED}{F_{\rm QED}}
\newcommand{\FHyb}{F_{\rm Hyb}}
\newcommand{\FHybD}{F_{\rm HybD}}
\newcommand{\all}{\textup{A}}
\newcommand{\C}{\textup{C}}
\newcommand{\D}{\textup{D}}
\newcommand{\symL}{\textup{L}}
\newcommand{\symS}{\textup{S}}
\newcommand{\St}{\textup{St}}
\newcommand{\StL}{\textup{StL}}
\newcommand{\uS}{\textup{\ensuremath{\not\!\mathrm{S}}}}
\newcommand{\uC}{\textup{\ensuremath{\not\!\mathrm{C}}}}
\newcommand{\SSt}{\textup{S-St}}
\newcommand{\uSSt}{\textup{\ensuremath{\not\!\mathrm{S}}-\textup{St}}}
\newcommand{\SL}{\textup{S-L}}
\newcommand{\uSL}{\textup{\ensuremath{\not\!\mathrm{S}}-L}}
\newcommand{\SStL}{\textup{S-StL}}
\newcommand{\uSStL}{\textup{\ensuremath{\not\!\mathrm{S}}-StL}}
\newcommand{\uSuC}{\textup{\ensuremath{\not\!\mathrm{S}}-\!\ensuremath{\not\!\mathrm{C}}}}
\newcommand{\uSC}{\textup{\ensuremath{\not\!\mathrm{S}}-C}}
\newcommand{\uSD}{\textup{\ensuremath{\not\!\mathrm{S}}-D}}
\newcommand{\SuC}{\textup{S-\!\ensuremath{\not\!\mathrm{C}}}}
\newcommand{\SC}{\textup{S-C}}
\newcommand{\SD}{\textup{S-D}}
\newcommand{\uCD}{\textup{\ensuremath{\not\!\mathrm{C}}-D}}
\newcommand{\uSuSt}{\textup{\ensuremath{\not\!\mathrm{S}}-\ensuremath{\slashed{\mathrm{St}}}}}
\newcommand{\SuSt}{\textup{S-\ensuremath{\slashed{\mathrm{St}}}}}
\newcommand{\uSt}{\textup{\ensuremath{\slashed{\mathrm{St}}}}}

\newcommand{\eref}{P_{\mathrm{ref}}}
\newcommand{\Id}{\pi(I)}

\crefname{mytheorem}{Theorem}{Theorems}
\Crefname{mytheorem}{Theorem}{Theorems}
\crefname{mylemma}{Lemma}{Lemmas}
\Crefname{mylemma}{Lemma}{Lemmas}
\crefname{mycorollary}{Corollary}{Corollaries}
\Crefname{mycorollary}{Corollary}{Corollaries}

\begin{document}

\title{Quantum Error Correction and Dynamical Decoupling: Better Together or Apart?}

\author{Victor Kasatkin}
\affiliation{Department of Electrical \& Computer Engineering, University of Southern California, Los Angeles, California 90089, USA}
\affiliation{Center for Quantum Information Science \& Technology, University of Southern California, Los Angeles, California 90089, USA}

\author{Mario Morford-Oberst}
\affiliation{Department of Electrical \& Computer Engineering, University of Southern California, Los Angeles, California 90089, USA}
\affiliation{Center for Quantum Information Science \& Technology, University of Southern California, Los Angeles, California 90089, USA}

\author{Arian Vezvaee}
\affiliation{Department of Electrical \& Computer Engineering, University of Southern California, Los Angeles, California 90089, USA}
\affiliation{Center for Quantum Information Science \& Technology, University of Southern California, Los Angeles, California 90089, USA}
\affiliation{Quantum Elements, Inc., 2829 Townsgate Road, Westlake Village, California 91361, USA}

\author{Daniel A. Lidar}
\affiliation{Department of Electrical \& Computer Engineering, University of Southern California, Los Angeles, California 90089, USA}
\affiliation{Center for Quantum Information Science \& Technology, University of Southern California, Los Angeles, California 90089, USA}
\affiliation{Quantum Elements, Inc., 2829 Townsgate Road, Westlake Village, California 91361, USA}
\affiliation{Department of Physics \& Astronomy, University of Southern California, Los Angeles, California 90089, USA}
\affiliation{Department of Chemistry, University of Southern California, Los Angeles, California 90089, USA}

\begin{abstract}
Quantum error correction/detection (QEC/QED) and dynamical decoupling (DD)
are widely used tools for protecting quantum information,
and a natural goal is to combine them to outperform either approach alone.
Such a benefit is not automatic:
physical DD pulses can conflict with an encoded subspace,
and correction/detection performance is determined by the errors that survive decoding or evade postselection,
which need not be those suppressed by DD.
We analyze a hybrid memory cycle in which DD is implemented logically (LDD)
using elements of the normalizer of an $[[n,k,d]]$ stabilizer code, and the LDD is followed by syndrome measurement and (in the correction setting) by recovery or (in the detection setting) by postselection on a trivial syndrome.
In an effective Pauli model with physical error probability $p$,
LDD suppression factor $\pDD$, and phenomenological recovery/readout imperfection rates $\pQEC$ and $\pQED$,
we derive closed-form entanglement-fidelity expressions for QEC-only, QED-only, LDD-only, physical DD (as a baseline), and the hybrid protocols LDD+QEC and LDD+QED, including conditional-fidelity and acceptance-probability formulas in the QED setting.
The formulas are expressed via a small set of code-dependent weight enumerator polynomials, making explicit the role of the recovery map, the postselection rule, and the LDD group.
For ideal recovery we obtain a necessary-and-sufficient criterion for when LDD+QEC outperforms QEC-only, and in the low-noise regime we give a simple sufficient design rule:
it is enough that LDD suppresses at least one minimum-weight uncorrectable Pauli error for the chosen recovery map;
stabilizer-equivalent choices of LDD generators can be used to enforce this condition.
For the detection setting we derive the corresponding conditional-fidelity and acceptance-probability formulas,
together with an exact comparison criterion under ideal readout for when LDD+QED improves upon QED-only.
We supplement our analysis with numerical results for the $[[7,1,3]]$ Steane code, a $[[13,1,3]]$ code, the ``perfect'' $[[5,1,3]]$ code, and the $[[4,2,2]]$ error-detecting code, mapping regions of hybrid-protocol advantage in parameter space beyond the small-$p$ regime.
Our work illustrates the need for co-design of the code, the recovery/detection rule, and the logical decoupling group, and clarifies the conditions under which the hybrid LDD+QEC and LDD+QED protocols are advantageous.
\end{abstract}

\maketitle

\section{Introduction}
\label{sec:introduction}

Protecting quantum information against noise is essential for long-lived quantum memories
and for fault-tolerant quantum computation~\cite{Gaitan:book,Lidar-Brun:book,Terhal:2015dq}.
Two broad families of protection methods are syndrome-based coding and dynamical decoupling (DD).
In syndrome-based protocols, one either actively corrects errors using quantum error correction (QEC)
or detects them and postselects on trivial syndromes, i.e., quantum error detection
(QED)~\cite{Shor1995PRA,Steane:96a,Vaidman:1996vs,Knill:1997kx,Calderbank:98,Knill:2000dq,Kribs:2005:180501,poulin_stabilizer_2005,Bacon:05}.
DD, by contrast, uses open-loop control to average out certain error
mechanisms~\cite{Viola:98,Duan:98e,Zanardi1999fk,Vitali:99,ShiokawaLidar:02,Khodjasteh:2005xu,Uhrig:2007qf,West:2010:130501,Wang:10,Genov2017PRL,Yi:2026aa}.

It is natural to try to combine QEC/QED and DD to improve overall performance:
DD is commonly expected to help against slowly varying/coherent or low-frequency error components,
whereas QEC/QED is often formulated for stochastic effective noise.
Thus, one might expect that, given a fixed hardware noise environment and a fixed code,
adding DD would typically improve the performance of a QEC or QED memory cycle
(assuming additional errors introduced by DD pulse imperfections are negligible).
It is, however, a priori unclear whether this is always the case,
or whether there are situations in which combining QEC or QED with DD provides little benefit,
or even degrades the performance.
The combination of QEC and DD has been studied in terms of resource overhead and the fault-tolerance accuracy threshold in a hybrid DD-QEC setting,
where DD pulses are implemented at the physical level~\cite{Ng:2011dn}.
An alternative approach is to apply DD as logical operations of a code
(encoded DD~\cite{Viola:01a}), in particular as logical dynamical decoupling
(LDD), in which the decoupling operations are chosen to be Pauli strings preserving
the codespace (e.g., by using elements of the code normalizer) while averaging away
Pauli errors that do not commute with a specified logical decoupling
group~\cite{PhysRevLett.100.160506,Paz-Silva:2013tt,vezvaee2025demonstrationhighfidelityentangledlogical}.
It is an open question whether hybrid LDD+QEC and LDD+QED protocols are always better than either approach in isolation.

In this work, we analyze hybrid strategies in which LDD---or more generally any decoupling construction captured by a nontrivial Pauli subgroup $\tG\subset\tmP_n$ in our effective model---is implemented on an encoded block and is followed by stabilizer syndrome processing.
In the correction setting this means syndrome measurement and recovery;
in the detection setting it means syndrome measurement and postselection on a trivial reported syndrome.
More specifically, we focus on a single ``encode-wait-syndrome-processing'' cycle for an $[[n,k,d]]$ stabilizer code and compare four strategies in each setting.
In the correction setting these are QEC-only, LDD-only, physical DD on $k$ unencoded qubits (as a baseline), and the hybrid protocol LDD+QEC.
In the detection setting they are QED-only, LDD-only, physical DD, and the hybrid protocol LDD+QED.
Instead of attempting a microscopic description of the noise during the wait (memory) interval,
we adapt a phenomenological effective model for the combined effect of the physical
noise and DD during the whole wait interval.
In this model, the noise is described by an effective Pauli channel
parameterized by a physical Pauli error probability $p$,
and DD suppression is parameterized by $\pDD \in [0, 1]$,
which rescales the contribution of errors that do not commute with the chosen decoupling group (followed by renormalization).
For example, $\pDD = 1$ corresponds to no DD-induced reshaping of the effective Pauli distribution,
while $\pDD < 1$ corresponds to progressively stronger reweighting of the DD-suppressed sector. The endpoint $\pDD = 0$ is the idealized limit in which the effective residual channel has no support on Pauli errors that anticommute with the chosen decoupling group, so that all remaining residual noise lies in the DD-unsuppressed sector.
It should be noted that, had we assumed a microscopic depolarizing noise model,
DD would not be able to suppress it because DD is ineffective against genuinely Markovian white noise.
This situation would correspond to $\pDD = 1$ in our effective phenomenological model.
Imperfections in recovery or syndrome readout/postselection are captured by phenomenological parameters $\pQEC \in [0, 1]$ and $\pQED \in [0, 1]$.

In the correction setting we quantify performance by the entanglement fidelity $F$,
equivalently the probability of no net logical fault after the protocol.
In the detection setting we track both the probability $P_A$ of accepting a run
and the logical fidelity conditioned on acceptance.
Our main goal is to extract checkable criteria for when LDD+QEC improves upon QEC-only and when LDD+QED improves upon QED-only, as well as to compare both hybrid protocols to LDD-only and physical DD.
To do so, we derive the dependence of the relevant performance metrics on the code, the recovery or postselection rule, and the LDD group.
Due to the generality of our approach, many results developed for LDD apply, in fact, to any nontrivial Pauli decoupling group
(even if it does not consist of logical operators).

We summarize our main results and takeaways as follows.
\begin{enumerate}
  \item \textit{Analytical performance formulas.}
  We derive closed-form expressions for the QEC entanglement fidelities and the QED conditional fidelities/acceptance probabilities of the strategies considered here
  in terms of a small set of weight-enumerator polynomials (WEPs)
  that count Pauli errors by weight within subsets defined by the stabilizer code,
  the recovery or detection rule, and the (L)DD group
  (see \cref{thm:F1.general,lem:QED.Hyb.general,tab:tags,tab:tags-QED}).
  These results quantify how performance depends on specific design choices.

  \item \textit{An exact criterion for hybrid-protocol advantage under perfect recovery.}
  In the idealized perfect-recovery setting ($\pQEC=0$),
  we give a necessary and sufficient condition for when LDD+QEC outperforms QEC-only
  (\cref{thm:Hyb.vs.QEC}):
  the hybrid protocol helps precisely when the fraction of uncorrectable errors
  is larger among errors suppressed by LDD than among errors it leaves unsuppressed.
  Intuitively, LDD must target the errors that survive decoding.

  \item \textit{A low-noise criterion for hybrid-protocol advantage.}
  In the limit $p\to 0$, we prove a simple sufficient condition guaranteeing that
  LDD+QEC outperforms QEC-only for all sufficiently small $p$ (for any fixed $\pDD<1$):
  it suffices that LDD suppresses at least one minimum-weight uncorrectable Pauli error
  (\cref{thm:Hyb.vs.QEC-asymptotics}).
  Moreover, when the sufficient condition fails
  (i.e., when $\beta>\alpha$, where $\alpha$ is the minimum weight of an uncorrectable error and $\beta$ is the minimum weight of a suppressed uncorrectable error),
  but there exists a minimum-weight uncorrectable error
  with nontrivial syndrome (a condition that holds, for example, whenever $d\ge 2$),
  dressing an LDD generator by a stabilizer can modify the suppressed sector
  (without changing the logical action)
  so as to enforce $\beta'=\alpha$ and thereby recover the sufficient-condition regime
  (\cref{thm:Hyb.vs.QEC-asymptotics}, part~2).

  \item \textit{Robustness to imperfect recovery.}
  If there is a strict hybrid-protocol advantage at $\pQEC=0$,
  then the advantage persists for sufficiently small $\pQEC>0$ by continuity
  (\cref{cor:Hyb.vs.QEC}).

  \item \textit{Extension to error detection.}
  We derive analogous WEP-based expressions for the QED setting,
  track both conditional fidelity and acceptance probability,
  introduce a natural partial order on pairs $(F,P_A)$,
  and obtain an exact comparison criterion for LDD+QED versus QED-only under ideal readout
  (\cref{ss:QED-partial-order,thm:QED.Hyb.vs.QED,tab:tags-QED}).

  \item \textit{Numerical corroboration.}
  We perform numerical case studies for the $[[7,1,3]]$, $[[5,1,3]]$, and $[[13,1,3]]$ codes in the QEC setting,
  and the $[[4,2,2]]$ code in the QED setting,
  mapping regions of advantage in the $(\pDD,\pQEC)$ or $(\pDD,\pQED)$ plane beyond the low-$p$ theorem regimes,
  and illustrating that the hybrid protocol's benefit depends on whether the chosen LDD group suppresses the error classes that dominate logical failure for the chosen recovery or postselection rule
  (\cref{sec:numerics}).
\end{enumerate}

The main take-home message of this work is the following:
DD and QEC/QED are ``better together'' when logical decoupling
is chosen to suppress the Pauli errors that dominate the relevant logical-failure mechanism for the chosen recovery or postselection rule,
and when the physical error rate is low enough for the encoded strategy to realize its asymptotic advantage.
Our WEP formulas and comparison theorems make this criterion explicit and checkable.

The remainder of the manuscript is organized as follows.
In \cref{sec:setup} we present the common model and define the correction and detection protocols we analyze.
In \cref{sec:theorem-2-ec} we give the main results for the error-correction setting.
In \cref{sec:QED} we give the corresponding expressions and comparison criteria for the error-detection setting.
In \cref{sec:numerics} we numerically evaluate the formulas for representative QEC and QED codes
and illustrate the resulting regions of advantage in the $(\pDD,\pQEC)$ and $(\pDD,\pQED)$ planes.
\cref{sec:summary} summarizes the conclusions,
with technical derivations and the decoding maps used in the numerics collected in the appendices.

\section{Setup}
\label{sec:setup}

We assume the following common setup for stabilizer-based QEC/QED and (L)DD.

\subsection{Stabilizer-code preliminaries}
\begin{itemize}
  \item There is a natural projection $\pi\colon \mP_n \to \tmP_n$
    which removes the global phase of a Pauli operator.
    We call elements of $\tmP_n$ ``phase-stripped Paulis.''
    Equivalently, $\tmP_n \cong \mP_n/\langle iI\rangle$
    and we may represent each class by its unique Hermitian representative
    of the form $P_1\otimes \dots \otimes P_n$ with $P_j \in \{I, X, Y, Z\}$.
    In examples we write such Pauli strings by concatenating the letters from
    $\{\texttt{I}, \texttt{X}, \texttt{Y}, \texttt{Z}\}$
    (i.e., omitting $\otimes$), e.g., $\texttt{XXIZZY}$.
  \item Two elements of $\mP_n$ either commute or anticommute.
    If $E_1', E_2' \in \tmP_n$, we say that $E_1'$ and $E_2'$ commute (resp. anticommute)
    if their representatives $E_1, E_2$ (such that $E_j' = \pi(E_j)$)
    commute (resp. anticommute), i.e.,
    \begin{equation}
      \label{eq:commute}
      E_1E_2E_1^{-1}E_2^{-1} = I
    \end{equation}
    (resp. $=-I$), and write $[E_1', E_2'] = 0$ (resp. $[E_1', E_2'] \neq 0$).
    In other words, when we use the terms ``commute'' and ``anticommute,''
    we always refer to multiplication in $\mP_n$ and not in $\tmP_n$
    (otherwise these notions would be trivial because $\tmP_n$ is abelian).
    Note that this is well-defined because another choice of $E_1$, $E_2$
    would differ by phase factors which cancel in \cref{eq:commute}.
  \item $1 \leq k \leq n$.
  \item $\mS$ is a subgroup of $\mP_n$ with $2^{n-k}$ elements that does not contain $-I$.
  \item From the above it follows that $\mS$ is abelian
    and all elements of $\mS$ square to the identity.
    Indeed, any Pauli with phase $\pm i$ squares to $-I$,
    so $\mS$ contains only Hermitian Paulis (phases $\pm 1$).
    If $P,Q\in\mS$ anticommute, then $(PQ)^2=-I\in\mS$, a contradiction.
    Therefore, $\mS$ is an elementary abelian $2$-group
    and can be viewed as a vector space over $\mathbb{F}_2$.
  \item Let $\mS^*$ be the dual of $\mS$,
    i.e., the set of $\mathbb{F}_2$-linear functionals $\mS\to\mathbb{F}_2$.
    We interpret $\mS^*$ as the space of syndrome outcomes
    (stabilizer measurement results, in $\{0,1\}$ form).
    For every $E \in \mP_n$ we let $\syn(E) \in \mS^*$ be the corresponding syndrome
    given by $\syn(E)(S) = 1$ iff $E$ anticommutes with $S$.
    Since commutation/anticommutation does not depend on phase factors,
    $\syn(E)$ depends on $E$ only through $\pi(E)$,
    and we define $\syn(E')=\syn(E)$ for $E'=\pi(E)$.
  \item For the QEC analysis, let $D\colon \mS^* \to \tmP_n$ be any map satisfying $D(0) = \pi(I)$
    and $\forall \sigma \in \mS^*\; \syn(D(\sigma)) = \sigma$.
    This map can be interpreted as a decoding (recovery) map:
    it assigns to each syndrome $\sigma \in \mS^*$
    a chosen Pauli recovery $D(\sigma)$ with that syndrome. In the QED analysis of \cref{sec:QED}, no such recovery is applied;
    instead we postselect on a trivial reported syndrome.
  \item Let $C(\mS)\subset\mP_n$ denote the (Pauli) centralizer of $\mS$,
    i.e., the set of all $P\in\mP_n$ commuting with all elements of $\mS$.
    We choose a subgroup $\mL\subseteq C(\mS)$ such that
    \begin{equation}
      C(\mS)=\mS \mL,
      \quad
      \pmS\cap \pi(\mL)=\{\pi(I)\}.
    \end{equation}
    Equivalently, the standard quotient map $q\colon C(\mS) \to C(\mS) / \mS$
    restricts to an isomorphism $q|_{\mL} \colon \mL \xrightarrow{\sim} C(\mS) / \mS$.
    $\pi(\mL)\cong \pi(C(\mS))/\pmS$ is the logical Pauli group
    (phase-stripped, modulo stabilizers). Note that $\abs{C(\mS)} = 2^{n+k+2}$, $\abs{\mL} = 4^{k+1}$, and $\abs{\pi(\mL)} = 4^k$.
\end{itemize}

Many results we present below do not impose any additional restrictions on the code.
For example, they are applicable even to codes with $d=1$
which fail to correct some distance-1 errors.
Some results require additional assumptions on the code,
in which case we state those assumptions explicitly.

\subsection{(L)DD}

We assume that $\tG \subset \tmP_n$ is a nontrivial decoupling group,
and we define $G=\pi^{-1}(\tG)$. While we operate at the level of a phenomenological
effective model and do not prescribe any specific implementation of the DD sequence,
one can think of DD as a sequence of Pauli pulses $(\hat g_1, \hat g_1^{-1}\hat g_2, \hat g_2^{-1}\hat g_3, \dots, \hat g_{N-2}^{-1}\hat g_{N-1}, \hat g_{N-1}^{-1})$ with $\hat g_j \in G$, interleaved with free-evolution segments,
such that the total time spent in a free-evolution segment corresponding to a given phase-stripped element $g' \in \tG$
is equal for all $g' \in \tG$. Since the pulses are representatives in $G\subset\mP_n$,
they are Pauli strings and can be implemented by simultaneous single-qubit Pauli pulses.

For $E \in \mP_n$ we say that $E$ anticommutes with $G$
if and only if (iff) $E$ anticommutes with at least one element of $G$.
In that case (since $\tG$ is an $\mathbb{F}_2$-vector space under multiplication),
$E$ anticommutes with exactly half of the elements of $G$.
Similarly, we say that $E \in \mP_n$ commutes with $G$ if it commutes with all its elements.

We define the set of phase-stripped Paulis commuting with $\tG$ (equivalently, with $G$)
as the symplectic orthogonal complement
\begin{equation}
  \label{eq:Gperp}
  \tGp \coloneqq \bigl\{E' \in \tmP_n \big| [E',g']=0 \ \ \forall g'\in \tG \bigr\}.
\end{equation}

The errors in \cref{eq:Gperp} are unsuppressed by DD using $\tG$ in our model, whereas errors in the complement $\tmP_n\setminus \tGp$ are suppressed. This corresponds to the standard DD averaging intuition: repeated conjugation by the DD group averages the relevant error generator toward the commutant of the group, at least to leading order. This motivates treating $\tGp$ as the DD-unsuppressed sector and $\tmP_n\setminus\tGp$ as the DD-suppressed sector.

\subsection{Protocol and error model}
\label{ss:protocol-and-error-model}
Throughout, we assume the following single-cycle protocol (a single memory cycle):
\begin{enumerate}
  \item \label{it:thm2-s1} A state is encoded into either the trivial code (for DD-phys) or an $[[n,k,d]]$ stabilizer code.
  \item \label{it:thm2-s2} The encoded state is stored for a finite time---the ``wait interval'' (or ``memory interval'')---during which physical noise acts and we may or may not apply (L)DD.
  \item \label{it:thm2-s3} At the end of the cycle, stabilizer syndrome information is processed, either by applying recovery (QEC) or by postselection on a trivial syndrome (QED).
\end{enumerate}

The ``wait interval'' should be understood as the effective noise-accumulation window between syndrome-processing steps: in a bare memory experiment it is simply the storage time, while in a periodically corrected/detected memory it represents an interval between consecutive rounds of syndrome extraction. Its duration sets the physical error scale $p$. This interval need not be interpreted as literal idling only; rather, it is the portion of the cycle whose accumulated effect we model as an effective Pauli channel, possibly modified by DD/LDD. By contrast, imperfections of the final syndrome-processing stage itself are modeled separately by the phenomenological parameters $\pQEC$ and $\pQED$.

We assume that the encoding is perfect, so the only imperfections are those accumulated during the wait interval and those occurring in the syndrome-processing step.
More specifically, we assume a simplified Pauli error model in which the state $\rho$ transforms as
\begin{equation}
  \rho\ \longmapsto\ \sum_{E'\in \tmP_n} \Pr(E')  E' \rho (E')^{\dagger},
\end{equation}
where $\Pr(\bullet)$ is a probability distribution on $\tmP_n$.
The right-hand side $E' \rho (E')^{\dagger}$ is defined as $E \rho E^\dagger$ for
any representative $E\in\mP_n$ such that $\pi(E)=E'$.
This is unambiguous because replacing $E$ with $e^{i\phi} E$
leaves $E \rho E^\dagger$ unchanged because the global phase cancels.

We introduce an effective ``DD inverse-strength'' parameter $\pDD\in[0,1]$, which scales the probabilities of DD-suppressed errors to capture the residual error strength after applying DD.
We assume that the DD sequence does not introduce additional error mechanisms beyond those already absorbed into this effective $\pDD$; this is best interpreted as either using pulse-robust DD sequences~\cite{Quiroz2013PRA,Genov2017PRL} or having independently characterized control imperfections and folding them into the fitted value of $\pDD$. The case
$\pDD=1$ corresponds to the absence of DD or to situations where DD does not result in any reshaping of the effective Pauli distribution (e.g., genuinely Markovian noise).
The endpoint $\pDD=0$ should be understood as the idealized limit in which, within our effective model, the residual channel assigns zero weight to Pauli errors that anticommute with $\tG$ (equivalently, all remaining residual noise lies in $\tGp$). We use this only as a formal benchmark, not as implying error-free evolution; in any realistic implementation one expects $1 > \pDD > 0$ due to finite pulse widths, calibration/control errors, and finite-bandwidth filtering~\cite{Suter:2016aa,CywinskiLutchynNaveDasSarma:2008aa,GreenSastrawanUysBiercuk:2013aa}.
In practice, small values of $\pDD$ would correspond to sufficiently fast, regular, and accurate DD applied to slowly varying non-Markovian noise.
The description we present here is a phenomenological model for the combined effect of DD during the whole wait interval; we do not attempt to derive it from any
specific Hamiltonian, Lindbladian, pulse schedule, or Magnus-expansion order.

With this in mind, we assume the following error model for the wait interval.
In the absence of DD (equivalently, $\pDD=1$, i.e., no suppression via DD),
the probability of error $E' \in \tmP_n$ is
\begin{equation}
\label{eq:error-model}
\Pr(E') = (1-p)^{n-w}\Bigl(\frac{p}{3}\Bigr)^{w},
\end{equation}
where $w = \wt(E')$ is the number of non-identity single-qubit factors in the canonical representative of $E'$.
When DD is applied, the probabilities of errors $E'$ that anticommute with $\tG$ (i.e., $E'\notin \tGp$) are multiplied by $\pDD$, and then all probabilities are renormalized so that they still sum to $1$.
We note that there are alternatives to this model of DD suppression, and discuss this in detail in \cref{app:alt-DD-model}.

After the wait interval, the QEC and QED branches differ only in how the syndrome information is used.

For the QEC setting, we assume that with probability $1 - \pQEC$ error correction applies the recovery $D(\sigma)$ corresponding to the measured syndrome $\sigma$, and with probability $\pQEC$ it applies a uniformly random Pauli recovery consistent with that syndrome, but only when a nontrivial syndrome is measured; if the measured syndrome is trivial, the state is left unchanged.
Conditioned on a fixed nonzero syndrome, any two syndrome-consistent recoveries differ by a zero-syndrome operator, i.e., by an element of $\pi(C(\mS))$, and hence (modulo $\pmS$) by a logical Pauli in $\pi(C(\mS))/\pmS\cong\pi(\mL)$.
Therefore, under our assumption that the recovery in the case of decoder failure
is chosen uniformly at random among syndrome-consistent recoveries,
the induced logical Pauli is uniform over the $4^k$ logical classes
and the probability of the trivial logical class is $4^{-k}$.
Thus, after this step the state lies in the codespace
and the remaining effective errors are logical Paulis.
We call this whole procedure
(regardless of whether $D(\sigma)$ or a random recovery was applied) a decoder.%
\footnote{We remark that in experimental QEC work,
  ``decoder'' often denotes the classical post-processing algorithm
  used to infer the logical correction from syndrome data under an assumed noise model.
  In contrast, in this work ``decoder'' means the effective {single-round}
  syndrome-to-recovery rule (equivalently, a Pauli-frame update rule)
  together with the associated recovery operation.}

For the QED setting, no Pauli recovery is applied.
Instead, the run is accepted iff the reported syndrome is trivial.
With probability $1-\pQED$ the syndrome is reported correctly,
and with probability $\pQED$ the reported outcome is completely random, uniformly distributed over the $2^{n-k}$ syndromes.
Hence a detected (nonzero-syndrome) error can be accepted only through readout failure.
When such a detected error is nevertheless accepted, we model the resulting logical action on the codespace as a uniformly random logical Pauli;
equivalently, the accepted output is maximally mixed on the $k$-qubit logical subsystem, so the trivial logical class occurs with probability $4^{-k}$.
This modeling assumption is phenomenological:
even though the actual state is different from a maximally mixed state
(indeed, it is outside of the code space),
it may have the same impact as a maximally
mixed state on the probability distribution
of an eventual outcome of an experiment in which it is used.

Under this model,
$\pDD=0$ corresponds to the idealized limit in which the effective residual channel assigns zero weight to all Pauli errors that anticommute with $\tG$ (equivalently, all remaining residual noise lies in $\tGp$),
$\pQEC=0$ corresponds to ideal recovery given the measured syndrome,
and $\pQED=0$ corresponds to ideal syndrome readout and postselection.
We treat the case when a QEC code is used but its recovery step is omitted as $\pQEC = 1$,
i.e., whenever a nontrivial syndrome is detected we apply
a uniformly random syndrome-consistent recovery.
Conditioned on detecting a nontrivial syndrome,
such a random recovery yields the correct logical frame with probability $4^{-k}$
(and hence induces a logical error with probability $1-4^{-k}$).
The analogous no-postselection LDD-only baseline in the QED setting is defined separately in \cref{sec:QED};
it is not obtained by setting $\pQED=1$, which would instead correspond to completely random syndrome readout.

\subsection{Faults \textit{vs} errors}

We distinguish between \emph{faults}, which are stochastic malfunction events in the implementation,
and \emph{errors}, which denote the resulting effective operators acting on the data.
In the effective-Pauli model used throughout this work,
the cumulative effect of all faults during the memory (wait) interval
is represented by a single random phase-stripped Pauli error $E'\in\tmP_n$;
its {weight} is $\wt(E')$, the number of qubits on which it acts nontrivially.

In the QEC setting, imperfections in syndrome extraction and/or recovery
are modeled at the event level by a recovery fault:
when a nontrivial syndrome is measured,
the ideal recovery $D(\sigma)$ is applied with probability $1-\pQEC$,
while with probability $\pQEC$ a uniformly random syndrome-consistent Pauli recovery is applied
(and if the measured syndrome is trivial, the state is left unchanged).

In the QED setting, imperfections in the accept/reject decision are modeled at the event level by a readout fault:
the true syndrome is reported correctly with probability $1-\pQED$ and is otherwise replaced by a uniformly random syndrome.
A readout fault can therefore cause a detected error to be accepted.

Given a fixed recovery map $D$ (QEC setting), we call a Pauli error $E'$ correctable if $D(\syn(E'))E'\in\pmS$, i.e., if ideal decoding returns the state to the codespace without inducing a nontrivial logical action.
Otherwise $E'$ is uncorrectable.

In the QED setting, rather than correctable versus uncorrectable, the natural distinction is between
undetected errors $E'\in\pmSL$ (zero syndrome) and detected errors $E'\notin\pmSL$ (nonzero syndrome).
Among the undetected errors, stabilizers $\pmS$ act trivially on the codespace,
while $\pmSL\setminus\pmS$ are nontrivial logical errors on accepted runs.

A logical error (or logical fault) refers to the induced action on the encoded subspace
(equivalently, the coset of the residual operator modulo the stabilizer,
an element of $\pi(C(\mS))/\pmS \simeq \pi(\mL)$).
A logical failure event occurs when the net operator after the full protocol has a nontrivial logical component;
in the QED setting this definition is understood conditionally on acceptance.

\subsection{Strategies}
We study two related four-strategy comparison problems.

In the QEC setting, we compare:
\begin{enumerate}
  \item \textit{DD-phys}: DD on $k$ physical qubits (no encoding or recovery);
  \item \textit{QEC-only}: QEC recovery alone (without LDD);
  \item \textit{LDD-only}: LDD alone on an encoded block, with the syndrome measured but deterministic recovery omitted (equivalently, $\pQEC=1$ as above);
  \item \textit{LDD+QEC}: Hybrid LDD followed by QEC recovery.
\end{enumerate}

In the QED setting, we compare:
\begin{enumerate}
  \item \textit{DD-phys}: DD on $k$ physical qubits (no encoding or postselection);
  \item \textit{QED-only}: syndrome measurement with postselection, but no LDD;
  \item \textit{LDD-only}: LDD on the encoded block, but without postselection (so $P_A=1$ by definition; the precise convention is given in \cref{sec:QED});
  \item \textit{LDD+QED}: Hybrid LDD followed by syndrome measurement and postselection.
\end{enumerate}

While LDD typically refers to DD with $\tG = \pi(\mL)$,
most arguments presented in this work do not depend on any specific choice of $\tG$
and apply equally well to any nontrivial subgroup $\tG \subset \tmP_n$.

\emph{A major goal of this work is to establish, in both the QEC and QED settings, the ordering of the four strategies.} The comparison is made assuming fixed values of relevant parameters: the number of encoded qubits $k$ and per-qubit physical error probability $p$ are the same for all four strategies, the code is fixed for strategies other than DD-phys, $\pQEC$ and $\pQED$ are fixed for strategies involving QEC or QED, and $\pDD$ and the size of the DD group are fixed for strategies involving DD. This is done to isolate the error-suppression question from a full overhead optimization problem. A realistic architecture-level cost analysis of LDD implementation is an important direction for future work.

We note that another commonly used strategy
is to directly embed DD-phys in the idle gaps of a quantum circuit.
This approach, which has been successfully used to boost algorithmic performance~%
\cite{pokharel2022demonstration,Mundada:2023aa,Baumer2024,singkanipa2025demonstration}
and QEC fidelities~\cite{Acharya:2025aa,Bluvstein:2026aa,vezvaee2025surfacecodescalingheavyhex},
has been rigorously analyzed
in the context of a hybrid DD fault-tolerant-QEC strategy~\cite{Ng:2011dn}. We do not address it here.

\subsection{Performance metrics}

In the QEC setting, after physical noise during the wait interval, (L)DD, and (optionally) QEC,
the effective CPTP map on the codespace $\mC (\simeq \mathbb C^{2^{k}})$ can be written as
\begin{equation}
\label{eq:logical-Pauli-channel}
\Lambda(\rho)=\sum_{L\in\pi(C(\mS)) / \pmS} p_{L} L\rho L^\dagger,
\quad \sum_{L\in\pi(C(\mS)) / \pmS} p_{L}=1.
\end{equation}
Here $\rho$ is a density matrix on $\mC$, $L$ is an equivalence class of logical operators defined up to phase and multiplication by stabilizers, and $p_{L}$ is the renormalized probability of the logical Pauli $L$, after any DD rescaling has been applied and the final probabilities have been rescaled to sum to one.
Note that $L\rho L^\dagger$ is well-defined.
That is, if $L_1,L_2 \in C(\mS)$ are two representatives of the same equivalence class $L$, then $L_1 \rho L_1^\dagger = L_2 \rho L_2^\dagger$.
Indeed, since $L_1, L_2$ belong to the same class, we have $L_2 = sSL_1$ where $s \in \mathbb{C}$, $\abs{s}=1$, and $S \in \mS$.
Since $\rho$ is supported on $\mC$, we have $S\rho = \rho S^\dagger = \rho$.
Since $L_1\in C(\mS)$, $S$ commutes with $L_1$.
Combining these observations, we obtain
\begin{equation}
  L_2 \rho L_2^\dagger = sSL_1 \rho L_1^\dagger S^\dagger s^* = L_1 \rho L_1^\dagger.
\end{equation}

We quantify the correction-setting performance using the same fidelity measure $F$:
\begin{align}
  F &\coloneqq \text{probability of no logical errors after the protocol} \notag \\
  &= p_{I}. \label{eq:F}
\end{align}
We show in \cref{app:ent-fid} that this is identical to the standard entanglement fidelity.
We denote the corresponding QEC fidelities as $\FDD$, $\FQEC$, $\FLDD$, $\FHyb$.

In the QED setting, the natural figure of merit is two-dimensional.
We track the acceptance probability $P_A$, i.e., the probability that the reported syndrome is trivial,
together with the logical fidelity $F$ conditioned on acceptance, i.e., the probability that an accepted run induces the trivial logical class.
We denote the conditional fidelities of QED-only and LDD+QED by $\FQED$ and $\FHybD$, respectively;
the corresponding acceptance probabilities are $P_{A,{\rm QED}}$ and $P_{A,{\rm Hyb}}$.
For DD-phys and LDD-only in the detection setting, $P_A=1$ by definition.

When comparing these quantities, we often focus on the limit $p \to 0$,
while keeping $\pDD > 0$ constant and taking either $\pQEC = 0$ or $\pQEC = o(1)$ in the correction setting,
or $\pQED = 0$ or $\pQED = o(1)$ in the detection setting.
This asymptotic regime isolates the leading dependence on the physical error rate $p$ while treating $\pDD$ as a fixed ``DD floor'' parameter, which need not vanish as $p\to 0$.
We also focus on $\pQEC=0$ or $\pQEC=o(1)$, and on $\pQED=0$ or $\pQED=o(1)$, to capture the idealized settings in which recovery or syndrome readout/postselection become increasingly reliable in the same small-$p$ limit.
Finally, we focus on $\pDD>0$ to avoid the degenerate perfect-decoupling limit; the endpoint $\pDD=0$ corresponds to
complete suppression (followed by renormalization) of all errors outside $\tGp$.
In particular, for DD-phys with $\tG=\tmP_k$ (so that $\tGp=\{\pi(I)\}$), this limit yields $F=1$ for any $p<1$.

We first derive explicit expressions for the QEC fidelities and then compare them.
For this comparison, we assume that $k$ in all cases is the same
and that the QEC code (and hence $n$) used in QEC-only, LDD-only, and LDD+QEC is the same.
For DD-phys, the code is trivial: $n=k$, $\mS = \{I\}$, $\mL = \mP_k$.
To make the comparison fair, we assume that the order of the DD group is the same in all three cases where DD is used: $4^k$.
In the LDD-only case, since $n>k$ and we need to project the $n$-qubit state into the codespace, we assume that the syndrome is measured and (upon detecting a nontrivial syndrome) a uniformly random syndrome-consistent recovery is applied so that the output state lies in the codespace.
The QED comparison uses the same encoded block and DD-group size, with postselection replacing recovery as described in \cref{sec:QED}.

\section{Analysis of the four strategies}
\label{sec:theorem-2-ec}

\subsection{DD-phys}
\label{ss:DD-phys}
In DD-phys, unlike in the other three strategies, we set $n = k$.
In this case, we have $k$ physical qubits and apply DD with a group that contains $4^k$ elements.
This means that the DD group is $\tG = \tmP_k$ and $\tGp = \{\pi(I)\}$.

\begin{mylemma}
  With no DD (i.e., $\pDD=1$), the probability of no error is $\FDD = (1-p)^k$.
With DD, the probability of no error is
  \begin{subequations}
  \begin{align}
    \label{eq:F2.explicit}
    \FDD &= \frac{(1-p)^{k}}{(1-p)^{k}+\pDD\bigl[1-(1-p)^{k}\bigr]} \\
    \label{eq:F2.explicit-2}
    &= 1 - k p \pDD + O(p^2)
  \end{align}
  \end{subequations}
  as $p \to 0$ (uniformly in $\pDD$).
\end{mylemma}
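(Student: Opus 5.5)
The plan is to compute directly from the error model in \cref{eq:error-model} with $n=k$, $\mS=\{I\}$, and $\tG=\tmP_k$, so that $\tGp=\{\pi(I)\}$. With the trivial code, the only phase-stripped Pauli giving trivial logical action is $\pi(I)$. The fidelity in \cref{eq:F} is therefore just the renormalized probability of $\pi(I)$; no syndrome processing or decoder is involved.

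\emph{Step 1 (no DD).} At $\pDD=1$ there is no rescaling, and $\Pr(\pi(I))=(1-p)^k$ is read off directly from \cref{eq:error-model} with $w=0$.

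\emph{Step 2 (with DD).} Every non-identity element of $\tmP_k$ anticommutes with some element of $\tmP_k$. Indeed, a non-identity single-qubit factor anticommutes with a suitable single-qubit Pauli on that qubit, tensored with identities elsewhere. Hence every $E'\neq\pi(I)$ lies outside $\tGp$, and its weight gets multiplied by $\pDD$. Before rescaling, the total probability is $1$, by the binomial identity $\sum_w \binom{k}{w}3^w(1-p)^{k-w}(p/3)^w=1$. So the unnormalized mass of the non-identity sector is $1-(1-p)^k$. After rescaling, the total mass is $(1-p)^k+\pDD[1-(1-p)^k]$. Dividing the identity's mass $(1-p)^k$ by this total gives \cref{eq:F2.explicit}.

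\emph{Step 3 (asymptotics).} Write $F=1-\pDD A/\bigl((1-p)^k+\pDD A\bigr)$ with $A=1-(1-p)^k=kp+O(p^2)$. The denominator is at least $(1-p)^k$ and tends to $1$, so
\[
\frac{\pDD A}{(1-p)^k+\pDD A}=\pDD A\,\bigl(1+O(p)\bigr)=kp\,\pDD+O(p^2).
\]
The only point needing care is \emph{uniformity in $\pDD\in[0,1]$}. This follows because $\pDD\le 1$ and the denominator is bounded below by $(1-p)^k$ independently of $\pDD$. Concretely, $\pDD A/\text{denom}-\pDD kp$ can be bounded by $|A-kp|/(1-p)^k+kp\,|1-\text{denom}|/(1-p)^k$, and both terms are $O(p^2)$ with constants independent of $\pDD$, since $|1-\text{denom}|\le A$. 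This gives \cref{eq:F2.explicit-2}. I expect no real obstacle; the uniformity bookkeeping is the only place requiring an explicit bound.
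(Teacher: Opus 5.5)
Your proposal is correct and follows essentially the same route as the paper: read off $\Pr[\pi(I)]=(1-p)^k$, observe that every non-identity Pauli lies outside $\tGp=\{\pi(I)\}$ and is rescaled by $\pDD$, then renormalize. For the expansion, the paper instead factors $(1-p)^k$ out of the denominator to get $\FDD=[1+\pDD((1-p)^{-k}-1)]^{-1}$ and expands $(1-p)^{-k}=1+kp+O(p^2)$. Your explicit error bound, which uses that the denominator is at least $(1-p)^k$ and that $\pDD\le 1$, is a slightly more careful justification of the uniformity in $\pDD$, which the paper asserts but does not spell out.
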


\begin{proof}
  Recall that without DD, we have \cref{eq:error-model}.
The probability of no error ($w=0$) is then $\FDD = \Pr[\pi(I)] = (1-p)^{k}$.

  In the DD-phys case, all Pauli errors not in $\tGp=\{\pi(I)\}$ anticommute with at least one element of $\tG$ and therefore have their probabilities multiplied by $\pDD\in[0,1]$.
Thus, the renormalized probability of no error, i.e., the fidelity $\FDD$, is
  \begin{equation}
  \FDD = \frac{\Pr[\pi(I)]}{\Pr[\pi(I)]  +  \pDD \bigl(1-\Pr[\pi(I)]\bigr)} ,
  \end{equation}
  which is \cref{eq:F2.explicit}.

  Factoring $(1-p)^{k}$ out of the denominator yields $\FDD = [1 + \pDD ((1-p)^{-k} - 1)]^{-1}$.
For $p\ll 1$ we have $(1-p)^{-k}=1+kp+O(p^{2})$.
Hence,
  \begin{equation}
  \FDD = \bigl[1+k  p  \pDD+O(p^{2})\bigr]^{-1}
  = 1-k  p  \pDD+O(p^{2}),
  \end{equation}
  which is \cref{eq:F2.explicit-2}.
\end{proof}

\subsection{Weight enumerators}

To derive the fidelities in the other three cases,
we need to correctly aggregate probabilities over the different classes of physical errors
(e.g., stabilizers vs. logical operators, suppressed vs. unsuppressed, correctable vs. uncorrectable).
We do this using weight enumerator polynomials~\cite{PhysRevLett.78.1600,Rains:1999vp}.
For any subset $A \subset \tmP_n$, its WEP is defined as
\begin{equation}
  \label{eq:W}
  W(A; z) \coloneqq \sum_{E'\in A} z^{\wt(E')}
  =\sum_{w=0}^{n} W_w(A) z^{w},
\end{equation}
where $W_w(A)$ is the number of elements of $A$ of weight $w$, i.e.,
$W_{w}(A)= \abs{\bigl\{E'\in A \,\big|\, \wt(E')=w\bigr\}}$.

For example, $\wt(\pi(I))=0$, hence $W(\{\pi(I)\}; z) = 1$; for a single qubit, $\wt(X)=\wt(Y)=\wt(Z)=1$, so $W(\tmP_1; z) = (1 + 3z)$; for $n$ qubits,
\begin{equation}
\label{eq:WPn}
W(\tmP_n; z) = \sum_{w=0}^n\binom{n}{w}(3z)^w = (1 + 3z)^n .
\end{equation}

In our depolarizing wait-interval model [\cref{eq:error-model}],
errors of the same weight have the same probability.
Writing
\begin{equation}
  (1-p)^{n-w}\Bigl(\frac{p}{3}\Bigr)^w = (1-p)^n \Bigl(\frac{p}{3-3p}\Bigr)^w,
\end{equation}
we see that for any subset $A\subset\tmP_n$,
\begin{equation}
  \sum_{E'\in A}\Pr(E') = (1-p)^n  W\left(A; z\right),
  \quad z \coloneqq \frac{p}{3-3p}.
\end{equation}
Hence, when forming conditional probabilities after DD renormalization,
it is sufficient to evaluate WEPs at $z=p/(3-3p)$; common prefactors cancel.

Similarly, $W\bigl(\tGp;z\bigr)$ is the WEP of errors unsuppressed by the decoupling group $\tG$, while $W\bigl(\tmP_n\setminus \tGp;z\bigr)$ is the WEP of errors suppressed (rescaled) by $\tG$.

To handle the protocols that use QEC, we introduce the set $\tmcEc$ of correctable physical errors, i.e., the errors that are corrected to the trivial logical class by the chosen decoding map $D$:
\begin{equation}
  \tmcEc \coloneqq \left\{E' \in \tmP_n \ \middle|\ D(\syn(E'))  E' \in \pmS\right\}.
\end{equation}
This set splits into (trivial) stabilizer errors $\pmS$ with zero syndrome, and nontrivial correctable errors $\tmcEc \setminus \pmS$ with nonzero syndrome.
Conversely, the set of uncorrectable errors is $\tmP_n \setminus \tmcEc$, which splits into
(i) nontrivial logical errors $\pmSL\setminus \pmS$ with zero syndrome, and
(ii) detectable but uncorrectable errors $\tmP_n \setminus \tmcEc \setminus \pmSL$ with nonzero syndrome.

Including DD, we obtain additional sets of physical significance.
For example, $(\tmcEc\cap\tGp)\setminus\pmS$ is the set of correctable errors that are unsuppressed and have nonzero syndrome, and $(\tmP_n\setminus\tGp) \setminus\pmSL$ is the set of suppressed errors that are detectable.
\cref{tab:tags} summarizes all the different error sets that appear in our fidelity expressions below.
It also introduces compact tag notation for the corresponding weight enumerators.

\begin{table}
  \footnotesize
  \begin{center}
    \begin{tabular}{|c|c|c|}
      \hline
      tag & weight enumerator & set description \\ \hline
      \all & $W\bigl(\tmP_n;z\bigr) = (1+3z)^n$ & all \\ \hline
      \St & $W\bigl(\pmS;z\bigr)$ & stabilizers \\ \hline
      \uSt & $W\bigl(\tmP_n\setminus \pmS;z\bigr)$ & non-stabilizers \\ \hline
      \symS & $W\bigl(\tmP_n\setminus \tGp;z\bigr)$ & suppressed \\ \hline
      \uS & $W\bigl(\tGp;z\bigr)$ & unsuppressed \\ \hline
      \C & $W\bigl(\tmcEc \setminus \pmS;z\bigr)$ & corrected (nontrivial syndrome) \\ \hline
      \uC & $W\bigl(\tmP_n\setminus \tmcEc;z\bigr)$ & uncorrected \\ \hline
      \D & $W\bigl(\tmP_n \setminus \pmSL;z\bigr)$ & detected (nonzero syndrome) \\ \hline
      \symL & $W\bigl(\pmSL \setminus \pmS;z\bigr)$ & nontrivial logical \\ \hline
      \uSSt & $W\bigl(\pmS \cap \tGp; z\bigr)$ & unsuppressed stabilizers \\ \hline
      \SSt & $W\bigl(\pmS \setminus \tGp; z\bigr)$ & suppressed stabilizers \\ \hline
      \uSuC & $ W\bigl(\tGp\setminus\tmcEc;z\bigr) $ &
      unsuppressed uncorrected \\ \hline
      \uSC & $ W\bigl((\tmcEc\cap\tGp)\setminus\pmS;z\bigr) $ &
      unsuppressed corrected \\ \hline
      \SuC  & $ W\bigl(\tmP_n\setminus\tGp\setminus\tmcEc;z\bigr) $ &
      suppressed uncorrected \\ \hline
      \SC  & $ W\bigl(\tmcEc\setminus\pmS\setminus\tGp;z\bigr) $ &
      suppressed corrected \\ \hline
      \uSD & $ W\bigl(\tGp\setminus\pmSL;z\bigr) $ &
      unsuppressed detected \\ \hline
      \SD  & $ W\bigl(\tmP_n\setminus\tGp\setminus\pmSL;z\bigr) $ &
      suppressed detected \\ \hline
      \uCD & $W\bigl(\tmP_n \setminus \tmcEc \setminus \pmSL;z\bigr)$ & uncorrected detected \\ \hline
      \SuSt & $W\bigl(\tmP_n \setminus\tGp \setminus \pmS; z\bigr)$ & suppressed non-stabilizers \\ \hline
      \uSuSt & $W\bigl(\tGp \setminus \pmS; z\bigr)$ & unsuppressed non-stabilizers \\ \hline
    \end{tabular}
  \end{center}
  \caption{List of all WEPs that appear in the fidelity expressions.
Each error set (middle column) is described in terms of its relation to DD and QEC (right column) and is assigned a corresponding acronym tag (left column) aligned with the set description.
We always evaluate each WEP at $z=p/(3-3p)$.}
  \label{tab:tags}
\end{table}

We proceed to reexpress the DD-phys fidelity in terms of WEPs.

\subsection{WEP expression for DD-phys}

\begin{mylemma}
  \label{lem:F2.general}
  In the DD-phys strategy, the fidelity can be written as
  \begin{equation}
    \label{eq:F2.general}
    \FDD
    =\frac{\uSSt + \pDD\SSt}{\uS+\pDD\symS},
  \end{equation}
  and for $\tG=\tmP_k$ (so that $\tGp=\{\pi(I)\}$) this reduces to \cref{eq:F2.explicit}.
\end{mylemma}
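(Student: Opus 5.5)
The plan is to compute the renormalized probability of the trivial logical class directly, using the partition of $\tmP_n$ into the unsuppressed sector $\tGp$ and the suppressed sector $\tmP_n\setminus\tGp$. For DD-phys we have $n=k$ and $\mS=\{I\}$, so $\pmS=\{\pi(I)\}$. Still, we keep $\pmS$ general in the bookkeeping, since the expression in \cref{eq:F2.general} is phrased in terms of stabilizer WEPs. The trivial logical class consists exactly of the errors in $\pmS$, since these act trivially on the codespace. For DD-phys no recovery is applied, and the "codespace" is the full $k$-qubit space.

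\textbf{Step 1: unnormalized weights.} By \cref{eq:error-model} and the WEP identity $\sum_{E'\in A}\Pr(E')=(1-p)^n W(A;z)$ with $z=p/(3-3p)$, each $E'$ has unnormalized post-DD weight $(1-p)^n z^{\wt(E')}$ if $E'\in\tGp$, and $\pDD(1-p)^n z^{\wt(E')}$ otherwise. Summing over all of $\tmP_n$ gives the normalization
\[
(1-p)^n\bigl(\uS+\pDD\symS\bigr).
\]

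\textbf{Step 2: the no-logical-error event.} This event is $E'\in\pmS$. Splitting it as $(\pmS\cap\tGp)\sqcup(\pmS\setminus\tGp)$, its total weight is
\[
(1-p)^n\bigl(\uSSt+\pDD\SSt\bigr).
\]
Dividing by the normalization, the common factor $(1-p)^n$ cancels and we obtain \cref{eq:F2.general}.

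\textbf{Step 3: specialization.} Take $\tG=\tmP_k$, so $\tGp=\{\pi(I)\}$, with $n=k$ and $\pmS=\{\pi(I)\}$. Then
\[
\uSSt=1,\qquad \SSt=0,\qquad \uS=1,\qquad \symS=(1+3z)^k-1,
\]
using \cref{eq:WPn}. Since $1+3z=1/(1-p)$, the fidelity becomes
\[
\FDD=\frac{1}{1+\pDD\bigl[(1-p)^{-k}-1\bigr]}.
\]
Multiplying numerator and denominator by $(1-p)^k$ gives \cref{eq:F2.explicit}.

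No step is a real obstacle. The only point needing care is justifying that, for DD-phys, the "no logical error" event is exactly membership in $\pmS=\{\pi(I)\}$. This holds because with the trivial code every nonidentity Pauli is a nontrivial logical operator.
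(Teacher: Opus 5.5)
Your proposal is correct and takes essentially the paper's route: you cancel the common factor $(1-p)^n$ from the sector-wise WEP weights, split the stabilizer (no-logical-error) set into $\pmS\cap\tGp$ and $\pmS\setminus\tGp$ for the numerator, and specialize to $\tGp=\pmS=\{\pi(I)\}$. Your explicit use of $1+3z=(1-p)^{-1}$ just spells out the ``simplifying'' step that the paper leaves implicit.
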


\begin{proof}
  For any $E'\in\tmP_n$ of weight $w=\wt(E')$, \cref{eq:error-model} can be rewritten as
  \begin{equation}
    \Pr(E') = (1-p)^{n} \Bigl(\frac{p}{3-3p}\Bigr)^{w}.
  \end{equation}
  Thus, evaluating all WEPs at $z=\frac{p}{3-3p}$,
  the common prefactor $(1-p)^{n}$ cancels between numerator and denominator after DD renormalization.

  In our DD model, errors in $\tGp$ are left unchanged
  (they are unsuppressed, with WEP denoted $\uS$ in \cref{tab:tags}),
  while errors in $\tmP_n \setminus \tGp$ have their probabilities rescaled by a factor $\pDD$ (suppressed, denoted $\symS$).
  Therefore, the total unnormalized probability mass after rescaling is proportional to
  \begin{equation}
  \label{eq:phys-DD-denom-again}
    W \bigl(\tGp;z\bigr) + \pDD  W \bigl(\tmP_n \setminus \tGp;z\bigr)
    = \uS+\pDD\symS,
  \end{equation}
  which gives the fidelity (normalization) denominator.

  The ``no-logical-error'' events are precisely stabilizer errors, i.e., elements of $\pmS$.
  These split into an unsuppressed subset $\pmS\cap\tGp$ (unsuppressed stabilizer, denoted $\uSSt$) and a suppressed subset $\pmS\setminus\tGp$ (suppressed stabilizer, denoted $\SSt$).
  Hence the corresponding unnormalized probability mass is proportional to
  \begin{equation}
    W\bigl(\pmS \cap \tGp; z\bigr) + \pDD  W\bigl(\pmS \setminus \tGp; z\bigr)
    = \uSSt + \pDD\SSt,
  \end{equation}
  which gives the fidelity numerator.
  Together, this yields \cref{eq:F2.general}.

  To recover \cref{eq:F2.explicit}, note that in the DD-phys case $n=k$, $\pmS=\{\pi(I)\}$, and for a $4^k$-element DD group we have $\tG=\tmP_k$ and thus $\tGp=\{\pi(I)\}$.
  Then $W(\{\pi(I)\};z)=1$, $W(\tmP_{k} \setminus \{\pi(I)\};z)=(1+3z)^{k}-1$, and $W(\pmS\setminus\tGp;z)=0$.
  Substituting into \cref{eq:F2.general} gives
  \begin{equation}
    \FDD = \frac{1}{1+\pDD\bigl[(1+3z)^{k}-1\bigr]},
  \end{equation}
  which becomes \cref{eq:F2.explicit} after setting $z=p/(3-3p)$ and simplifying.
\end{proof}

\subsection{LDD+QEC}
\label{ss:LDD+QEC}

We are now prepared to state our first main technical result.

\begin{mytheorem}[Hybrid LDD+QEC: infidelity under an effective Pauli noise model with DD suppression and decoder failure]
  \label{thm:F1.general}
  Consider an $[[n,k,d]]$ stabilizer code with stabilizer group $\mS$ and logical Pauli group $\mL$
  (so $\pmS\subseteq \pmSL\subseteq \tmP_n$).
  Fix a (logical) dynamical-decoupling group $\tG$
  and let $\tGp\subseteq \tmP_n$ denote the set of Pauli errors
  that are not suppressed by the chosen (L)DD procedure
  (with suppressed sector $\tmP_n\setminus \tGp$).

  Assume the following effective Pauli noise model.
  Writing $z \coloneqq p/(3-3p)$,
  each Pauli $E'\in\tmP_n$ of weight $\wt(E')$ contributes a factor $z^{\wt(E')}$.
  In addition, if $E'$ lies in the suppressed sector ($E'\notin \tGp$), its contribution is rescaled by a multiplicative
  suppression factor $\pDD$ (with $\pDD=1$ meaning no DD suppression).

  After the (L)DD step, we perform one round of QEC:
  we measure the stabilizer syndrome $\sigma$ and apply a recovery.
  If the measured syndrome is trivial, we apply no recovery.
  If the measured syndrome is nontrivial, then with probability $1-\pQEC$ we apply a fixed (ideal) recovery $D(\sigma)$,
  while with probability $\pQEC$ we instead apply a uniformly random Pauli recovery consistent with the measured syndrome.
  Under this decoder-failure model, conditioned on any fixed nonzero syndrome, the induced logical Pauli is uniform over the $4^k$
  logical classes, so the probability of the trivial logical class is $4^{-k}$.

  Let $\tmcEc\subseteq\tmP_n$ denote the set of Pauli errors that are corrected to the stabilizer (no logical fault) by the
  ideal decoder, i.e., those $E'$ such that $D(\syn(E'))E'\in \pmS$.

  Then the entanglement infidelity of the resulting logical channel under the hybrid LDD+QEC protocol is
  \begin{align}
    \label{eq:1-F1.general}
    1 - \FHyb = \frac{1}{\uS+\pDD\symS}\bigl(&[\uSuC + \pQEC (\uSC -4^{-k}\uSD)] \notag\\
      +\pDD&[\SuC + \pQEC(\SC-4^{-k}\SD)]\bigr)  .
  \end{align}
  All WEP tags are as in \cref{tab:tags}.
\end{mytheorem}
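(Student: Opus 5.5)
The plan is to compute the probability of a nontrivial logical fault directly, by conditioning on the physical error $E'$ and then summing with WEPs, as in the proof of \cref{lem:F2.general}. After DD renormalization, the probability of $E'$ is proportional to $z^{\wt(E')}$ if $E'\in\tGp$ and to $\pDD z^{\wt(E')}$ otherwise. The common prefactor $(1-p)^n$ cancels, and the normalization denominator is $\uS+\pDD\symS$, exactly as in \cref{eq:phys-DD-denom-again}. It therefore suffices to compute, for each fixed $E'$, the conditional probability $f(E')$ of a logical fault after the QEC step. Summing $f(E')z^{\wt(E')}$ separately over $\tGp$ and over $\tmP_n\setminus\tGp$ then gives the two brackets of \cref{eq:1-F1.general}.

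The main step is the case analysis for $f(E')$, organized by the partition of $\tmP_n$ in \cref{tab:tags}.

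\emph{Zero syndrome}, i.e., $E'\in\pmSL$. No recovery is applied, so $f=0$ for $E'\in\pmS$ and $f=1$ for $E'\in\pmSL\setminus\pmS$. This case does not involve $\pQEC$.

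\emph{Nonzero syndrome.} With probability $1-\pQEC$ the ideal recovery $D(\syn(E'))$ is applied, which fails iff $E'\notin\tmcEc$. With probability $\pQEC$ a uniformly random syndrome-consistent recovery is applied, which fails with probability $1-4^{-k}$ regardless of $E'$. Hence
\begin{equation*}
f(E') = (1-\pQEC)\,\mathbf{1}[E'\notin\tmcEc] + \pQEC(1-4^{-k})
\end{equation*}
for detected $E'$. This gives $f=1-4^{-k}\pQEC$ on the detected uncorrectable errors and $f=\pQEC(1-4^{-k})$ on the nonzero-syndrome correctable errors $\tmcEc\setminus\pmS$.

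Now restrict to the unsuppressed sector $\tGp$ and sum. The terms are:
\begin{itemize}
\item $W(\tGp\cap(\pmSL\setminus\pmS))$ from zero-syndrome logicals;
\item $(1-4^{-k}\pQEC)\,W(\tGp\setminus\tmcEc\setminus\pmSL)$ from detected uncorrectable errors;
\item $\pQEC(1-4^{-k})\,\uSC$ from correctable errors with nonzero syndrome.
\end{itemize}
Two bookkeeping identities are needed here. First, every zero-syndrome non-stabilizer is uncorrectable, because $D(0)=\pi(I)$. So the first two sets together make up $\tGp\setminus\tmcEc$, and their WEPs sum to $\uSuC$. Second, the detected unsuppressed errors split as the correctable ones (all of which have nonzero syndrome, hence contribute $\uSC$) plus the detected uncorrectable ones. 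So
\begin{equation*}
\uSD=\uSC+W(\tGp\setminus\tmcEc\setminus\pmSL).
\end{equation*}
Substituting these, the total becomes $\uSuC+\pQEC(\uSC-4^{-k}\uSD)$. The suppressed sector is identical with $\tGp$ replaced by its complement and an overall factor $\pDD$, which yields the second bracket. Dividing by the normalization gives \cref{eq:1-F1.general}.

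The step I expect to require the most care is the set-theoretic bookkeeping that merges the zero-syndrome and detected-uncorrectable contributions into the single enumerator $\uSuC$. This works only because $\tmcEc\cap\pmSL=\pmS$, a consequence of $D(0)=\pi(I)$. I will state it explicitly as a short lemma, together with the disjoint decompositions
\begin{equation*}
\tGp=(\tGp\cap\pmS)\sqcup\bigl((\tGp\cap\tmcEc)\setminus\pmS\bigr)\sqcup(\tGp\setminus\tmcEc)
\end{equation*}
and $\tGp\setminus\pmSL=\bigl((\tGp\cap\tmcEc)\setminus\pmS\bigr)\sqcup(\tGp\setminus\tmcEc\setminus\pmSL)$.

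Finally, I will note that the fidelity $F=p_I$ is the entanglement fidelity by \cref{app:ent-fid}. This is legitimate because, after recovery, the residual operator always lies in $\pi(C(\mS))$, so the output channel has the form \cref{eq:logical-Pauli-channel}.
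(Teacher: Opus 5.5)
Your proposal is correct and follows essentially the same route as the paper: the same normalization $\uS+\pDD\symS$, the same sector split, and the same accounting of zero-syndrome, detected-uncorrectable, and detected-correctable errors. The paper phrases the numerator as a three-step inclusion--exclusion (baseline $\uSuC$, add $\pQEC\uSC$, subtract $\pQEC 4^{-k}\uSD$), whereas you first compute the per-error fault probability and then regroup, which makes explicit the identity $\tmcEc\cap\pmSL=\pmS$ that the paper uses only implicitly.
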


This result includes DD-phys, QEC-only, and LDD-only as special cases.
We give independent derivations of the latter two in \cref{ss:QEC-only,ss:LDD-only};
these derivations go into more detail for each specific case than the unified proof we give here.

\begin{proof}
Just as in the DD-phys case [\cref{eq:phys-DD-denom-again}], after DD rescaling, the (unnormalized) total probability weight is proportional to
  \begin{equation}
    W(\tGp;z) + \pDD W(\tmP_n\setminus \tGp;z) = \uS+\pDD\symS,
  \end{equation}
which yields the denominator in \cref{eq:1-F1.general} after renormalization.
However, the sets involved are different from those in the DD-phys case, since now $n>k$.

For the numerator, we count (via WEPs) the total weight of physical Pauli errors $E'$ that produce a {nontrivial} residual logical Pauli after the protocol.
This gives the infidelity (rather than the fidelity) in \cref{eq:1-F1.general}.

Recall that in our model, decoder failure occurs only when the measured syndrome is nontrivial.
Conditioned on any fixed nonzero syndrome, the (failed) decoder applies a uniformly random syndrome-consistent Pauli recovery, so the induced logical Pauli is uniform over the $4^k$ logical classes.
Hence, under decoder failure, the probability of landing in the trivial logical class is $4^{-k}$.
With this in mind, we now split errors into the two sectors used throughout, namely, unsuppressed vs. suppressed by LDD.

\begin{enumerate}
  \item \textit{Unsuppressed sector} ($E'\in\tGp$; no factor of $\pDD$).

  Using the notation of \cref{tab:tags},
  $\uSuC$ = WEP of unsuppressed {uncorrectable} errors;
  $\uSC$ = WEP of unsuppressed {correctable detected} errors (nonzero syndrome);
  $\uSD$ = WEP of {all} unsuppressed {detected} errors (nonzero syndrome),
  so in particular $\uSC\subset\uSD$.
  We now do a three-step inclusion-exclusion count:

  \begin{enumerate}
    \item[(i)] {Baseline: count all uncorrectable errors as logical faults.}
    Any error counted by $\uSuC$ produces a nontrivial logical Pauli under ideal decoding (or is an undetected logical operator, for which no recovery is ever applied), so we start with a baseline contribution $\uSuC$.

    \item[(ii)] {Add: correctable detected errors matter only if the decoder fails.}
    Errors counted by $\uSC$ produce no logical fault under ideal decoding, but when the decoder fails (probability $\pQEC$) we initially count them as contributing a logical fault.
    This gives an additive term $\pQEC \uSC$.

    \item[(iii)] {Subtract: when the decoder fails, a detected error is accidentally harmless with probability $4^{-k}$.}
    Under decoder failure, for any detected error (correctable or uncorrectable), the resulting logical Pauli is uniform, so with probability $4^{-k}$ we land in the trivial logical class and should not count a logical fault.
    Since $\uSD$ counts {all} detected errors in the unsuppressed sector, we subtract the overcount $\pQEC 4^{-k}\uSD$.
  \end{enumerate}

  Summing these three contributions, the unsuppressed-sector logical-fault weight is
$\uSuC + \pQEC\bigl(\uSC-4^{-k}\uSD\bigr)$.

  \item \textit{Suppressed sector} ($E'\in\tmP_n\setminus\tGp$; overall factor of $\pDD$).

  The identical bookkeeping applies with the suppressed-sector \cref{tab:tags} quantities $\SuC,\SC,\SD$,
  and we multiply by $\pDD$ because these errors are suppressed by LDD.
  Hence this sector contributes
    $\pDD\Bigl[\SuC  +  \pQEC\bigl(\SC-4^{-k}\SD\bigr)\Bigr]$.
\end{enumerate}

Summing the two sector contributions yields the stated numerator, and dividing by the normalization $\uS+\pDD\symS$ yields \cref{eq:1-F1.general}.
\end{proof}

\subsection{Reductions}
We now reduce LDD+QEC to QEC-only, LDD-only, and DD-phys.

\subsubsection{Reduction to QEC-only}

LDD+QEC reduces to QEC-only in the limit $\pDD=1$, when DD produces no rescaling.

Because every Pauli either lies in $\tGp$ or in its complement, each relevant set splits into unsuppressed and suppressed parts.
In particular,
$\uS+\symS=\all$,
$\uSuC+\SuC=\uC$,
$\uSC+\SC=\C$, and $\uSD+\SD=\D$.
Using \cref{eq:1-F1.general}, we therefore have
\begin{subequations}
  \label{eq:1-F4.general.b}
\begin{align}
  1-\FQEC &= 1 - \FHyb \bigr|_{\pDD=1} \\
  &= \frac{1}{\uS+\symS}\Bigl([\uSuC + \pQEC (\uSC -4^{-k}\uSD)]\notag\\
  &\quad\quad +[\SuC + \pQEC(\SC-4^{-k}\SD)]\Bigr) \\
  &= \frac{1}{\all}\Bigl(\uC+\pQEC\bigl(\C -4^{-k}\D\bigr)\Bigr).
\end{align}
\end{subequations}

\subsubsection{Reduction to LDD-only}
LDD+QEC reduces to LDD-only in the limit $\pQEC=1$ (decoder always applies a uniformly random syndrome-consistent recovery when a nontrivial syndrome is detected).

We first note that in each DD sector, the non-stabilizer errors split into the disjoint sets of those decoded to the trivial logical class ($\tmcEc\setminus\pmS$) and those not ($\tmP_n\setminus\tmcEc$), which yields the following identities.

Since
\begin{subequations}
\begin{align}
&(\tGp\setminus\tmcEc) \ \dot\cup\ \bigl[(\tmcEc\cap\tGp)\setminus\pmS\bigr]
=\tGp\setminus\pmS\\
&\quad\Rightarrow\quad \uSuC+\uSC=\uSuSt,
\end{align}
\end{subequations}
and
\begin{subequations}
\begin{align}
&\bigl((\tmP_n\setminus\tGp)\setminus\tmcEc\bigr) \ \dot\cup\ \bigl[(\tmcEc\setminus\pmS)\setminus\tGp\bigr]
=(\tmP_n\setminus\tGp)\setminus \pmS\\
&\quad\Rightarrow\quad \SuC + \SC = \SuSt,
\end{align}
\end{subequations}
we have, using \cref{eq:1-F1.general}:
\begin{subequations}
\label{eq:1-F3.general}
\begin{align}
  1-\FLDD &=1 - \FHyb \bigr|_{\pQEC=1} \\
  &= \frac{1}{\uS+\pDD\symS}\Bigl([\uSuC + \uSC -4^{-k}\uSD]\notag\\
  &\quad\quad+\pDD[\SuC + \SC-4^{-k}\SD]\Bigr)\\
  &= \frac{1}{\uS+\pDD\symS}\Bigl((\uSuSt -4^{-k}\uSD)\notag\\
  &\quad\quad+\pDD(\SuSt-4^{-k}\SD)\Bigr) .
\end{align}
\end{subequations}

\subsubsection{Reduction to DD-phys}
LDD-only reduces to DD-phys when there is no encoding/decoding ($n=k$ and $\pmS=\{\pi(I)\}$), and when the DD group has order $4^k$ (so $\tG=\tmP_k$ and $\tGp=\{\pi(I)\}$).
In this case $\pmSL=\tmP_k$, so there are no detected errors ($\D=\uSD=\SD=0$), and $\tmcEc=\pmS=\{\pi(I)\}$, so there are no correctable non-stabilizer errors ($\C=\uSC=\SC=0$).
All non-identity Paulis are uncorrectable, and all are suppressed:
$\uC=\uSt=(1+3z)^n-1$,
$\SuC=\SuSt=(1+3z)^n-1$,
$\uSuC=\uSuSt=0$.
Therefore, \cref{eq:1-F3.general} reduces to
\begin{subequations}
\label{eq:F2.general-again}
\begin{align}
  1-\FLDD\bigr|_{n=k} &= \frac{1}{\uS+\pDD\symS}\bigl(\uSuSt +\pDD \SuSt\bigr) \\
  &= 1-\FDD ,
\end{align}
\end{subequations}
in agreement with \cref{eq:F2.general}.

\subsection{Comparison of LDD+QEC and QEC-only}
We next show that in order to decide which of $\FHyb$ and $\FQEC$ is larger when $\pQEC=0$ (ideal recovery), it suffices to compare two sector-wise fractions, and the conclusion is independent of the value of $\pDD\in[0,1)$.

\begin{mytheorem}
  \label{thm:Hyb.vs.QEC}
  Assume $\pQEC = 0$ and $\pDD \in [0, 1)$.
Then, for any fixed $z>0$,
  \begin{equation}
    \label{eq:F1.vs.F4}
    \FHyb > \FQEC \iff \frac{\SuC}{\symS} > \frac{\uSuC}{\uS}.
  \end{equation}
\end{mytheorem}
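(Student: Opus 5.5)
We specialize \cref{thm:F1.general} and its reduction \cref{eq:1-F4.general.b} to $\pQEC=0$. This gives
\begin{equation*}
  1-\FHyb=\frac{\uSuC+\pDD\SuC}{\uS+\pDD\symS},\qquad
  1-\FQEC=\frac{\uSuC+\SuC}{\uS+\symS}.
\end{equation*}
The second expression is just the first evaluated at $\pDD=1$, using $\uS+\symS=\all$ and $\uSuC+\SuC=\uC$. So it is natural to study the single function
\begin{equation*}
  f(t)\coloneqq\frac{a+tb}{c+td},\qquad a=\uSuC,\ b=\SuC,\ c=\uS,\ d=\symS,
\end{equation*}
on $t\in[0,1]$. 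The claim $\FHyb>\FQEC$ is then equivalent to $f(\pDD)<f(1)$.

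First I will check positivity of the denominator. For $z>0$ we have $c\ge 1$, since $\pi(I)\in\tGp$. We also have $d>0$: because $\tG$ is nontrivial, $\tGp\neq\tmP_n$ (some Pauli anticommutes with a nontrivial element), so the suppressed set is nonempty. Hence $c+td>0$ on $[0,1]$, and all WEPs are finite nonnegative numbers because $z$ is fixed.

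Second, cross-multiplying (both denominators positive), $f(t)<f(1)$ becomes
\begin{equation*}
  (a+tb)(c+d)<(a+b)(c+td).
\end{equation*}
Expanding, the $ac$ and $tbd$ terms cancel, leaving $ad+tbc<bc+tad$, i.e.
\begin{equation*}
  (1-t)(bc-ad)>0.
\end{equation*}
Since $t=\pDD<1$, this holds iff $bc>ad$, i.e. $\SuC\cdot\uS>\uSuC\cdot\symS$. Dividing by $\uS\,\symS>0$ gives $\SuC/\symS>\uSuC/\uS$, which is the claim. Equivalently, $f$ is a Möbius function of $t$, hence monotone, with derivative sign equal to that of $bc-ad$; this explains why the conclusion does not depend on the particular value of $\pDD\in[0,1)$.

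There is no real obstacle; the only point needing care is strict positivity of $\uS$ and $\symS$ so the divisions and the cross-multiplication are legitimate. This relies on $z>0$ and on $\tG$ being nontrivial. At $\pDD=0$ the denominator is $\uS\ge1$, which is still fine.
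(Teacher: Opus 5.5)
Your proof is correct and is essentially the paper's argument: the paper likewise specializes \cref{thm:F1.general} to $\pQEC=0$, treats QEC-only as the $\pDD=1$ evaluation of the same linear-fractional function $f(t)=(A+tB)/(C+tD)$, and cross-multiplies to obtain the factor $(1-t)(AD-BC)$, packaged as \cref{lem:fraction-comparison}. Your explicit check that $\uS\ge 1$ and $\symS>0$ (from $\pi(I)\in\tGp$, $z>0$, and nontriviality of $\tG$) verifies the hypothesis $C,D>0$ that the paper's lemma requires but that the paper does not check where the lemma is applied.
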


We prove this using the following lemma:
\begin{mylemma}
  \label{lem:fraction-comparison}
  Let $A,B,C,D$ be real numbers with $C,D>0$, and define $f(t)=\frac{A+tB}{C+tD}$ for $t\in[0,1]$.
Then, for any $t\in[0,1)$,
  \begin{equation}
    f(t) > f(1) \iff \frac{A}{C} > \frac{B}{D}.
  \end{equation}
\end{mylemma}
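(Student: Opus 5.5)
The statement to prove is \cref{lem:fraction-comparison}, the elementary fraction comparison. The plan is to clear denominators directly. Since $C,D>0$ and $t\in[0,1]$, both $C+tD$ and $C+D$ are strictly positive, so $f(t)>f(1)$ is equivalent to
\begin{equation}
  (A+tB)(C+D) > (A+B)(C+tD).
\end{equation}

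Expanding both sides, the terms $AC$ and $tBD$ cancel. What remains is
\begin{equation}
  AD + tBC > tAD + BC ,
\end{equation}
which rearranges to
\begin{equation}
  (1-t)(AD-BC) > 0 .
\end{equation}
Because $t<1$ by hypothesis, the factor $1-t$ is strictly positive and may be divided out. The inequality then becomes $AD>BC$, and dividing by $CD>0$ gives $A/C>B/D$. Every step is a reversible equivalence, since we only ever multiply or divide by positive quantities, so the biconditional follows.

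There is no real obstacle here. The only care needed is bookkeeping the positivity of each multiplier so that the chain of equivalences is valid. Note also that the hypothesis $t<1$ is essential: at $t=1$ both sides are equal.

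For the subsequent use in \cref{thm:Hyb.vs.QEC}, we will take $t=\pDD$, $A=\uSuC$, $B=\SuC$, $C=\uS$, $D=\symS$. With $\pQEC=0$, \cref{eq:1-F1.general} says that $1-\FHyb$ equals $f(\pDD)$ and $1-\FQEC$ equals $f(1)$. Hence $\FHyb>\FQEC$ is equivalent to $f(\pDD)<f(1)$. By the same computation with the inequality reversed, this is equivalent to $A/C<B/D$, i.e., $\SuC/\symS>\uSuC/\uS$.

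The positivity of $C=\uS$ and $D=\symS$ at $z>0$ is immediate. The set $\tGp$ contains $\pi(I)$, so $\uS\ge 1$. Since $\tG$ is nontrivial, the suppressed sector $\tmP_n\setminus\tGp$ is nonempty, so $\symS>0$.
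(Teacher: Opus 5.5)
Your proof is correct and is the same as the paper's: clear the positive denominators, cancel $AC$ and $tBD$ to get $(1-t)(AD-BC)>0$, then divide by $1-t>0$ and by $CD>0$. Your extra remarks are also correct: the reversed strict inequality used in \cref{thm:Hyb.vs.QEC} follows from the same computation, and $\uS$ and $\symS$ are positive for $z>0$.
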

\begin{proof}
  Expanding $(A+tB)(C+D) > (A+B)(C+tD)$ and canceling the common terms, we see that $f(t) > f(1)$ is equivalent to $(1-t)AD > (1-t)BC$, which is equivalent to $A/C > B/D$ (since $1-t>0$).
\end{proof}

\begin{proof}[Proof of \cref{thm:Hyb.vs.QEC}]
  With $\pQEC=0$, \cref{eq:1-F1.general} gives
  \begin{equation}
    1-\FHyb=\frac{\uSuC+\pDD\SuC}{\uS+\pDD\symS}.
  \end{equation}
  QEC-only corresponds to $\pDD=1$, i.e. $1-\FQEC=\frac{\uSuC+\SuC}{\uS+\symS}$.
  Applying \cref{lem:fraction-comparison} to $f(t)=\frac{\uSuC+t\SuC}{\uS+t\symS}$ yields
  \begin{equation}
    1-\FHyb < 1-\FQEC\ \iff\ \frac{\uSuC}{\uS}<\frac{\SuC}{\symS},
  \end{equation}
  which is equivalent to \cref{eq:F1.vs.F4}.
\end{proof}

We now come to our central QEC-LDD result: a sufficient condition for when the hybrid protocol outperforms QEC alone:

\begin{mycorollary}
  \label{cor:Hyb.vs.QEC}
  Fix $z>0$ and assume
  \begin{equation}
    \label{eq:cor-condition}
    \frac{\SuC}{\symS} > \frac{\uSuC}{\uS}.
  \end{equation}
  Then $\FHyb > \FQEC$ for any $\pDD \in [0,1)$ and all sufficiently small $\pQEC$
  (with the neighborhood size depending on $z$ and $\pDD$).
\end{mycorollary}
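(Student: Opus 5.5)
The plan is a continuity argument built on \cref{thm:Hyb.vs.QEC}. Fix $z>0$ and $\pDD\in[0,1)$. We regard both infidelities from \cref{eq:1-F1.general} and \cref{eq:1-F4.general.b} as functions of $\pQEC\in[0,1]$:
\begin{align*}
  1-\FHyb(\pQEC) &= \frac{N_0 + \pQEC N_1}{\uS+\pDD\symS},\\
  1-\FQEC(\pQEC) &= \frac{\uC + \pQEC(\C-4^{-k}\D)}{\all}.
\end{align*}
Here $N_0=\uSuC+\pDD\SuC$ and $N_1=\uSC-4^{-k}\uSD+\pDD(\SC-4^{-k}\SD)$. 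All WEP tags are fixed nonnegative numbers once $z$ is fixed.

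First we check that the denominators are strictly positive. Since $\pi(I)\in\tGp$, we have $\uS\ge 1>0$, so $\uS+\pDD\symS\ge 1$, and likewise $\all=(1+3z)^n>0$. Consequently both infidelities are affine in $\pQEC$, with finite coefficients, and in particular continuous. Their difference
\[
\Delta(\pQEC)\coloneqq \FHyb(\pQEC)-\FQEC(\pQEC)
\]
is therefore an affine, and hence continuous, function of $\pQEC$.

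Next we evaluate at $\pQEC=0$. The hypothesis \cref{eq:cor-condition} is exactly the right-hand side of \cref{eq:F1.vs.F4}. Since $\pDD\in[0,1)$, \cref{thm:Hyb.vs.QEC} gives $\Delta(0)>0$. By continuity there is $\delta>0$ such that $\Delta(\pQEC)>0$ for all $\pQEC\in[0,\delta)$. To make this explicit, write $\Delta(\pQEC)=\Delta(0)+\pQEC\,c$, where
\[
c=\frac{\C-4^{-k}\D}{\all}-\frac{N_1}{\uS+\pDD\symS}.
\]
Then $\delta=\Delta(0)/\max(c^-,\epsilon)$ works, with $c^-$ the negative part of $c$ (any $\delta\le1$ if $c\ge0$). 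This makes clear that $\delta$ depends on $z$ and $\pDD$, as the statement says.

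No step is a real obstacle. The one point needing care is to confirm that the same comparison is being made at each $\pQEC$, namely that QEC-only uses the same $\pQEC$ as the hybrid. This holds because \cref{eq:1-F4.general.b} is \cref{eq:1-F1.general} at $\pDD=1$ with $\pQEC$ left free. The other point is the nonvanishing of the denominators, which is what keeps continuity valid uniformly on $[0,1]$.
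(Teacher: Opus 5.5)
Your proposal is correct and follows the same route as the paper: you apply \cref{thm:Hyb.vs.QEC} at $\pQEC=0$ to get a strict gap, then use continuity of both fidelities in $\pQEC$ (with $z$ and $\pDD$ fixed) to extend that gap to a neighborhood of $\pQEC=0$. Your added observation that the denominators $\uS+\pDD\symS\ge 1$ and $\all$ do not depend on $\pQEC$, so that both infidelities are affine in $\pQEC$, sharpens the paper's ``rational function'' continuity step slightly and gives the explicit threshold $\delta=\Delta(0)/c^-$ when $c<0$; the stray $\epsilon$ is unnecessary, and if $c\ge 0$ every $\pQEC\in[0,1]$ works.
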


\begin{proof}
  Fix $z>0$ and $\pDD\in[0,1)$.
  Under the assumption \cref{eq:cor-condition}, \cref{thm:Hyb.vs.QEC} implies the strict inequality
  \begin{equation}
    \FHyb\big|_{\pQEC=0}  >  \FQEC\big|_{\pQEC=0}.
  \end{equation}
  Now view $\FHyb$ and $\FQEC$ as functions of the decoder-failure parameter $\pQEC$ (with $z,\pDD$ held fixed).
  By the closed-form expressions for $\FHyb$ and $\FQEC$, each is a rational function of $\pQEC$ with denominator strictly positive at $\pQEC=0$;
  in particular, both are continuous at $\pQEC=0$.
  Therefore, the difference
  \begin{equation}
    \Delta(\pQEC)\coloneqq \FHyb(\pQEC)-\FQEC(\pQEC)
  \end{equation}
  is continuous at $\pQEC=0$ and satisfies $\Delta(0)>0$, so there exists $\varepsilon>0$ (depending on $z$ and $\pDD$) such that
  $\Delta(\pQEC)>0$ for all $\pQEC\in[0,\varepsilon)$.
\end{proof}

To interpret \cref{eq:cor-condition}, recall that $\uS$ and $\symS$ are the total WEP weights of the unsuppressed and suppressed sectors, respectively, while
$\uSuC$ and $\SuC$ are the WEP weights of the uncorrectable subsets within the respective sectors under ideal recovery.
Thus, the ratio $\uSuC/\uS$ is the conditional probability (under the WEP weighting) that an error drawn from the
unsuppressed sector is uncorrectable, and $\SuC/\symS$ is the analogous conditional probability in the suppressed sector.
The inequality \cref{eq:cor-condition} is therefore the assumption that the suppressed sector is, in this sense, more harmful:
it contains a higher fraction of errors that would cause a logical fault even under ideal decoding.

Since LDD multiplies the suppressed-sector contribution by the suppression factor $\pDD<1$ while leaving the unsuppressed sector unchanged,
it preferentially rescales (down-weights) the sector with the higher uncorrectable-error density.
When decoder failures are sufficiently rare ($\pQEC$ small), the logical infidelity is dominated by these ideal-decoding considerations,
and this bias in what becomes suppressed yields the strict advantage $\FHyb>\FQEC$.

\subsection{The limit of small \texorpdfstring{$p$}{p}}
\label{ss:QEC-asymptotics}
The general fidelity expressions we derived above are amenable to analysis in the limit of small error probability $p$.
In this section we perform this analysis and establish the resulting hierarchy of fidelities of the different protocols.

\subsubsection{QEC-only}
\label{ss:QEC-only-asymptotics}

\begin{mylemma}
  \label{lem:F4.asymptotics}
  Let $\alpha \geq 1$ be the minimal weight of an uncorrectable error (with respect to the fixed decoding map $D$), and let $a$ be the number of such errors of weight $\alpha$.
Then asymptotically, as $p \to 0$, \cref{eq:1-F4.general.b} can be written as
  \begin{equation}
    \label{eq:F4.asymptotics}
    1 - \FQEC = a (p/3)^{\alpha} + O(p^{\alpha + 1}) + \pQEC \bigl(b (p/3) + O(p^2)\bigr).
  \end{equation}
  When $\alpha \geq 2$ and $k\ge 1$, the coefficient $b$ satisfies
  \begin{subequations}
  \begin{align}
    \label{eq:b-ineq1}
    b &= (1 - 4^{-k}) \bigl[3n - W_1\bigl(\pmS\bigr)\bigr] \\
    \label{eq:b-ineq2}
    &\geq 3k + 15 / 4 .
  \end{align}
  \end{subequations}
\end{mylemma}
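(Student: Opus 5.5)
The plan is to expand each WEP in \cref{eq:1-F4.general.b} around $z=0$ and then bound the coefficient $b$ with a structural argument about weight-1 stabilizers.

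\emph{Expansion of the general formula.} Start from \cref{eq:1-F4.general.b}, $1-\FQEC=\bigl(\uC+\pQEC(\C-4^{-k}\D)\bigr)/\all$, evaluated at $z=p/(3-3p)$. Two facts handle the prefactors: $\all=(1+3z)^n=(1-p)^{-n}=1+O(p)$, and $z=p/3+O(p^2)$. By definition of $\alpha$ and $a$, $\uC=\sum_{w\ge\alpha}W_w(\tmP_n\setminus\tmcEc)z^w=a z^\alpha+O(z^{\alpha+1})$. Multiplying by $1/\all=1+O(p)$ therefore gives $a(p/3)^\alpha+O(p^{\alpha+1})$. For the $\pQEC$ term, neither $\C$ nor $\D$ contains the identity, since $\pi(I)\in\pmS\subseteq\pmSL$. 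Hence both have no $z^0$ term, and $\C-4^{-k}\D=b'z+O(z^2)$ with $b'=W_1(\tmcEc\setminus\pmS)-4^{-k}W_1(\tmP_n\setminus\pmSL)$. Multiplying by $1/\all$ yields $\pQEC\bigl(b(p/3)+O(p^2)\bigr)$ with $b=b'$. This proves \cref{eq:F4.asymptotics} for any $\alpha\ge1$.

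\emph{Identifying $b$ when $\alpha\ge2$.} Now every weight-1 error is correctable. Take a weight-1 error with zero syndrome, i.e.\ one in $\pmSL$. Being correctable, it satisfies $D(0)E'=E'\in\pmS$, so it is a stabilizer. Consequently the weight-1 non-stabilizers, $3n-W_1(\pmS)$ of them, all lie in $\tmcEc\setminus\pmS$ and all lie outside $\pmSL$. This gives $W_1(\tmcEc\setminus\pmS)=W_1(\tmP_n\setminus\pmSL)=3n-W_1(\pmS)$, which is \cref{eq:b-ineq1}.

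\emph{The bound \cref{eq:b-ineq2}, which I expect to be the main obstacle.} The task is to upper-bound $m\coloneqq W_1(\pmS)$. The plan has three steps.
\begin{enumerate}
\item Each qubit carries at most one weight-1 stabilizer, because distinct nonidentity single-qubit Paulis on the same qubit anticommute while $\mS$ is abelian. The $m$ weight-1 stabilizers therefore sit on distinct qubits and are independent.
\item On those $m$ qubits the codespace is a fixed product state. Removing them (after a single-qubit Clifford frame change, which preserves weights) leaves an $[[n-m,k]]$ stabilizer code.
\item In the reduced code, a weight-1 error with zero syndrome that is not a stabilizer would lift to a weight-1 uncorrectable error of the original code, contradicting $\alpha\ge2$. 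So the reduced code has distance at least $2$, and the quantum Singleton bound $n'-k\ge2(d-1)$ gives $n-m\ge k+2$.
\end{enumerate}
Combining, $3n-m\ge 2n+k+2\ge 3k+6$. Then
\[
b\ge(1-4^{-k})(3k+6)=3k+6-(3k+6)4^{-k}.
\]
The factor $(3k+6)4^{-k}$ is decreasing in $k$ and equals $9/4$ at $k=1$. Hence $b\ge 3k+6-9/4=3k+15/4$, with equality at $k=1$, $n=3$.

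The delicate points are the reduction step (checking that the stabilizer group really factorizes over the removed qubits, so the remainder is a legitimate $[[n-m,k]]$ code) and the applicability of Singleton at distance $2$. If needed, the latter can be replaced by the elementary argument that an $[[n',k]]$ code with $n'=k+1$ has a single stabilizer generator, and such a code always admits a weight-1 nontrivial logical operator.
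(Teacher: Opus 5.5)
Your proof is correct. The expansion and the identification of $b$ match the paper's argument. The difference lies in how you bound $W_1(\pmS)$ for \cref{eq:b-ineq2}.

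The paper does not reduce the code. It uses that the weight-1 stabilizers are independent, so $W_1(\pmS)\le n-k$. Separately, it notes that correcting all weight-1 errors forces $d\ge 3$, so the quantum Singleton bound on the original code gives $n\ge k+4$. Together these give $3n-W_1(\pmS)\ge 2n+k\ge 3k+8$, which is stronger than your $3k+6$ and yields $b\ge 3k+21/4$ at $k=1$ before being relaxed to $15/4$.

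Your route instead strips the qubits carrying weight-1 stabilizers and needs only the distance-2 Singleton bound for the reduced code, which can be checked elementarily. The factorization $\mS=\langle \pm P_j\rangle_{j\in J}\times\mS'$ that you flag as delicate does hold: commutation with $\pm P_j$ forces every stabilizer to restrict to $I$ or $P_j$ on qubit $j$, and $-I\notin\mS$ gives $|\mS'|=2^{n-m-k}$. This yields the sharper estimate $W_1(\pmS)\le n-k-2$, but you then use only $n\ge k+2$. Your argument mirrors the reduction the paper uses later for the LDD-only coefficient (\cref{lem:F3.asymptotics}). The paper's argument here is shorter but relies on the implication $\alpha\ge 2\Rightarrow d\ge 3$, which it states without proof (it is standard, as you essentially reproduce in your Step 2).

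One correction: your remark that equality holds at $k=1$, $n=3$ is vacuous. As just noted, $\alpha\ge 2$ forces $d\ge 3$ and hence $n\ge k+4$, so no such code exists and the constant $15/4$ is never attained. Applying your own reduction with distance $3$ instead of $2$ would in fact give $3n-W_1(\pmS)\ge 3k+12$.
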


\begin{proof}
  For small $p$, we have $z=\frac{p}{3-3p}= \frac{p}{3}\bigl(1+O(p)\bigr)$.
  Write the WEPs from \cref{eq:1-F4.general.b} as in \cref{eq:W}.
  Let
  \begin{equation}
    \alpha = \min\bigl\{w\ge 0  \big|  W_{w}\bigl(\tmP_{n}\setminus\tmcEc\bigr)\ne0\bigr\}\ge 1
  \end{equation}
  be the smallest weight of an uncorrectable error, and let $a\coloneqq W_{\alpha}(\tmP_{n}\setminus\tmcEc)$ be the number of such weight-$\alpha$ Paulis.
  Keeping only the leading-order term,
  \begin{equation}
  \label{eq:WP-E}
  \uC = W(\tmP_{n}\setminus\tmcEc;z)
  = az^{\alpha}+O(z^{\alpha+1}) = a(p/3)^{\alpha}+O(p^{\alpha+1}).
  \end{equation}

  Next, note that $\all=W(\tmP_n;z)=(1+3z)^n$, so
  \begin{equation}
    \frac{1}{\all}=(1+3z)^{-n}=(1-p)^n=1+O(p).
  \end{equation}
  Hence the prefactor $1/\all$ in \cref{eq:1-F4.general.b} does not change the leading $p^{\alpha}$ scaling.

  Since $\C - 4^{-k} \D$ is a polynomial in $z$ with no constant term, we can write
  \begin{equation}
    \label{eq:C-4kD-is-bz}
    \C - 4^{-k} \D = b z + O(z^2)
  \end{equation}
  for some coefficient $b$.
  Substituting these observations into \cref{eq:1-F4.general.b} yields \cref{eq:F4.asymptotics}.

  Now assume $\alpha \geq 2$ and $k\ge 1$.
  Then all weight-$1$ Paulis are corrected to the trivial logical class,
  and the set $\tmcEc\setminus\pmS$ contains all weight-$1$ Paulis that are not stabilizers.
  There are $3n$ weight-$1$ Paulis in $\tmP_n$, so
  \begin{equation}
  \label{eq:WE-S}
  W_1(\tmcEc\setminus\pmS) = 3n - W_1(\pmS).
  \end{equation}
  Also, $\alpha\ge 2$ implies there are no weight-$1$ logical Paulis, so the weight-$1$ part of $\pmSL$ equals that of $\pmS$:
  \begin{equation}
  \label{eq:W1-SL}
  W_1(\pmSL) = W_1(\pmS) ,
  \end{equation}
  and therefore
  \begin{equation}
  \label{eq:WP-SL}
  W_1(\tmP_n\setminus\pmSL) = 3n - W_1(\pmS).
  \end{equation}

  Using \cref{eq:WE-S,eq:WP-SL}, we have
  \begin{subequations}
  \label{eq:WP-E2}
  \begin{align}
    \C &= W\bigl(\tmcEc\setminus\pmS;z\bigr)
    = \bigl(3n - W_1(\pmS)\bigr)z + O(z^{2}), \\
    \D &= W\bigl(\tmP_n\setminus\pmSL;z\bigr)
    = \bigl(3n - W_1(\pmS)\bigr)z + O(z^{2}) .
  \end{align}
  \end{subequations}
  Combining with \cref{eq:C-4kD-is-bz} gives \cref{eq:b-ineq1}.

  To show \cref{eq:b-ineq2}, first note that each weight-$1$ stabilizer acts on a distinct physical qubit and hence contributes an independent stabilizer constraint, so
  \begin{equation}
    W_1(\pmS) \le n-k.
  \end{equation}
  Therefore $3n-W_1(\pmS) \ge 3n-(n-k)=2n+k$.
  Since $\alpha\ge 2$ implies the code can correct all weight-$1$ errors, its distance satisfies $d\ge 3$, and by the quantum Singleton bound $n-k\ge 2(d-1)\ge 4$, i.e., $n\ge k+4$.
  Hence $2n+k \ge 2(k+4)+k = 3k+8$, so
  \begin{subequations}
  \begin{align}
    b&=(1-4^{-k})\bigl[3n-W_1(\pmS)\bigr]\\&\ge (1-4^{-k})(3k+8)
    \ge 3k+\frac{15}{4},
  \end{align}
  \end{subequations}
  where the last inequality holds because $4^{-k}(3k+8)\le 11/4$ for all integers $k\ge 1$.
\end{proof}

We can now compare the DD-phys and QEC-only strategies using
\cref{eq:F2.explicit-2,eq:F4.asymptotics}.
When both $\pDD$ and $\pQEC$ have fixed values in $(0, 1)$, whether $\FQEC$ is larger or smaller than $\FDD$, even in the limit $p \to 0$, depends on the code and on the values of $\pDD$ and $\pQEC$ (through $k$ and the coefficient $b$).
For example, for $\pDD = \pQEC$, we have $\FQEC < \FDD$ for sufficiently small $p$ for any nontrivial code when $\alpha \geq 2$ [this follows from $b > 3k$ by \cref{eq:b-ineq2}].
One can also check that the order-$p$ coefficient of $\uC + \C - 4^{-k} \D$ exceeds $3k$ even when $\alpha = 1$ by a computation similar to the one at the end of the proof of \cref{lem:F4.asymptotics}.

On the other hand, if we assume that (1) $\pQEC = o(1)$ as $p \to 0$, while $\pDD > 0$ is fixed; and (2) the decoder corrects all weight-$1$ errors, i.e., $\alpha \geq 2$, then
\begin{equation}
\label{eq:fqec>fdd}
\FQEC > \FDD \quad \text{for sufficiently small $p$},
\end{equation}
because $1-\FQEC = o(p)$, while $1-\FDD = k  p  \pDD + O(p^2)$.

\subsubsection{LDD-only}

\begin{mylemma}
  \label{lem:F3.asymptotics}
  Asymptotically, as $p \to 0$, \cref{eq:1-F3.general} can be written as
  \begin{equation}
  \label{eq:1-F3.asymptotics}
  1-\FLDD = \bigl(a (1 - \pDD) + b \pDD\bigr) \frac{p}{3} + O(p^2),
  \end{equation}
  where
  \begin{subequations}
  \label{eq:1-F3.asymptotics-ab-v1}
  \begin{align}
    a &= W_1(\tGp \setminus \pmS) - 4^{-k} W_1(\tGp \setminus \pmSL)
    \geq 0, \\
    b &= \bigl(3n-W_1(\pmS)\bigr) - 4^{-k} \bigl(3n - W_1(\pmSL)\bigr).
  \end{align}
  \end{subequations}
  For the trivial code ($n=k$, $\pmS=\{\pi(I)\}$, $\pmSL=\tmP_k$), one has $b = 3k$.
  For any nontrivial code ($n>k$ and $k\ge 1$), one has
  \begin{equation}
  \label{eq:1-F3.asymptotics-b-ineq}
  b \ge 3k + \frac{3}{2}.
  \end{equation}
  If the code distance satisfies $d\ge 2$ (equivalently, there are no weight-$1$ logical Paulis), then
  \begin{subequations}
  \label{eq:1-F3.asymptotics-ab-v2}
  \begin{align}
    a &= (1-4^{-k})  W_1(\tGp \setminus \pmS) \geq 0, \\
    b &= (1-4^{-k}) \bigl(3n - W_1(\pmS)\bigr).
  \end{align}
  \end{subequations}
\end{mylemma}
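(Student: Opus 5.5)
We start from the closed form \cref{eq:1-F3.general},
\begin{equation*}
1-\FLDD=\frac{(\uSuSt-4^{-k}\uSD)+\pDD(\SuSt-4^{-k}\SD)}{\uS+\pDD\symS},
\end{equation*}
and expand each WEP in $z=\frac{p}{3}(1+O(p))$.

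\emph{Denominator.} Since $\pi(I)\in\tGp$, we have $\uS=1+O(z)$ and $\symS=O(z)$. Hence the denominator is $1+O(p)$, uniformly in $\pDD\in[0,1]$.

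\emph{Numerator.} Each set in the numerator excludes $\pmS\ni\pi(I)$, so every polynomial there has no constant term. To leading order the numerator is therefore
\begin{equation*}
\bigl[W_1(\tGp\setminus\pmS)-4^{-k}W_1(\tGp\setminus\pmSL)\bigr]
+\pDD\bigl[W_1(\tmP_n\setminus\tGp\setminus\pmS)-4^{-k}W_1(\tmP_n\setminus\tGp\setminus\pmSL)\bigr],
\end{equation*}
all multiplied by $z$.

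\emph{Regrouping.} We rewrite the suppressed bracket as (all weight-1 counts) minus (unsuppressed weight-1 counts). There are $3n$ weight-1 Paulis in total, so the suppressed bracket equals $b-a$, with $a$ and $b$ as in \cref{eq:1-F3.asymptotics-ab-v1}. The coefficient becomes $a+\pDD(b-a)=a(1-\pDD)+b\pDD$, which gives \cref{eq:1-F3.asymptotics}.

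\emph{Sign of $a$.} Since $\pmS\subseteq\pmSL$, we have $\tGp\setminus\pmSL\subseteq\tGp\setminus\pmS$. Therefore $W_1(\tGp\setminus\pmSL)\le W_1(\tGp\setminus\pmS)$, and because $4^{-k}<1$, $a\ge 0$. The case $d\ge 2$ is immediate: there are no weight-1 elements of $\pmSL\setminus\pmS$, so $W_1(\pmSL)=W_1(\pmS)$, and both expressions factor as $(1-4^{-k})(\cdot)$, giving \cref{eq:1-F3.asymptotics-ab-v2}.

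\emph{Trivial code.} Here $\pmS=\{\pi(I)\}$ and $\pmSL=\tmP_k$, so $W_1(\pmS)=0$ and $W_1(\pmSL)=3k$. This gives $b=3k-0=3k$.

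\emph{Nontrivial codes (main obstacle).} We need $b\ge 3k+\tfrac32$, where
\begin{equation*}
b=(1-4^{-k})(3n-s)+4^{-k}\ell,\qquad s=W_1(\pmS),\quad \ell=W_1(\pmSL\setminus\pmS).
\end{equation*}
The argument proceeds in four steps.

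\begin{enumerate}
\item \emph{Qubit classification.} On each qubit $j$, the weight-1 Paulis lying in $\pmSL$ form, together with the identity, a subgroup of $\tmP_1$ of order 1, 2 or 4. This makes sense because $\pmSL$ is a group and products of single-qubit Paulis on the same qubit stay on that qubit.
\item \emph{Stabilized qubits.} A qubit carrying a weight-1 stabilizer contributes at most 1 to $s$, since two anticommuting stabilizers are impossible. The same qubit contributes at most 2 to $\ell$.
\item \emph{Counting constraint on $s$.} Weight-1 stabilizers on distinct qubits are independent generators, so $s\le n-k$. As in \cref{lem:F4.asymptotics}, such qubits are effectively removed from the code.
\item \emph{Per-qubit bound.} Write $n=k+r$ with $r\ge1$. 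Each qubit contributes at least $(1-4^{-k})\cdot 2$ to $b$, because on every qubit $s_j\le1$ and the counting of $3-s_j$ terms yields at least 2. Some qubits contribute more.
\end{enumerate}

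The cleanest route is a direct lower bound. Using $\ell\ge0$ and $s\le n-k$,
\begin{equation*}
b\ge(1-4^{-k})(2n+k)\ge(1-4^{-k})(3k+2).
\end{equation*}
For $k\ge1$ this exceeds $3k+\tfrac32$ exactly when $4^{-k}(3k+2)\le\tfrac12$. That holds for $k\ge2$, but fails at $k=1$, where $(3/4)\cdot5=3.75<4.5$.

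So $k=1$ needs refinement using the logical terms $\ell$. A qubit with a weight-1 stabilizer makes the code a product of that qubit with a smaller code. This inflates $s$ but also forces $n$ larger, and I would first try induction on $n$, stripping such qubits. In the base case $s=0$, $b\ge(3/4)\cdot 3n\ge 4.5$ when $n\ge2$, and that threshold is exactly $3k+\tfrac32$ at $k=1$. When $s\ge1$ with $k=1$ and $n$ small (e.g.\ $n=2$, $s=1$), the other qubit must carry weight-1 logicals, so $\ell\ge 3$. This gives $b\ge(3/4)(5)+(1/4)(3)=4.5$.

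Verifying this case analysis in general, that $\ell$ compensates whenever $s$ is large, is the step I expect to be hardest.
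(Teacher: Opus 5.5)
Your expansion, the regrouping of the suppressed bracket into $b-a$, the argument that $a\ge 0$ (which the paper leaves implicit), the $d\ge 2$ simplification, and the trivial-code value are all correct and agree with the paper.

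\textbf{The gap.} The gap is in the nontrivial-code inequality $b\ge 3k+\tfrac32$. You establish it for $k\ge 2$, for $k=1$ with $s=0$, and for $(k,n,s)=(1,2,1)$. You then leave $k=1$ with $n\ge 3$ and $s\ge 1$ to an induction on $n$ that you never carry out, on the grounds that one must show $\ell$ compensates whenever $s$ is large. As written, this is not a proof.

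\textbf{How to close it.} The missing step is already in your own bound: you specialized $(1-4^{-k})(2n+k)$ to $n=k+1$ too early. For $k=1$ the bound reads $b\ge \tfrac34(2n+1)$. This gives $b\ge\tfrac{21}{4}>\tfrac92$ for every $n\ge 3$, without using $\ell$ at all. A large $s$ forces a large $n$ through $s\le n-k$, and the $2n$ term dominates. So the only exceptional pair is $(k,n)=(1,2)$, where $s\le 1$:
\begin{itemize}
\item $s=0$ gives $b\ge\tfrac34\cdot 6=\tfrac92$.
\item $s=1$ forces $\mS=\{I,P\}$ with $P$ of weight one. The three single-qubit Paulis on the other qubit are then non-stabilizer logicals, so $\ell=3$ and $b=\tfrac{15}{4}+\tfrac34=\tfrac92$.
\end{itemize}
A minor point: at $k=2$ your bound is attained with equality, so ``exceeds'' should read ``is at least.''

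\textbf{Comparison with the paper.} Completed this way, your argument is a genuinely different and somewhat more elementary route. The paper splits on whether $l_1=W_1(\pmS)$ vanishes. When $l_1>0$, it deletes the $l_1$ qubits carrying weight-one stabilizers. Each deletion lowers both $3n-W_1(\pmS)$ and $3n-W_1(\pmSL)$ by exactly $2$, because on such a qubit the only weight-one element of $\pmSL$ is the stabilizer itself. This gives the identity $b=b'+2(1-4^{-k})l_1$. The reduced code is then either trivial, with $b'=3k$, or has no weight-one stabilizers, with $b'\ge(1-4^{-k})\,3(k+1)\ge 3k+\tfrac32$. That reduction is what your sketched induction would have become, and it gives an exact accounting of each stabilized qubit. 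Your route needs only the counting bound $s\le n-k$ plus one hand-checked small case.
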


\begin{proof}
  We write the WEPs from \cref{eq:1-F3.general} as series in $z$ [as in \cref{eq:W}]
  and keep only the $z^1$ order in the numerator and the $z^0$ order in the denominator of \cref{eq:1-F3.general}.
  After substituting
  $z=\frac{p}{3-3p}=\frac{p}{3}+O(p^{2})$, only these terms are needed to evaluate $1-\FLDD$ up to $O(p^2)$.

  The denominator terms are
  \begin{subequations}
  \label{eq:S-uS-expand}
  \begin{align}
    \uS &= W(\tGp;z) = 1+W_1(\tGp)z+O(z^{2}),\\
    \symS &= W\bigl(\tmP_n\setminus\tGp;z\bigr)
    = \bigl[3n - W_1\bigl(\tGp\bigr)\bigr] z + O(z^{2}) ,
  \end{align}
  \end{subequations}
  while the numerator terms are
  \begin{subequations}
  \begin{align}
    \uSuSt &=W(\tGp\setminus\pmS;z) = W_1(\tGp\setminus\pmS)z+O(z^{2}),\\
    \uSD &=W(\tGp\setminus\pmSL;z) = W_1(\tGp\setminus\pmSL)z+O(z^{2}),\\
    \SuSt &=W\bigl(\tmP_n\setminus\tGp\setminus\pmS;z\bigr) \\
    & = \bigl[3n-W_1(\tGp \cup \pmS)\bigr]z +O(z^{2}),\\
    \SD &=W\bigl(\tmP_n\setminus\tGp\setminus\pmSL;z\bigr) \\
    &= \bigl[3n-W_1(\tGp \cup \pmSL)\bigr]z+O(z^{2}).
  \end{align}
  \end{subequations}
  Note that
  \begin{align}
    W_1(\tGp \cup \pmS) - W_1(\tGp \setminus \pmS) &= W_1(\pmS), \\
    W_1(\tGp \cup \pmSL) - W_1(\tGp \setminus \pmSL) &= W_1(\pmSL).
  \end{align}
  Inserting the expansions into the numerator of \cref{eq:1-F3.general}, we obtain
   \begin{align}
  \label{eq:1-F3.asymptotics-num}
   &{\rm numerator} = \notag\\
    & \quad z (1-\pDD) \bigl(W_1(\tGp \setminus \pmS)
    - 4^{-k} W_1(\tGp \setminus \pmSL)\bigr)\notag \\
    & \quad+ z \pDD \bigl(3n - W_1(\pmS)
    - 4^{-k} (3n - W_1(\pmSL))\bigr) \notag\\
     &\quad + O(z^{2}).
  \end{align}
  The denominator equals $1+O(z)$ and does not affect the linear coefficient in $z$.
  Comparing \cref{eq:1-F3.asymptotics-num} with \cref{eq:1-F3.asymptotics-ab-v1}, we get
  \begin{equation}
    1-\FLDD = \bigl(a (1 - \pDD) + b \pDD\bigr) z + O(z^2),
  \end{equation}
  and substituting $z=\frac{p}{3}+O(p^2)$ yields \cref{eq:1-F3.asymptotics}.

  To show \cref{eq:1-F3.asymptotics-b-ineq} for nontrivial codes, let
  \begin{equation}
    l_1 \coloneqq W_1(\pmS).
  \end{equation}

  \noindent{Case 1: $l_1=0$.}
  Then, since $W_1(\pmSL)\ge 0$, from \cref{eq:1-F3.asymptotics-ab-v1} we have
  \begin{subequations}
  \begin{align}
    b &= 3n - 4^{-k}\bigl(3n - W_1(\pmSL)\bigr)
    \ge 3n - 4^{-k}(3n)\\
    &    = (1-4^{-k}) 3n .
  \end{align}
  \end{subequations}
  For a nontrivial code $n\ge k+1$, and therefore
  \begin{equation}
    b  \ge  (1-4^{-k}) 3(k+1)
     =  3k + 3 - \frac{3(k+1)}{4^k}.
  \end{equation}
  For all integers $k\ge 1$ one has $4^k \ge 2(k+1)$, hence
  \begin{equation}
    3 - \frac{3(k+1)}{4^k}  \ge  3 - \frac{3}{2}  =  \frac{3}{2},
  \end{equation}
  which gives $b \ge 3k+\frac{3}{2}$.

  \smallskip
  \noindent{Case 2: $l_1>0$.}
  Any weight-$1$ stabilizer acts nontrivially on a distinct physical qubit, and such stabilizers are independent.
  Remove the $l_1$ qubits on which these weight-$1$ stabilizers act, along with the corresponding $l_1$ independent
  stabilizer generators; this yields a reduced code on $n' = n-l_1$ qubits encoding the same $k$ logical qubits, and for the reduced code one has $W_1(\pi(\mS'))=0$.
  Let $b'$ denote the coefficient $b$ for the reduced code.

  For each removed qubit, the quantity $3n-W_1(\pmS)$ decreases by $3-1=2$ (since $n$ decreases by $1$ and exactly one weight-$1$ stabilizer element is removed).
  Moreover, on such a qubit the only weight-$1$ element of $\pmSL$ is that stabilizer itself (any other nontrivial single-qubit Pauli would anticommute with it), so $W_1(\pmSL)$ also decreases by $1$ per removed qubit, and hence $3n-W_1(\pmSL)$ decreases by $2$ per removed qubit as well.
  Therefore
  \begin{equation}
    b = b' + \bigl(2 - 4^{-k}\cdot 2\bigr) l_1 = b' + (1-4^{-k}) 2l_1.
  \end{equation}
  If the reduced code is trivial, then $b'=3k$ and
  \begin{equation}
    b \ge 3k + (1-4^{-k}) 2l_1 \ge 3k + \frac{3}{2},
  \end{equation}
  since $l_1\ge 1$ and $k\ge 1$ imply $(1-4^{-k}) 2l_1 \ge (1-\tfrac14)\cdot 2 = \tfrac{3}{2}$.
  If the reduced code is nontrivial, then by Case 1 we have $b'\ge 3k+\frac{3}{2}$, hence $b>3k+\frac{3}{2}$.
  This proves \cref{eq:1-F3.asymptotics-b-ineq}.

  Finally, when $d\ge 2$ there are no weight-$1$ logical Paulis, so $W_1(\pmSL) = W_1(\pmS)$.
  Substituting this into \cref{eq:1-F3.asymptotics-ab-v1} yields the simplified expressions \cref{eq:1-F3.asymptotics-ab-v2}.
\end{proof}

Comparing \cref{eq:1-F3.asymptotics} with \cref{eq:F2.explicit-2}, we have
\begin{equation}
\FDD-\FLDD
= \frac{p}{3}\Bigl[\pDD (b - 3k) + a (1 - \pDD)\Bigr] + O(p^2).
\end{equation}
By \cref{lem:F3.asymptotics}, $a\ge 0$, and moreover $b=3k$ for the trivial code while for any nontrivial code one has $b\ge 3k+\frac{3}{2}$.
Therefore, for any fixed $\pDD>0$ and any nontrivial code the coefficient of $p$ is strictly positive, so
\begin{equation}
\label{eq:fdd>fldd}
\FDD > \FLDD \quad \text{for sufficiently small $p$},
\end{equation}
with equality only in the trivial-code case.

\subsubsection{LDD+QEC}
\label{ss:LDD+QEC-fid}
Let us compare the asymptotic behavior of $\FHyb$ with $\FDD, \FLDD, \FQEC$ when $p \to 0$, $\pQEC = 0$, and fixed $\pDD \in (0, 1)$.
We already determined in \cref{eq:fqec>fdd,eq:fdd>fldd}
that in this case $\FQEC > \FDD > \FLDD$ for any code correcting single-qubit errors.
The main tool for comparing $\FHyb$ with $\FQEC$ is \cref{thm:Hyb.vs.QEC}.

\begin{mytheorem}
  \label{thm:Hyb.vs.QEC-asymptotics}
  Assume $\pQEC = 0$, $\pDD \in [0, 1)$, $k \geq 1$, and $\abs{\tG} > 1$.
  Let $\alpha$ be the minimal weight of uncorrectable errors
  (i.e., the smallest $w$ such that $W_w(\tmP_n\setminus\tmcEc)\neq 0$),
  and let $\beta\ge \alpha$ be the minimal weight of suppressed uncorrectable errors
  (i.e., the smallest $w$ such that $W_w\bigl((\tmP_n\setminus\tGp)\setminus\tmcEc\bigr)\neq 0$).
  Then:
  \begin{enumerate}
    \item If $\beta = \alpha$ (equivalently, at least one minimal-weight uncorrectable error lies in the suppressed sector), then there exists $p_0 > 0$ such that for all $p \in (0, p_0)$ we have $\FHyb > \FQEC$.
    \item Suppose $\beta > \alpha$ and, among the weight-$\alpha$ uncorrectable errors,
      there exists at least one error $E'$ with nonzero syndrome (equivalently, $\syn(E')\neq 0$).
      Fix a decomposition $\tG=\{I,g\}\tG_1$ with $\tG_1\subsetneq \tG$ (such a decomposition always exists since $\tG$ is a nontrivial $\mathbb{F}_2$-vector space under multiplication).
      Then there exists a stabilizer element $S\in\pmS$ such that, for $\tG'=\{I,gS\}\tG_1$,
      the corresponding minimal suppressed-uncorrectable weight satisfies $\beta'=\alpha$.
      In particular, $\tG'$ satisfies the condition of part~1.
    \item Suppose the code distance $d$ is at least $2$.
      Then $\alpha \leq \ceil{d / 2} < d$
      and the precondition $\syn(E') \neq 0$ in part~2 holds for any weight-$\alpha$ uncorrectable error.
    \item Suppose $\beta > \alpha$ and $\tG = \pi(\mL)$.
      Then the precondition $\syn(E') \neq 0$ in part~2 holds for any weight-$\alpha$ uncorrectable error.
  \end{enumerate}
\end{mytheorem}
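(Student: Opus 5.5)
The four parts are independent. I would prove part~1 by an order-of-vanishing argument in $z$ fed into \cref{thm:Hyb.vs.QEC}, part~2 by an explicit choice of $S$, and parts~3 and~4 by short structural arguments about zero-syndrome errors.

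\emph{Part 1.} Since $\pQEC=0$ and $\pDD\in[0,1)$, \cref{thm:Hyb.vs.QEC} reduces the claim to showing $\SuC/\symS>\uSuC/\uS$ for all sufficiently small $z>0$. Small $z$ corresponds to small $p$, because $z=p/(3-3p)$ is increasing with $z\to0$ as $p\to 0$. I would expand each WEP as a polynomial in $z$.
\begin{itemize}
\item Because $\beta=\alpha$, we have $\SuC=c\,z^{\alpha}+O(z^{\alpha+1})$ with $c=W_\alpha\bigl((\tmP_n\setminus\tGp)\setminus\tmcEc\bigr)\ge1$.
\item Because $\abs{\tG}>1$, some Pauli anticommutes with a nontrivial element of $\tG$, so the suppressed sector is nonempty. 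Let $m\ge1$ be its minimal weight; then $\symS=s\,z^{m}+O(z^{m+1})$ with $s\ge1$, and $m\le\alpha$ since $\SuC\neq0$ at weight $\alpha$.
\item On the other side, $\uS=1+O(z)$ because $\pi(I)\in\tGp$, and $\uSuC=O(z^{\alpha})$.
\end{itemize}
Hence $\SuC/\symS=(c/s)z^{\alpha-m}(1+O(z))$ while $\uSuC/\uS=O(z^{\alpha})$. Since $m\ge1$, the left side dominates as $z\to0^+$, which gives the required $p_0$.

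\emph{Part 2.} Let $E'$ be a weight-$\alpha$ uncorrectable error with $\syn(E')\ne0$. Because $\beta>\alpha$, $E'$ is unsuppressed, so $E'\in\tGp$ commutes with $g$ and with every element of $\tG_1$. Nonzero syndrome gives a stabilizer $S\in\pmS$ anticommuting with $E'$. Then $E'$ anticommutes with $gS$, so $E'\notin(\tG')^{\perp}$, i.e.\ $E'$ is a suppressed uncorrectable error of weight $\alpha$ for $\tG'$. This gives $\beta'\le\alpha$. Also $\beta'\ge\alpha$, because $\alpha$ is the minimal weight of any uncorrectable error and $\tmcEc$ does not depend on the DD group. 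Two points need checking: $\tG'$ is nontrivial, since $gS\ne\pi(I)$ (it anticommutes with $E'$); and $gS$ has the same logical action as $g$, since they differ by a stabilizer.

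\emph{Part 3.} Take a nontrivial logical $L\in\pmSL\setminus\pmS$ of weight $d$. Split it as $L=AB$ with $A,B$ of disjoint supports and weights $\floor{d/2}$ and $\ceil{d/2}$. Since $L$ has zero syndrome, $\syn(A)=\syn(B)$. If both were correctable, then $D(\sigma)A$ and $D(\sigma)B$ would both lie in $\pmS$, forcing $AB=L\in\pmS$, a contradiction. So one of $A,B$ is uncorrectable, giving $\alpha\le\ceil{d/2}$, and $\ceil{d/2}<d$ for $d\ge2$. Now suppose a weight-$\alpha$ uncorrectable error had zero syndrome. It would lie in $\pmSL$, and having weight $<d$ it would be a stabilizer. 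But stabilizers are correctable because $D(0)=\pi(I)$, a contradiction.

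\emph{Part 4.} Suppose a weight-$\alpha$ uncorrectable $E'$ had zero syndrome. Then $E'\in\pmSL\setminus\pmS$, so modulo $\pmS$ it represents a nontrivial logical Pauli. By nondegeneracy of the commutation form on the logical Pauli group, some element of $\pi(\mL)=\tG$ anticommutes with it; stabilizer factors do not affect this, since they commute with $\mL$. So $E'$ is suppressed, contradicting $\beta>\alpha$. I expect this nondegeneracy step to be the one needing the most care; I would justify it via the isomorphism $\pi(\mL)\cong\pi(C(\mS))/\pmS$ and the standard fact that the logical Paulis form a full $k$-qubit Pauli group.
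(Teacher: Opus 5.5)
Your proposal is correct and follows the paper's proof part by part. The paper uses the same four arguments: the small-$z$ order comparison of $\SuC/\symS$ against $\uSuC/\uS$ fed into \cref{thm:Hyb.vs.QEC}; the dressing $g\mapsto gS$ with $S$ anticommuting with an unsuppressed weight-$\alpha$ uncorrectable error; the $\floor{d/2}$/$\ceil{d/2}$ splitting of a minimum-weight logical; and the fact that only the trivial logical class commutes with all logical Paulis. The differences are minor: the paper takes the leading order of $\symS$ to be $z^1$ directly, which your general $m\ge 1$ covers, and it first checks that $\beta$ is well defined via $\abs{\tmP_n\setminus\tmcEc}=4^n(1-4^{-k})>4^n/2\ge\abs{\tGp}$, a check you omit but which your arguments do not need once the hypotheses on $\beta$ are granted.
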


\begin{proof}
  We start by showing that $\beta$ in the statement of the theorem is always well defined,
  i.e., that $\bigl(\tmP_n\setminus\tGp\bigr)\setminus\tmcEc \neq \varnothing$.
Fix a deterministic Pauli decoder $D(\sigma)$.
An error $E'\in\tmP_n$ is correctable iff $D(\syn(E'))E'\in \pi(\mS)$.
Thus, for each syndrome $\sigma$, the set of correctable errors is exactly the coset
$D(\sigma) \pi(\mS)$, which has $|\pi(\mS)|=2^{n-k}$ elements.
Since there are $2^{n-k}$ syndromes, $|\tmcEc|=2^{n-k}\cdot 2^{n-k}=4^{n-k}$ and hence
\[
|\tmP_n\setminus\tmcEc| \;=\; 4^n-4^{n-k} \;=\; 4^n(1-4^{-k}).
\]
For a nontrivial code ($k\ge 1$), this satisfies $|\tmP_n\setminus\tmcEc|>4^n/2$.
On the other hand, since $\tG$ is nontrivial, $\tGp\subsetneq\tmP_n$ and in fact $|\tGp|\le 4^n/2$
(it is contained in the commutant of any fixed nonidentity element of $\tG$).
Therefore $(\tmP_n\setminus\tmcEc)\setminus\tGp$ is nonempty, and hence
$\beta$ is well defined.

  We analyze the WEP ratios in the condition
  \cref{eq:F1.vs.F4} from \cref{thm:Hyb.vs.QEC} in the limit $z \to 0$ (equivalently $p\to 0$).
  \cref{eq:S-uS-expand} gives
  \begin{subequations}
  \begin{align}
    \uS &= 1+O(z),
    \quad
    \symS = c z + O(z^2)\\
    c &\coloneqq W_1(\tmP_n\setminus\tGp)=3n-W_1(\tGp)>0.
  \end{align}
  \end{subequations}

  Let
  \begin{equation}
    A \coloneqq W_\alpha\bigl(\tGp\setminus\tmcEc\bigr)\ge 0,
    \quad
    B \coloneqq W_\beta\bigl(\tmP_n\setminus\tGp\setminus\tmcEc\bigr)>0.
  \end{equation}
  Then
  \begin{equation}
    \uSuC = A z^\alpha + O(z^{\alpha+1}),
    \quad
    \SuC = B z^\beta + O(z^{\beta+1}),
  \end{equation}
  and therefore
  \begin{equation}
    \frac{\uSuC}{\uS} = A z^\alpha + O(z^{\alpha+1}),
    \quad
    \frac{\SuC}{\symS} = \frac{B}{c}  z^{\beta-1} + O(z^{\beta}).
  \end{equation}

  If $\beta=\alpha$, then $\frac{\SuC}{\symS}=\Theta(z^{\alpha-1})$ while $\frac{\uSuC}{\uS}=O(z^\alpha)$, so for sufficiently small $z>0$ we have
  $\frac{\uSuC}{\uS} < \frac{\SuC}{\symS}$, and hence $\FHyb>\FQEC$ by \cref{thm:Hyb.vs.QEC}.
This proves part~1.

  For part~2, assume $\beta>\alpha$
  and choose a weight-$\alpha$ uncorrectable error $E'$ with nonzero syndrome.
  Since $\beta>\alpha$, all weight-$\alpha$ suppressed-uncorrectable coefficients vanish,
  so such an $E'$ must lie in the unsuppressed sector $\tGp$
  and therefore commutes with every element of $\tG$, in particular with $g$.
  Nonzero syndrome means there exists a stabilizer element $S\in\pmS$
  such that $E'$ anticommutes with $S$.
  Then $E'$ anticommutes with $gS$ (because it commutes with $g$ and anticommutes with $S$),
  so $E'\notin (\tG')^\perp$.
  Therefore $E'$ becomes a suppressed uncorrectable error of weight $\alpha$ for $\tG'$,
  implying $\beta'=\alpha$.

  For part~3, note that uncorrectable errors $E'$ satisfying $\syn(E') = 0$ are known
  as nontrivial (i.e., non-stabilizer) logical operators,
  and recall that the code distance $d$ is defined as the minimum weight
  of such operators.
  Let $L' \in \tmP_n$ be a weight-$d$ nontrivial logical operator.
  Then $L' = E_1' E_2'$ for $E_1', E_2'$ of weights $\floor{d/2}$ and $\ceil{d/2}$, respectively.
Since $\syn(L')=0$, we have $\syn(E_1')=\syn(E_2')$.
Let $\sigma\coloneqq \syn(E_1')=\syn(E_2')$.
If both $E_1'$ and $E_2'$ were correctable, then $D(\sigma)E_1',D(\sigma)E_2'\in \pi(\mS)$.
Writing $E_i'=D(\sigma)^{-1}S_i$ with $S_i\in\pi(\mS)$, we obtain
\begin{equation}
L' \;=\; E_1'E_2' \;=\; D(\sigma)^{-2} S_1S_2 \;\in\; \pi(\mS),
\end{equation}
(using $D(\sigma)^2=I$ in the phase-stripped Pauli group),
which contradicts the fact that $L'$ is a nontrivial logical operator.
Therefore at least one of $E_1',E_2'$ is uncorrectable.
By definition, $\alpha$ is the minimal weight of an uncorrectable error, hence
\begin{equation}
\alpha \le \max\bigl(\wt(E_1'),\wt(E_2')\bigr)=\ceil{d/2}.
\end{equation}
Since $d\ge 2$, we have $\alpha \le \ceil{d/2}<d$.
Finally, let $E'$ be any weight-$\alpha$ uncorrectable error.
If $\syn(E')=0$, then $E'$ is a nontrivial logical operator of weight $\alpha<d$,
contradicting the definition of the code distance $d$.
Hence every weight-$\alpha$ uncorrectable error satisfies $\syn(E')\neq 0$, proving the precondition in part~2.

  To prove part~4, assume to the contrary that $\tG = \pi(\mL)$, $\beta > \alpha$,
  but $\syn(E') = 0$ for some weight-$\alpha$ uncorrectable error $E'$.
  Since $\syn(E') = 0$, $E'$ is a logical operator.
  Since $\beta>\alpha$, we have $W_\alpha(\tmP_n\setminus\tGp\setminus\tmcEc)=0$.
Hence every weight-$\alpha$ uncorrectable error lies in $\tGp$ and therefore commutes with all elements of $\tG$.
With $\tG=\pi(\mL)$, this means that the logical class of $E'$ commutes with all logical Paulis.
The only logical Pauli that commutes with all logical Paulis is the identity,
so $E'$ acts as the identity on the codespace.
Together with $\syn(E')=0$, this implies $E'\in\pi(\mS)$, contradicting that $E'$ is uncorrectable.
\end{proof}

We can now describe an explicit ``stabilizer-dressing'' procedure to modify $\tG$ when it does not satisfy the condition of part~1 of \cref{thm:Hyb.vs.QEC-asymptotics}, provided that there exists a minimal-weight uncorrectable error with nonzero syndrome:
(1) pick any generator list for $\tG$ and let $g$ be one generator;
(2) pick any uncorrectable error $E'$ of minimal weight $\alpha$ with $\syn(E')\neq 0$;
(3) find a stabilizer generator (or any stabilizer element) $S\in\pmS$ that anticommutes with $E'$;
(4) replace $g$ by $gS$ and rebuild the decoupling group.
The resulting group $\tG'$ has $\beta'=\alpha$ and therefore satisfies the condition of part~1, implying $\FHyb > \FQEC$ for sufficiently small $p$ (with $\pQEC=0$).

Note that if all weight-$\alpha$ uncorrectable errors have zero syndrome, i.e., are undetected logical errors, then multiplying DD generators by stabilizers cannot change their commutation relations with those errors; achieving $\beta'=\alpha$ then requires changing the logical decoupling group itself rather than only its stabilizer dressing.

\subsubsection{LDD+QEC vs QEC-only: examples when \texorpdfstring{$\beta>\alpha$}{β>α}}

A clean sufficient condition for an asymptotic hybrid-protocol advantage is given by part~1 of \cref{thm:Hyb.vs.QEC-asymptotics}:
if the leading uncorrectable weight in the unsuppressed sector and the suppressed sector coincide ($\beta=\alpha$),
then for sufficiently small $p$ the hybrid protocol satisfies $\FHyb>\FQEC$ (for any fixed $0\le \pDD<1$).
An interesting question is what happens when this condition is violated.
When $\beta > \alpha$,
the theorem is generally inconclusive: the hybrid and QEC-only protocols share the same leading $O(p^\alpha)$ scaling,
and the ordering can depend on subleading terms that are sensitive to how uncorrectable errors are distributed between the suppressed and unsuppressed sectors.

We give two examples below that are chosen to illustrate two qualitatively different mechanisms by which $\beta=\alpha$ can fail.
Our first example exhibits a structural mismatch between the chosen LDD group and the recovery map, so that the earliest suppressed uncorrectable events occur at a different weight than the earliest unsuppressed uncorrectable events; this places the hybrid-vs-QEC comparison in the delicate $\beta > \alpha$ regime where an asymptotic advantage is not automatic.
Our second example shows that $\beta > \alpha$ can also be an artifact of the
recovery/decoder choice: for simplicity, we use canonical syndrome representatives
and a decoder that is convenient to describe analytically, but such a choice can be suboptimal once the effective noise has been modified by LDD. In particular, a decoder
tailored to the original physical noise need not be optimal for the LDD-modified channel, and a suboptimal choice can mask an intrinsic hybrid-protocol advantage; the recovery/decoder choice should be adapted to the channel being optimized~\cite{ReimpellWerner:2005aa,FletcherShorWin:2007aa,Kosut:2008lq}.

Part~2 of \cref{thm:Hyb.vs.QEC-asymptotics} provides a concrete ``way out''
when the sufficient condition $\beta=\alpha$ fails: under its hypotheses, one may dress an LDD generator by a stabilizer, which changes the suppressed/unsuppressed partition without changing the logical action of the decoupling cycle,
and can enforce $\beta'=\alpha$ so that part~1 applies.
In the examples below we keep the recovery map fixed and use the exact criterion of \cref{thm:Hyb.vs.QEC} to determine whether $\FHyb$ exceeds $\FQEC$ when $\beta>\alpha$.

Thus, the examples should be read not only as demonstrations that the $\beta > \alpha$ regime can arise in natural and analytically tractable settings, but also as illustrations of why decoder choice can matter in this regime.

\paragraph*{Example 1: $\FHyb < \FQEC$.}
Consider any code $\mC$ with distance $d=7$ and a suboptimal choice of decoding map $D$:
such a code admits a decoder capable of correcting all errors of weight up to $3$,
but let us assume that $D$ fails to correct one (and exactly one) of those errors, call it $E_0$, with weight $\wt(E_0) = 2$.
Let $\tG$ be any decoupling group commuting with $E_0$ and suppressing at least some uncorrectable weight-$4$ errors.
In this case $\alpha = 2$ and $\beta = 4$, so $\beta > \alpha$
and the condition in part~1 of \cref{thm:Hyb.vs.QEC-asymptotics} is violated.
Thus, \cref{thm:Hyb.vs.QEC-asymptotics} does not directly resolve the comparison of $\FHyb$ and $\FQEC$, and we instead rely on \cref{thm:Hyb.vs.QEC}.
We have
\begin{equation}
\frac{\uSuC}{\uS} = z^2 + O(z^3),
\end{equation}
since there is exactly one weight-$2$ uncorrectable error and it is unsuppressed, and
\begin{equation}
\frac{\SuC}{\symS} = \frac{b}{c}  z^3 + O(z^4),
\end{equation}
where $b$ is the number of suppressed uncorrectable weight-$4$ errors and
$c = W_1(\tmP_n\setminus\tGp) = 3n - W_1(\tGp)$ is the number of suppressed weight-$1$ Paulis (by assumption $b,c>0$).
Thus, for sufficiently small $z$ (equivalently, $p$), $\frac{\uSuC}{\uS} > \frac{\SuC}{\symS}$ and, applying \cref{thm:Hyb.vs.QEC}, we conclude that $\FHyb < \FQEC$.

\paragraph*{Example 2: $\FHyb > \FQEC$.}
Consider the following set of stabilizer generators and logical operators that define a $[[13, 1, 3]]$ code:
\begin{equation}
  \begin{aligned}
    S_{0} & = {\rm \texttt{IIIIZIZXXZIXI}} \\
    S_{1} & = {\rm \texttt{IIIXIIIYZYYYI}} \\
    S_{2} & = {\rm \texttt{IIIZYIYYIXIIY}} \\
    S_{3} & = {\rm \texttt{IXXXYZYIZZYZI}} \\
    S_{4} & = {\rm \texttt{IXXXZIXZXZZIX}} \\
    S_{5} & = {\rm \texttt{IXYZZXZXXXXZX}} \\
    S_{6} & = {\rm \texttt{IZZZIIZZYIIXY}} \\
    S_{7} & = {\rm \texttt{XXIZZIYZIIYIY}} \\
    S_{8} & = {\rm \texttt{XYZYYYXXYYZIY}} \\
    S_{9} & = {\rm \texttt{YXZZIYZZXZZZZ}} \\
    S_{10} & = {\rm \texttt{ZIZZXIZZZZXYI}} \\
    S_{11} & = {\rm \texttt{ZYYZIXIZZYIYY}} \\
    \bar{Z} & = {\rm \texttt{IIIYYYYZIXYXY}} \\
    \bar{X} & = {\rm \texttt{XXXXXXXXXIIII}}
  \end{aligned}
\end{equation}
We pick $\tG = \pi(\{I, \bar{X}, \bar{Z}, \bar{Y}\})$, where $\bar{Y} = i \bar{X} \bar{Z}$.
We pick a decoding map $D$ that is able to correct all weight-$1$ errors
and the following set of weight-$2$ errors on the first $3$ qubits:
$\{\texttt{IXZ}, \texttt{IZX}, \texttt{ZIX}, \texttt{XIZ}, \texttt{XZI}, \texttt{ZXI}\}$.
This forces it to fail to correct the weight-$2$ errors
$\{\texttt{ZYI}, \texttt{ZIY}, \texttt{IZY}, \texttt{YZI}, \texttt{YIZ}, \texttt{IYZ}\}$,
because they differ from the previous set by $\texttt{ZZZ}$ (on the first $3$ qubits),
which is a nontrivial logical operator
(equivalent to $\bar{Z}$ up to a stabilizer and phase).
The errors $\texttt{ZZI}$, $\texttt{ZIZ}$, and $\texttt{IZZ}$ are also uncorrectable because they produce the same syndrome as $\texttt{IIZ}$, $\texttt{IZI}$, and $\texttt{ZII}$, respectively.
All uncorrectable weight-$2$ errors commute with $\tG$, so $\alpha = 2$.
There are suppressed uncorrectable weight-$3$ errors (e.g., $\texttt{ZZZ}$ on the first $3$ qubits), so $\beta = 3 > \alpha$.
Thus, the condition in part~1 of \cref{thm:Hyb.vs.QEC-asymptotics} is violated and we again rely on \cref{thm:Hyb.vs.QEC} to compare $\FHyb$ and $\FQEC$.

Computing the relevant WEPs using a computer program (see \cref{sss:Methodology}), we find
\begin{subequations}
\begin{align}
  \uSuC &= 9 z^2 + 843 z^3 + O(z^4), \\
  \uS   &= 1 + 8 z + 186 z^2 + O(z^3), \\
  \SuC  &= 2544 z^3 + O(z^4), \\
  \symS &= 31 z + 516 z^2 + O(z^3).
\end{align}
\end{subequations}
In this case, $\uSuC/\uS = 9 z^2 + O(z^3)$ and $\SuC/\symS = (2544/31) z^2 + O(z^3)$, so $\uSuC/\uS < \SuC/\symS$ for sufficiently small $z$ and, from \cref{thm:Hyb.vs.QEC}, we conclude that $\FHyb > \FQEC$.
Thus, in this case, despite $\beta > \alpha$, $\FHyb > \FQEC$ for small $z$.
Indeed, evaluating $1 - \FHyb$ and $1 - \FQEC$ directly using \cref{eq:1-F1.general,eq:1-F4.general.b}, we find
\begin{subequations}
\begin{align}
  \label{eq:1-F1-example2}
  1 - \FHyb &= 9 z^2 + (771 + 2265 \pDD) z^3 + O(z^4), \\
  \label{eq:1-F4-example2}
  1 - \FQEC &= 9 z^2 + 3036 z^3 + O(z^4).
\end{align}
\end{subequations}
Alternatively, one could obtain \cref{eq:1-F4-example2} by substituting $\pDD = 1$ into \cref{eq:1-F1-example2}.
The conclusion is that, in this example, $\FHyb$ is larger than $\FQEC$ by $2265 (1 - \pDD) z^3 + O(z^4)$.

\section{Error Detection}
\label{sec:QED}

\subsection{Setup}
The QEC framework of \cref{sec:theorem-2-ec} extends naturally to QED.
Here we specialize the common setup of \cref{sec:setup} to the case in which no recovery map is applied: instead, we accept a run iff the reported syndrome is trivial.
This postselected-detection viewpoint has a long history in fault-tolerant schemes based on error-detecting codes~\cite{Knill:2004postselected,Knill:2005realistically}.
The relevant observables are therefore (i) the acceptance probability $P_A$ and (ii) the logical fidelity conditioned on acceptance, denoted again by $F$ throughout this section.
The waiting-interval noise and the DD/LDD action are the same as in the QEC setting.
Detection imperfections are modeled by assuming that the syndrome is reported correctly with probability $1-\pQED$ and is otherwise replaced by a uniformly random syndrome among the $2^{n-k}$ possibilities, so the trivial outcome occurs with probability $2^{-(n-k)}$ in the faulty-readout branch.

As in the QEC case, the distinction here is operational rather than a property of the code distance: the same stabilizer code may be used either for correction or for detection, depending only on how the syndrome information is processed.
As in \cref{tab:tags} for the QEC case, \cref{tab:tags-QED} lists the WEPs that appear in this section.

\begin{table}
  \footnotesize
  \begin{center}
    \begin{tabular}{|c|c|c|}
      \hline
      tag & weight enumerator & set description \\ \hline
      \all & $W\bigl(\tmP_n;z\bigr) = (1+3z)^n$ & all \\ \hline
      \symS & $W\bigl(\tmP_n\setminus \tGp;z\bigr)$ & suppressed \\ \hline
      \uS & $W\bigl(\tGp;z\bigr)$ & unsuppressed \\ \hline
      \St & $W\bigl(\pmS;z\bigr)$ & stabilizers \\ \hline
      \symL & $W\bigl(\pmSL \setminus \pmS;z\bigr)$ & nontrivial logical \\ \hline
      \StL & $W\bigl(\pmSL;z\bigr)$ & all undetected (zero-syndrome) \\ \hline
      \D & $W\bigl(\tmP_n \setminus \pmSL;z\bigr)$ & detected (nonzero syndrome) \\ \hline
      \uSt & $W\bigl(\tmP_n \setminus \pmS;z\bigr)$ & non-stabilizers \\ \hline
      \SSt & $W\bigl(\pmS \setminus \tGp; z\bigr)$ & suppressed stabilizers \\ \hline
      \uSSt & $W\bigl(\pmS \cap \tGp; z\bigr)$ & unsuppressed stabilizers \\ \hline
      \SL & $W\bigl(\pmSL \setminus \pmS \setminus \tGp; z\bigr)$ & suppressed non-\St{} logical \\ \hline
      \uSL & $W\bigl(\tGp \cap \pmSL \setminus \pmS; z\bigr)$ & unsuppressed non-\St{} logical \\ \hline
      \SStL & $W\bigl(\pmSL \setminus \tGp; z\bigr)$ & suppressed all undetected \\ \hline
      \uSStL & $W\bigl(\pmSL \cap \tGp; z\bigr)$ & unsuppressed all undetected \\ \hline
      \SD & $W\bigl(\tmP_n \setminus \pmSL \setminus \tGp; z\bigr)$ & suppressed detected \\ \hline
      \uSD & $W\bigl(\tGp \cap \tmP_n \setminus \pmSL; z\bigr)$ & unsuppressed detected \\ \hline
      \SuSt & $W\bigl(\tmP_n \setminus \pmS \setminus \tGp; z\bigr)$ & suppressed non-stabilizers \\ \hline
      \uSuSt & $W\bigl(\tGp \setminus \pmS; z\bigr)$ & unsuppressed non-stabilizers \\ \hline
    \end{tabular}
  \end{center}
  \caption{List of all WEPs that appear in the QED conditional-fidelity and acceptance-probability expressions.
  Each error set (middle column) is described in terms of its relation to DD and QED (right column)
  and is given a corresponding acronym tag (left column) aligned with the set description.
  We always evaluate each WEP at $z=p/(3-3p)$.}
  \label{tab:tags-QED}
\end{table}

We again compare four strategies: DD-phys, QED-only, LDD-only, and the hybrid protocol LDD+QED.
Here LDD-only means that we use the same encoded block as in QED-only and LDD+QED and the same DD group $\tG$ as in LDD+QED,
but omit postselection, so $P_A=1$ by definition; any out-of-codespace component is mapped back to the codespace by the same random-projection convention used earlier.
To avoid confusion with $\FHyb$ from the QEC section, we denote the fidelity of the hybrid LDD+QED protocol by $\FHybD$.

\begin{mylemma}[Hybrid LDD+QED: conditional fidelity and acceptance probability]
\label{lem:QED.Hyb.general}
Consider the hybrid protocol in the error-detection setting described above: an encoded block undergoes (L)DD, followed by stabilizer syndrome measurement, and we accept iff the measured syndrome is trivial.
Assume the effective Pauli noise model with parameter $z=p/(3-3p)$
and the QED readout model in which ideal readout is used with probability $1-\pQED$
and the reported syndrome is otherwise uniformly random over the $2^{n-k}$ possible syndromes.
If a detected (nonzero-syndrome) error is nevertheless accepted due to readout failure, we model the resulting logical operation as a uniformly random logical Pauli (equivalently, the output is maximally mixed on the codespace), so the probability of the trivial logical class is $4^{-k}$.

Then the entanglement infidelity conditioned on acceptance is
\begin{multline}
  \label{eq:QED.Hyb.1-F}
  1 - \FHybD = \\
  Q_{{\rm Hyb}}^{-1}
  \bigl(
  (1-\pQED(1-2^{k-n})) (\uSL + \pDD \SL) \\
  + (1-4^{-k}) 2^{k-n} \pQED (\uSD + \pDD \SD) \bigr),
\end{multline}
and the acceptance probability is
\begin{equation}
  \label{eq:QED.Hyb.PA}
  P_{A,{\rm Hyb}} = \frac{Q_{{\rm Hyb}}}{\uS + \pDD \symS},
\end{equation}
where
\begin{multline}
  \label{eq:QED.Hyb.Q}
  Q_{{\rm Hyb}} =
  \bigl(
  (1-\pQED) (\uSStL + \pDD \SStL) \\
  + 2^{k-n} \pQED (\uS + \pDD \symS) \bigr) .
\end{multline}
All WEP tags are as in \cref{tab:tags-QED}.
\end{mylemma}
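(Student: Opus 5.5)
Our plan is to mirror the unified bookkeeping in the proof of \cref{thm:F1.general}. We compute the unnormalized weight of the acceptance event and of the event ``accepted and logically nontrivial'' as sums over physical errors $E'$, and then divide by the DD-renormalized total mass $\uS+\pDD\symS$. As before, each $E'$ carries weight $z^{\wt(E')}$ times $\pDD$ if $E'\notin\tGp$, and the common prefactor $(1-p)^n$ cancels. Every set is split into its unsuppressed part (intersection with $\tGp$) and its suppressed part (complement), so each WEP $X$ appears as $X_{\uS}+\pDD X_{\symS}$. The stated formulas are therefore linear in these paired combinations, and it suffices to derive them sector-blind, with the pDD weighting applied at the end.

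\textbf{Step 1: acceptance.} Fix $E'$. If $E'\in\pmSL$ (zero syndrome), it is accepted with probability $(1-\pQED)+\pQED 2^{k-n}$. If $E'\notin\pmSL$, it is accepted only through a readout failure that reports the trivial syndrome, i.e., with probability $\pQED 2^{k-n}$. Summing these gives
\begin{equation*}
(1-\pQED)\,\StL+\pQED 2^{k-n}(\StL+\D)=(1-\pQED)\,\StL+\pQED 2^{k-n}\,\all.
\end{equation*}
Attaching the sector weights then yields exactly $Q_{\rm Hyb}$ in \cref{eq:QED.Hyb.Q}. Dividing by the normalization gives \cref{eq:QED.Hyb.PA}.

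\textbf{Step 2: accepted logical failures.} Zero-syndrome errors are never recovered, so they produce a logical fault iff $E'\in\pmSL\setminus\pmS$. These are accepted with probability $1-\pQED+\pQED 2^{k-n}=1-\pQED(1-2^{k-n})$, which contributes that factor times $\symL$. Detected errors are accepted with probability $\pQED 2^{k-n}$; by the modeling convention they then induce a uniformly random logical Pauli, so they fail with probability $1-4^{-k}$. This contributes $(1-4^{-k})2^{k-n}\pQED\,\D$. Stabilizers are accepted but never fail. Splitting by sector gives the bracket in \cref{eq:QED.Hyb.1-F}. The conditional infidelity equals (failure mass)/(acceptance mass); the normalization $\uS+\pDD\symS$ cancels, leaving $Q_{\rm Hyb}^{-1}$ as the prefactor.

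\textbf{Expected obstacle.} The only delicate point is the independence of readout failure from the error. We must check that the faulty branch reports the trivial syndrome with probability $2^{-(n-k)}$ regardless of the true syndrome, including for zero-syndrome errors. In that case the correct and faulty branches must be added rather than double counted, which yields $1-\pQED(1-2^{k-n})$. We must also confirm that conditioning on acceptance means dividing by $Q_{\rm Hyb}$, not by the total mass. A short consistency check is to set $\pQED=0$ and $\pDD=1$. Then $P_A=\StL/\all$ and $1-F=\symL/\StL$, as expected for ideal detection.
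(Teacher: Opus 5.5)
Your proposal is correct and is essentially the paper's own proof. The paper splits first by readout branch (ideal with probability $1-\pQED$, random with probability $\pQED$) and then by error class, while you first compute each error's acceptance and failure probability and then sum over errors; this is the same double sum in the other order, and it yields the same $Q_{\rm Hyb}$ and the same numerator, with your sector-wise identity $\StL+\D=\all$ giving the random-readout acceptance weight $2^{k-n}\pQED(\uS+\pDD\symS)$.
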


\begin{proof}
Let $E'\in\tmP_n$ denote the effective Pauli error after the waiting/(L)DD step.
As in the QEC setting, under our Pauli model the (unnormalized) probability weight of a subset
$A\subseteq\tmP_n$ is proportional to its WEP $W(A;z)$, with the additional DD factor $\pDD$ for suppressed errors.
Hence the total weight of all errors is $\uS + \pDD \symS$.

Acceptance probability:
There are $2^{n-k}$ stabilizer syndromes, so if the readout is uniformly random then the probability to report the
trivial syndrome is $2^{-(n-k)}$.
We consider the two readout branches.
\begin{enumerate}
\item Ideal readout (probability $1-\pQED$).
Acceptance occurs iff the true syndrome is trivial, i.e., iff $E'\in\pmSL$.
The corresponding acceptance weight is
$\uSStL + \pDD \SStL$.

\item Random readout (probability $\pQED$).
Independently of $E'$, the protocol accepts with probability $2^{k-n}$, so the corresponding acceptance weight is
$2^{k-n}(\uS+\pDD \symS)$.
\end{enumerate}
Multiplying by the respective branch probabilities and adding yields \cref{eq:QED.Hyb.Q}.
Dividing by the total weight $\uS+\pDD \symS$ gives \cref{eq:QED.Hyb.PA}.

Conditional logical infidelity:
We now count accepted events that produce a nontrivial logical Pauli on the $k$-qubit
logical subsystem, and then condition on acceptance.
Again we split into the two readout branches.

\begin{enumerate}
\item Ideal readout (probability $1-\pQED$).
Conditioned on acceptance, we necessarily have $E'\in\pmSL$.
Such an accepted error produces a logical fault iff it is a non-stabilizer logical Pauli,
i.e., iff $E'\in \pmSL\setminus \pmS$, whose WEP contribution is $\uSL+\pDD \SL$.
Thus the ideal-readout branch contributes the logical-fault weight
\begin{equation}
\label{eq:ideal-QED}
  (1-\pQED)(\uSL+\pDD \SL).
\end{equation}

\item Random readout (probability $\pQED$).
Acceptance occurs with probability $2^{k-n}$ irrespective of $E'$.
\begin{enumerate}
\item[(i)] If $E'\in\pmSL\setminus\pmS$ (non-stabilizer logical), then whenever we accept we keep a state
with a nontrivial logical Pauli applied, so this contributes
\begin{equation}
\label{eq:random-QED-i}
  2^{k-n}\pQED(\uSL+\pDD \SL).
\end{equation}

\item[(ii)] If $E'\notin\pmSL$ (detected error), then an acceptance event corresponds to a readout failure.
By assumption, in this case the induced logical Pauli is uniform over the $4^k$ logical classes, so the probability
of the trivial class is $4^{-k}$ and the probability of a logical fault is $1-4^{-k}$.
The accepted detected errors have WEP weight $\uSD+\pDD \SD$, hence this contributes
\begin{equation}
 \label{eq:random-QED-ii}
 (1-4^{-k}) 2^{k-n}\pQED(\uSD+\pDD \SD).
\end{equation}
\end{enumerate}
\end{enumerate}

Combining \cref{eq:ideal-QED,eq:random-QED-i} yields the total coefficient
$(1-\pQED) + 2^{k-n}\pQED  =  1-\pQED(1-2^{k-n})$
in front of $(\uSL+\pDD \SL)$.
Adding \cref{eq:random-QED-ii}
gives the numerator in
\cref{eq:QED.Hyb.1-F}.
Conditioning on acceptance divides by the acceptance weight $Q_{\rm Hyb}$,
completing the proof of \cref{eq:QED.Hyb.1-F}.
\end{proof}

\begin{mycorollary}[QED-only]
\label{cor:QED.only}
Setting $\pDD=1$ in \cref{lem:QED.Hyb.general} yields the QED-only expressions
\begin{multline}
  \label{eq:QED.QED.1-F}
  1 - \FQED = Q_{{\rm QED}}^{-1}
  \bigl(
  (1-\pQED(1-2^{k-n})) \symL \\
  + (1-4^{-k}) 2^{k-n} \pQED \D
  \bigr)
\end{multline}
and
\begin{equation}
  \label{eq:QED.QED.PA}
  P_{A,{\rm QED}} = \frac{Q_{{\rm QED}}}{\all},
\end{equation}
where
\begin{equation}
  \label{eq:QED.QED.Q}
  Q_{{\rm QED}} = (1-\pQED) \StL + 2^{k-n} \pQED \all.
\end{equation}
\end{mycorollary}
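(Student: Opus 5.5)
The plan is to specialize \cref{lem:QED.Hyb.general} to $\pDD=1$ and then merge each pair of sector-wise WEPs into a single WEP, just as was done for QEC-only in \cref{eq:1-F4.general.b}. The first step is to check that $\pDD=1$ really models QED-only. In our DD model, $\pDD=1$ means that no Pauli error has its probability rescaled, so the wait-interval distribution is exactly \cref{eq:error-model}. The syndrome-readout and postselection rules are identical in the two protocols, so the QED-only quantities are obtained literally by evaluating \cref{eq:QED.Hyb.1-F,eq:QED.Hyb.PA,eq:QED.Hyb.Q} at $\pDD=1$.

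Next, I would record the sector-additivity identities. Every $E'\in\tmP_n$ lies in exactly one of $\tGp$ or $\tmP_n\setminus\tGp$. The WEP is additive over disjoint unions, so for any $A\subseteq\tmP_n$ we have $W(A\cap\tGp;z)+W(A\setminus\tGp;z)=W(A;z)$. Applying this with $A$ equal to $\tmP_n$, $\pmSL$, $\pmSL\setminus\pmS$, and $\tmP_n\setminus\pmSL$ gives, in the tags of \cref{tab:tags-QED},
\begin{equation*}
\uS+\symS=\all,\quad \uSStL+\SStL=\StL,\quad \uSL+\SL=\symL,\quad \uSD+\SD=\D .
\end{equation*}

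Finally, I would substitute these identities at $\pDD=1$. The expression $Q_{\rm Hyb}$ becomes $(1-\pQED)\StL+2^{k-n}\pQED\all$, which is \cref{eq:QED.QED.Q}. The denominator $\uS+\pDD\symS$ in \cref{eq:QED.Hyb.PA} becomes $\all$, giving \cref{eq:QED.QED.PA}. In the numerator of \cref{eq:QED.Hyb.1-F}, the two combinations collapse to $\symL$ and $\D$, giving \cref{eq:QED.QED.1-F}.

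There is no genuine obstacle here. The only point needing care is confirming that the tag definitions in \cref{tab:tags-QED} really split the named sets disjointly. In particular, $\uSL$ is defined as $W(\tGp\cap\pmSL\setminus\pmS;z)$ and $\SL$ as $W(\pmSL\setminus\pmS\setminus\tGp;z)$, and these are exactly the two halves of $\pmSL\setminus\pmS$.
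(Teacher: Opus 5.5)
Your proposal is correct and follows the paper's own proof: set $\pDD=1$ in \cref{lem:QED.Hyb.general}, recombine the unsuppressed and suppressed WEPs via $\uS+\symS=\all$, $\uSStL+\SStL=\StL$, $\uSL+\SL=\symL$, $\uSD+\SD=\D$, and substitute. Your check that the tags in \cref{tab:tags-QED} split each named set disjointly is exactly the fact the paper relies on.
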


\begin{proof}
When $\pDD=1$, the suppressed/unsuppressed split recombines as follows:
$\uS+\symS=\all$,
$\uSStL+\SStL=\StL$,
$\uSL+\SL=\symL$, and $\uSD+\SD=\D$.
Substituting these identities and $\pDD=1$ into \cref{eq:QED.Hyb.Q,eq:QED.Hyb.1-F,eq:QED.Hyb.PA}
gives \cref{eq:QED.QED.Q,eq:QED.QED.1-F,eq:QED.QED.PA}.
\end{proof}

\begin{mycorollary}[LDD-only]
\label{cor:QED.LDD.only}
In the LDD-only strategy (no postselection), one has $P_{A,{\rm LDD}}=1$ and
\begin{multline}
  \label{eq:QED.LDD.1-F}
  1 - \FLDD = \\
  \frac{ \uSuSt + \pDD \SuSt -4^{-k} (\uSD + \pDD \SD) }{\uS + \pDD \symS}.
\end{multline}
\end{mycorollary}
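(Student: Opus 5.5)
The plan is a direct derivation, parallel to the reduction of \cref{thm:F1.general} to LDD-only in \cref{eq:1-F3.general}. With no postselection, every run is accepted, so $P_{A,{\rm LDD}}=1$ holds by definition. The convention fixed for LDD-only is the same as in the QEC setting: the syndrome is measured, and if it is nontrivial a uniformly random syndrome-consistent recovery is applied. Hence the LDD-only fidelity in the detection setting should coincide with the QEC-setting quantity $1-\FHyb|_{\pQEC=1}$. I would derive it directly rather than only cite that coincidence.

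First, the normalization. After DD rescaling, the total unnormalized weight is $\uS+\pDD\symS$, exactly as in \cref{lem:F2.general} and \cref{thm:F1.general}. No conditioning on acceptance occurs, so this is the whole denominator.

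Next, I would partition $\tmP_n$ within each DD sector into three classes and count logical-fault weight, using the sector splitting of \cref{tab:tags-QED}:
\begin{enumerate}
  \item Stabilizers $\pmS$ contribute no fault.
  \item Nontrivial undetected logicals $\pmSL\setminus\pmS$ have zero syndrome. No recovery is applied, so they always cause a fault and contribute $\uSL+\pDD\SL$.
  \item Detected errors $\tmP_n\setminus\pmSL$ have a nonzero syndrome. The random recovery makes the induced logical class uniform, so they fault with probability $1-4^{-k}$ and contribute $(1-4^{-k})(\uSD+\pDD\SD)$.
\end{enumerate}
Summing the last two classes and using the disjoint decomposition $(\tGp\setminus\pmS)=(\tGp\cap\pmSL\setminus\pmS)\,\dot\cup\,(\tGp\setminus\pmSL)$, i.e.\ $\uSuSt=\uSL+\uSD$, together with its suppressed analogue $\SuSt=\SL+\SD$, the numerator becomes $\uSuSt+\pDD\SuSt-4^{-k}(\uSD+\pDD\SD)$. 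Dividing by the normalization gives \cref{eq:QED.LDD.1-F}.

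As a consistency check, I would confirm the match with \cref{eq:1-F3.general}. That expression already equals this one once $\uSuC+\uSC=\uSuSt$ is used, so the result is decoder-independent, as it must be when $\pQEC=1$.

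I expect no step to be a real obstacle. The only care needed is the bookkeeping of the disjoint unions, so that zero-syndrome nontrivial logicals get weight $1$ rather than $1-4^{-k}$. This is precisely why the $-4^{-k}$ correction multiplies only the detected-error WEPs $\uSD$ and $\SD$, and not $\uSuSt$ and $\SuSt$.
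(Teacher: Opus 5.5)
Your proposal is correct and follows the paper's proof essentially step for step: it uses the same normalization $\uS+\pDD\symS$, the same three-way classification (stabilizers contribute nothing, undetected nontrivial logicals count with full weight, detected errors fault with probability $1-4^{-k}$), and the same decompositions $\uSuSt=\uSL+\uSD$ and $\SuSt=\SL+\SD$. The only difference is phrasing: you describe the reset as a uniformly random syndrome-consistent recovery, while the paper describes replacement by the maximally mixed codespace state; both induce the same uniform logical distribution. Your cross-check against \cref{eq:1-F3.general} is a harmless extra.
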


\begin{proof}
By definition of LDD-only in this section, the output is always mapped back to the codespace by replacing any
out-of-codespace component with the maximally mixed codespace state.
Thus stabilizer errors contribute no logical fault, non-stabilizer logical errors contribute with full weight,
and detected errors contribute a logical fault with probability $1-4^{-k}$.
Using the decompositions
$\uSuSt=\uSL+\uSD$ and $\SuSt=\SL+\SD$,
the logical-fault weight is
$(\uSL+\pDD \SL) + (1-4^{-k})(\uSD+\pDD \SD)
= (\uSuSt+\pDD \SuSt) - 4^{-k}(\uSD+\pDD \SD)$,
and dividing by the total weight $\uS+\pDD \symS$ yields \cref{eq:QED.LDD.1-F}.
\end{proof}

Note that LDD-only is inequivalent to setting $\pQED=1$ (or any other value) in the hybrid LDD+QED expressions,
because $\pQED$ models imperfect syndrome readout and postselection, whereas LDD-only omits the postselection step entirely.

\begin{mycorollary}[DD-phys]
\label{cor:QED.DD.only}
For DD-phys, i.e., DD applied to $k$ physical qubits (no encoding, no postselection), the fidelity is
\begin{equation}
  \label{eq:QED.DD.F}
  \FDD = \bigl(1 + \pDD \bigl((1+3z)^k-1\bigr)\bigr)^{-1},
\end{equation}
and $P_{A,{\rm DD}}=1$.
\end{mycorollary}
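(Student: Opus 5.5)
This is a specialization of the general DD-phys computation already carried out in the QEC section; here the QED framework plays no essential role. With no encoding and no postselection, every accepted run is simply the output of the DD step, so $P_{A,{\rm DD}}=1$ holds by definition, exactly as stipulated for DD-phys in the detection setting. For the fidelity, I would invoke \cref{lem:F2.general}, which gives $\FDD=(\uSSt+\pDD\SSt)/(\uS+\pDD\symS)$ for an arbitrary decoupling group and trivial code. Alternatively, I could reduce \cref{cor:QED.LDD.only} to the trivial code ($n=k$, $\pmS=\{\pi(I)\}$, $\pmSL=\tmP_k$), in which there are no detected errors, so $\uSD=\SD=0$.

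In either route, I then specialize to a DD group of order $4^k$ on $k$ qubits, i.e.\ $\tG=\tmP_k$ and $\tGp=\{\pi(I)\}$, and evaluate the WEPs:
\begin{itemize}
  \item $\uS=\uSSt=W(\{\pi(I)\};z)=1$;
  \item $\SSt=0$, because the only stabilizer is the identity and it is unsuppressed;
  \item $\symS=W(\tmP_k\setminus\{\pi(I)\};z)=(1+3z)^k-1$, using \cref{eq:WPn}.
\end{itemize}
Substituting these gives $\FDD=\bigl(1+\pDD((1+3z)^k-1)\bigr)^{-1}$, which is \cref{eq:QED.DD.F}.

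As a consistency check, I would substitute $z=p/(3-3p)$, so that $1+3z=(1-p)^{-1}$. Multiplying the numerator and denominator by $(1-p)^k$ then recovers \cref{eq:F2.explicit}.

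I do not expect a real obstacle. The one point needing care is justifying that the QED-setting fidelity for DD-phys coincides with the unconditional no-error probability. This holds because $P_A=1$, so conditioning on acceptance is vacuous. It also holds because, for the trivial code, any non-identity Pauli acts as a nontrivial logical error.
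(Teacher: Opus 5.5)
Your proposal is correct and takes essentially the same approach as the paper. The paper simply notes that the DD-phys fidelity depends only on the wait-interval Pauli error model, not on subsequent correction or detection, and cites the QEC-setting DD-phys result; your explicit specialization of \cref{lem:F2.general} to $\tGp=\{\pi(I)\}$ is exactly the computation carried out in that lemma's proof.
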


\begin{proof}
This is identical to the DD-only case in the QEC setting, \cref{eq:F2.explicit-2}, since it depends only on the
single-qubit Pauli error model and not on whether one later performs correction or detection.
\end{proof}

\subsection{Modified protocol based on partial order}
\label{ss:QED-partial-order}
We define a partial order on pairs $(F, P_A)$ by writing $(F_1, P_{A,1}) \succeq (F_2, P_{A,2})$ iff $F_1 \geq F_2$ and
$P_{A,1} (F_1 - 4^{-k}) \geq P_{A,2} (F_2 - 4^{-k})$, where $F-4^{-k}$ is the fidelity excess above the maximally mixed $k$-qubit state.
These conditions ensure that scenario~1 is at least as good as scenario~2 in the following operational sense: we can simulate scenario~2 using scenario~1 by optionally discarding accepted runs (to reduce acceptance) and, if $P_{A,1}<P_{A,2}$, by ``padding'' some of the rejected runs by outputting and accepting a maximally mixed $k$-qubit state (whose fidelity is $4^{-k}$).

Concretely, if $P_{A,1}<P_{A,2}$, then when scenario~1 rejects, we additionally accept the maximally mixed state with conditional probability
$\frac{P_{A,2}-P_{A,1}}{1-P_{A,1}}$, so that the overall acceptance becomes $P_{A,2}$; this padding increases acceptance while contributing zero fidelity excess, and the inequality
$P_{A,1} (F_1 - 4^{-k}) \geq P_{A,2} (F_2 - 4^{-k})$ ensures the resulting conditional fidelity is still at least $F_2$.
If instead $P_{A,1}\geq P_{A,2}$, we can reduce the acceptance probability by uniformly discarding accepted runs; the condition $F_1\geq F_2$ then ensures that the conditional fidelity remains at least $F_2$.

\subsection{Comparison of LDD+QED and QED-only}
\label{ss:QED-compare}

We now compare the hybrid LDD+QED protocol to QED-only under the assumption $\pQED = 0$.
As in \cref{thm:Hyb.vs.QEC}, using \cref{lem:fraction-comparison} we have the following theorem.
\begin{mytheorem}
  \label{thm:QED.Hyb.vs.QED}
  Assume $\pQED = 0$, $\pDD \in [0, 1)$, and $\SStL \neq 0$. Then
  \begin{equation}
    \label{eq:QED.Hyb.vs.QED.a}
    \FHybD \geq \FQED \iff \frac{\uSL}{\uSStL} \leq \frac{\SL}{\SStL}.
  \end{equation}
  Under the partial order of \cref{ss:QED-partial-order}, $\bigl(\FHybD,P_{A,{\rm Hyb}}\bigr)\succeq\bigl(\FQED,P_{A,{\rm QED}}\bigr)$ iff, in addition to the right-hand side of \cref{eq:QED.Hyb.vs.QED.a}, we also have
  \begin{equation}
    \label{eq:QED.Hyb.vs.QED.b}
    \frac{\uSSt - 4^{-k} \uSStL}{\uS} \geq \frac{\SSt - 4^{-k} \SStL}{\symS}.
  \end{equation}
  When $\SStL = 0$, $\tG$ contains only stabilizers, $\FHybD = \FQED$,
  and $\bigl(\FHybD,P_{A,{\rm Hyb}}\bigr)\succeq\bigl(\FQED,P_{A,{\rm QED}}\bigr)$.
  The corresponding statements with strict inequalities also hold.
\end{mytheorem}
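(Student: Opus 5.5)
The plan is to specialize the formulas of \cref{lem:QED.Hyb.general} and \cref{cor:QED.only} to $\pQED=0$, write each quantity as $f(t)=\frac{A+tB}{C+tD}$ with $t=\pDD$, and apply \cref{lem:fraction-comparison} twice: once for the conditional fidelity and once for the ``fidelity excess'' $P_A(F-4^{-k})$. Along the way I will use its slightly sharper form. The proof of \cref{lem:fraction-comparison} shows that $f(t)-f(1)$ has the sign of $(1-t)(AD-BC)$. Hence non-strict and strict versions follow simultaneously, which gives the final sentence of the theorem.

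\emph{Step 1 (fidelity).} At $\pQED=0$, \cref{eq:QED.Hyb.1-F,eq:QED.Hyb.Q} give
\begin{equation*}
1-\FHybD=\frac{\uSL+\pDD\SL}{\uSStL+\pDD\SStL}.
\end{equation*}
Setting $\pDD=1$ recovers $1-\FQED$. Take $A=\uSL$, $B=\SL$, $C=\uSStL$ and $D=\SStL$. Here $C\ge 1>0$ because $\pi(I)\in\pmSL\cap\tGp$, and $D>0$ by hypothesis. Then $\FHybD\ge\FQED$ iff $f(\pDD)\le f(1)$, which holds iff $AD\le BC$, i.e.\ iff $\uSL/\uSStL\le\SL/\SStL$. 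This is \cref{eq:QED.Hyb.vs.QED.a}.

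\emph{Step 2 (second order condition).} Since $\uSStL-\uSL=\uSSt$ (and likewise in the suppressed sector), we have
\begin{equation*}
F\,Q_{\rm Hyb}=\uSSt+\pDD\SSt.
\end{equation*}
Therefore
\begin{equation*}
P_{A,{\rm Hyb}}\bigl(\FHybD-4^{-k}\bigr)=\frac{(\uSSt-4^{-k}\uSStL)+\pDD(\SSt-4^{-k}\SStL)}{\uS+\pDD\symS},
\end{equation*}
and the QED-only value is the same expression at $\pDD=1$. Apply the lemma with $C=\uS>0$ and $D=\symS>0$; the latter holds because $\tG$ is nontrivial, so some weight-$\ge 1$ Pauli is suppressed. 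The excess inequality is then exactly \cref{eq:QED.Hyb.vs.QED.b}. Combining with Step 1 gives the partial-order characterization.

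\emph{Step 3 (degenerate case $\SStL=0$).} Then $\pmSL\subseteq\tGp$, so every element of $\tG$ commutes with all of $\pi(C(\mS))$. By the double-centralizer property of stabilizer codes, this forces $\tG\subseteq\pmS$. In that case $\SL=\SSt=0$, and the unsuppressed tags equal the full ones. Hence
\begin{equation*}
\FHybD=\frac{\St}{\StL}=\FQED.
\end{equation*}
For the excess, the numerators coincide and equal $\St-4^{-k}\StL$, while the denominators satisfy $\uS+\pDD\symS\le\all$. So $\succeq$ follows once I show $\St\ge 4^{-k}\StL$, i.e.\ $\FQED\ge 4^{-k}$.

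I expect this last inequality to be the main obstacle, because it is not a formal consequence of \cref{lem:fraction-comparison}. My first attempt is a Fourier argument. The depolarizing distribution, as a function on $\tmP_n\cong\mathbb{F}_2^{2n}$, has Fourier coefficients $\prod_j(1-4p/3)^{[\text{nontrivial on }j]}$, which are nonnegative for $p\le 3/4$. For any coset $\ell\pmS$ of $\pmSL$, Poisson summation writes its probability as a sum over the dual group $(\pmS)^\perp=\pmSL$ of these coefficients times characters $\pm1$. All terms are nonnegative with sign $+1$ exactly for the trivial coset $\ell\in\pmS$. Hence the stabilizer coset has the largest mass among the $4^k$ logical cosets, which gives $\St\ge 4^{-k}\StL$.

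For $p>3/4$ this can fail, so I would state the assumption $p\le 3/4$ (i.e.\ $z\le 1$) explicitly if needed.
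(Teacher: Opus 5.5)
Your Steps 1 and 2 are exactly the paper's proof. They use the same specialization to $\pQED=0$, the same identity $F\,Q_{\rm Hyb}=\uSSt+\pDD\SSt$, and \cref{lem:fraction-comparison} with the same choices of $A,B,C,D$; your sign formula $(1-t)(AD-BC)$ is simply that lemma's strict and non-strict forms together.

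Step 3 is where you go beyond the paper. Its proof never treats the $\SStL=0$ clause, so your argument fills a gap rather than reproducing one. The reduction is correct:
\begin{itemize}
\item $\pmSL\subseteq\tGp$ forces $\tG\subseteq(\pmSL)^\perp=\pmS$;
\item the two conditional fidelities coincide;
\item $P_{A,{\rm Hyb}}\ge P_{A,{\rm QED}}$ because $\uS+\pDD\symS\le\all$;
\item so $\succeq$ is equivalent to $\FQED\ge 4^{-k}$.
\end{itemize}
Your Poisson-summation argument (nonnegative Fourier coefficients $(1-4p/3)^{\wt(Q)}$ summed over $\pmSL=(\pmS)^\perp$) proves this for $z\le 1$, i.e.\ $p\le 3/4$.

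The restriction is genuinely necessary, not an artifact of the method. Take $n=2$, $k=1$, $\mS=\langle\texttt{ZI}\rangle$ and $\tG=\pmS$. Then $\tGp=\pmSL$, and
\[
\FHybD=\FQED=\St/\StL=1/(1+3z),\qquad P_{A,{\rm Hyb}}=\frac{1+z}{1+z+2\pDD z}>\frac{1+z}{1+3z}=P_{A,{\rm QED}}.
\]
For $z>1$ the common fidelity is below $1/4$, so the excess comparison reverses and the stated $\succeq$ fails. You should therefore keep $p\le 3/4$ as an explicit hypothesis for that clause. With it, your proof of the theorem is complete and more careful than the paper's.
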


\begin{proof}
From \cref{eq:QED.Hyb.Q,eq:QED.Hyb.1-F,eq:QED.Hyb.PA}, setting $\pQED=0$ gives
\begin{subequations}
\begin{align}
  Q_{\rm Hyb} &=\uSStL+\pDD \SStL,\\
  1-\FHybD&=\frac{\uSL+\pDD \SL}{\uSStL+\pDD \SStL},\\
  P_{A,{\rm Hyb}}&=\frac{\uSStL+\pDD \SStL}{\uS+\pDD \symS}.
\end{align}
\end{subequations}
Likewise, QED-only is obtained by setting $\pDD=1$ in the hybrid expressions (equivalently, no suppression),
so that
\begin{equation}
  1-\FQED=\frac{\uSL+\SL}{\uSStL+\SStL},
  \quad
  P_{A,{\rm QED}}=\frac{\uSStL+\SStL}{\uS+\symS}.
\end{equation}

\paragraph{Fidelity comparison.}
Define
\begin{equation}
  f(t)\coloneqq \frac{\uSL+t \SL}{\uSStL+t \SStL},
  \quad t\in[0,1],
\end{equation}
so that $1-\FHybD=f(\pDD)$ and $1-\FQED=f(1)$.
Then
\begin{equation}
 \FHybD\ge \FQED \iff f(\pDD)\le f(1).
\end{equation}
Applying \cref{lem:fraction-comparison} with
$A=\uSL$, $B=\SL$, $C=\uSStL$, $D=\SStL$, and $t=\pDD\in[0,1)$,
we obtain
\begin{equation}
  f(\pDD)\le f(1)
  \iff
  \frac{\uSL}{\uSStL}\le \frac{\SL}{\SStL},
\end{equation}
which proves \cref{eq:QED.Hyb.vs.QED.a}.
The corresponding statement with strict inequalities follows
from the strict form in \cref{lem:fraction-comparison}.

\paragraph{Partial order comparison.}
Under the partial order of \cref{ss:QED-partial-order}, in addition to $\FHybD\ge \FQED$ we require
\begin{equation}
  \label{eq:partial-order-extra}
  P_{A,{\rm Hyb}}\bigl(\FHybD-4^{-k}\bigr) \ge  P_{A,{\rm QED}}\bigl(\FQED-4^{-k}\bigr).
\end{equation}
Define
\begin{equation}
  \label{eq:def-interp-PA-F}
  P_A(t) \coloneqq \frac{\uSStL+t \SStL}{\uS+t \symS},
  \quad
  F(t) \coloneqq 1-\frac{\uSL+t \SL}{\uSStL+t \SStL},
\end{equation}
where $t\in[0,1]$ is an interpolating parameter.
Under the assumption $\pQED=0$, these are exactly the acceptance probability and conditional fidelity
for the hybrid protocol with parameter $t$.
In particular,
\begin{subequations}
  \label{eq:specialize-t}
\begin{align}
  (P_A(\pDD),F(\pDD))&=(P_{A,{\rm Hyb}},\FHybD),\\
  (P_A(1),F(1))&=(P_{A,{\rm QED}},\FQED).
\end{align}
\end{subequations}

Next, since $\pmSL=\pmS\sqcup\bigl(\pmSL\setminus\pmS\bigr)$ and the suppressed/unsuppressed split is by $\tGp$,
the corresponding WEPs satisfy
\begin{equation}
  \label{eq:WEP-partitions}
  \uSStL=\uSSt+\uSL,
  \quad
  \SStL=\SSt+\SL.
\end{equation}
Substituting \cref{eq:WEP-partitions} into \cref{eq:def-interp-PA-F} gives
\begin{equation}
  \label{eq:F-of-t-detected-form}
  F(t)
  =1-\frac{\uSL+t \SL}{\uSStL+t \SStL}
  =\frac{\uSSt+t \SSt}{\uSStL+t \SStL}.
\end{equation}

Now consider the quantity that appears in the partial order;
using \cref{eq:def-interp-PA-F,eq:F-of-t-detected-form}, we obtain
\begin{subequations}
\begin{align}
  &P_A(t)\bigl(F(t)-4^{-k}\bigr)\notag\\
  &\quad\quad=
  \frac{\uSStL+t \SStL}{\uS+t \symS}\left(\frac{\uSSt+t \SSt}{\uSStL+t \SStL}-4^{-k}\right)\\
  &\quad\quad=
  \frac{\uSSt+t \SSt}{\uS+t \symS}
  -4^{-k}\frac{\uSStL+t \SStL}{\uS+t \symS}\\
  \label{eq:PA-times-Fminus}
  &\quad\quad=
  \frac{\bigl(\uSSt-4^{-k}\uSStL\bigr)+t\bigl(\SSt-4^{-k}\SStL\bigr)}{\uS+t \symS}.
\end{align}
\end{subequations}
This identity expresses the partial-order quantity $P_A(t)\bigl(F(t)-4^{-k}\bigr)$ as a linear-fractional function of $t$,
so that we can apply \cref{lem:fraction-comparison}.

Define
\begin{equation}
  \label{eq:def-g}
  g(t) \coloneqq \frac{A+tB}{C+tD},
\end{equation}
where $A\coloneqq \uSSt-4^{-k}\uSStL$, $B\coloneqq \SSt-4^{-k}\SStL$, $C\coloneqq \uS$, and $D\coloneqq \symS$.
Then \cref{eq:PA-times-Fminus} is exactly $P_A(t)\bigl(F(t)-4^{-k}\bigr)=g(t)$.
Therefore, by \cref{eq:specialize-t}, the extra partial-order condition \cref{eq:partial-order-extra} becomes
\begin{equation}
  \label{eq:g-ineq}
  g(\pDD)\ge g(1).
\end{equation}
Applying \cref{lem:fraction-comparison} to $g(t)$ with $t=\pDD\in[0,1)$ yields
\begin{multline}
  g(\pDD)\ge g(1)
  \iff
  \frac{A}{C}\ge \frac{B}{D}\\
  \iff
  \frac{\uSSt-4^{-k}\uSStL}{\uS}\ge \frac{\SSt-4^{-k}\SStL}{\symS},
\end{multline}
which is exactly \cref{eq:QED.Hyb.vs.QED.b}.
The corresponding strict-inequality statements follow
from the strict form of \cref{lem:fraction-comparison}.

Finally, combining \cref{eq:QED.Hyb.vs.QED.a} (for $\FHybD\ge \FQED$) with \cref{eq:QED.Hyb.vs.QED.b}
gives the stated equivalence for
$\bigl(\FHybD,P_{A,{\rm Hyb}}\bigr)\succeq\bigl(\FQED,P_{A,{\rm QED}}\bigr)$.
\end{proof}

We now come to our central LDD-QED result: a sufficient condition under which the hybrid protocol outperforms QED alone, analogous to \cref{cor:Hyb.vs.QEC}.

\begin{mycorollary}
  \label{cor:QED.Hyb.vs.QED}
  If $\pQED = 0$, $\pDD \in [0, 1)$, and $\tG = \pi(\mL)$,
  then $\FHybD > \FQED$ and
  $\bigl(\FHybD,P_{A,{\rm Hyb}}\bigr)\succ\bigl(\FQED,P_{A,{\rm QED}}\bigr)$
  for any $z > 0$.
\end{mycorollary}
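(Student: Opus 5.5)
With $\tG=\pi(\mL)$, I expect the suppressed/unsuppressed split of the zero-syndrome errors $\pmSL$ to become trivial: every stabilizer is unsuppressed and every nontrivial logical is suppressed. Once that holds, both strict conditions of \cref{thm:QED.Hyb.vs.QED} can be checked by inspection.

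The first step is to show $\pmSL\cap\tGp=\pmS$. For the inclusion $\pmS\subseteq\tGp$, note that $\mL\subseteq C(\mS)$, so every stabilizer commutes with every element of $\tG=\pi(\mL)$. For the reverse inclusion, take $E'\in\pmSL\cap\tGp$ and write $E'=\pi(SL')$ with $S\in\mS$ and $L'\in\mL$. Since $S$ commutes with all of $\mL$, $E'$ commutes with $\mL$ iff $L'$ does. An element of $\mL$ that commutes with all of $\mL$ lies, modulo phase, in the trivial logical class; this is the same argument as in part~4 of \cref{thm:Hyb.vs.QEC-asymptotics}, using $\mL\cong C(\mS)/\mS$ together with the fact that the only central logical Pauli is the identity. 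Hence $E'\in\pmS$.

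This identification gives four WEP relations:
\begin{itemize}
\item $\uSL=0$;
\item $\SSt=0$;
\item $\uSStL=\uSSt=\St$;
\item $\SStL=\SL=\symL$.
\end{itemize}
For $z>0$ we also have $\symL>0$, because $k\ge1$ guarantees a nontrivial logical operator. Thus $\SStL\neq0$ and the main case of \cref{thm:QED.Hyb.vs.QED} applies. Moreover $\symS>0$ because $\tG$ is nontrivial, and $\uS\ge1$ because $\uS$ contains the identity.

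Next I would verify the strict form of \cref{eq:QED.Hyb.vs.QED.a}. The left side is $\uSL/\uSStL=0/\St=0$ and the right side is $\SL/\SStL=1$, so $0<1$ and hence $\FHybD>\FQED$. Then I would verify the strict form of \cref{eq:QED.Hyb.vs.QED.b}. The left side equals $(1-4^{-k})\St/\uS$, which is strictly positive because $\St\ge1$. The right side equals $-4^{-k}\symL/\symS$, which is strictly negative. Together these give strict domination in the partial order, for every $z>0$ and every $\pDD\in[0,1)$.

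The only step needing care is the set identification $\pmSL\cap\tGp=\pmS$. It requires distinguishing commutation with the chosen representatives $\mL$ from commutation with logical classes, and checking that stabilizer dressing does not change commutation with $\mL$. Everything after that is substitution into \cref{thm:QED.Hyb.vs.QED}.
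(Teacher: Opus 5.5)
Your proof is correct and follows the paper's argument. You use that for $\tG=\pi(\mL)$ every stabilizer is unsuppressed and every nontrivial logical is suppressed, so $\uSL=\SSt=0$ and $\uSStL=\uSSt=\St$; substituting into the strict forms of both criteria of \cref{thm:QED.Hyb.vs.QED} then gives $0<1$ and $(1-4^{-k})\St/\uS>0>-4^{-k}\symL/\symS$. You also make explicit two things the paper leaves implicit: the set identity $\pmSL\cap\tGp=\pmS$, and the positivity hypotheses $\SStL,\symS,\uS>0$ needed to apply \cref{lem:fraction-comparison}.
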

\begin{proof}
  When $\tG = \pi(\mL)$, any nontrivial logical Pauli anticommutes with some element of $\tG$, while all stabilizers commute with $\tG$.
Hence $\uSL = 0$ and $\SSt = 0$, while $\uSStL = \uSSt = \St$.
Substituting these into \cref{eq:QED.Hyb.vs.QED.a,eq:QED.Hyb.vs.QED.b} shows that both inequalities hold (strictly for any $z>0$).
\end{proof}

As in \cref{cor:Hyb.vs.QEC}, by continuity, if there is a strict inequality for $\pQED=0$, then it also holds for sufficiently small $\pQED > 0$.

\subsection{The limit of small \texorpdfstring{$p$}{p}}
\label{ss:QED-asymptotics}
As in \cref{ss:QEC-asymptotics}, we can analyze the asymptotic behavior of $1-F$ and $P_A$ in the limit $z \to 0$ (which corresponds to $p \to 0$).
This leads to the following result:

\begin{widetext}
  \begin{mylemma}
    \label{lem:QED.QED.asymptotics}
    Let $d \geq 2$ be the distance of the code (i.e., the minimum weight of an element in $\pmSL\setminus\pmS$), and let $a \coloneqq W_d\bigl(\pmSL\setminus\pmS\bigr)$ be the number of weight-$d$ nontrivial logical Paulis (as counted in the WEP).
    Then, in the limit $z \to 0$ (with any dependence of $\pQED$ on $z$),
    the infidelity and rejection probability have the following asymptotic behavior:
    \begin{subequations}
    \begin{align}
    \label{eq:QED.infid.1}
      1 - \FQED &=
      \frac{ a z^d + (1-4^{-k}) 2^{k-n} (3n-W_1(\pmS)) \pQED z + O(z^{d+1}+\pQED z^2) }{1-(1-2^{k-n})\pQED}\\
    \label{eq:QED.infid.2}
     &= a z^d + (1-4^{-k}) 2^{k-n} (3n-W_1(\pmS)) \pQED z + O(z^{d+1} + \pQED z^2 + \pQED^2z),\\
    \label{eq:QED.reject}
      1 - P_{A,{\rm QED}} &= (1-2^{k-n}) \pQED + (3n - W_1(\pmS)) z + O(z^2 + z\pQED).
    \end{align}
    \end{subequations}
  \end{mylemma}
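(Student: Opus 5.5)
We start from the closed-form QED-only expressions of \cref{cor:QED.only}, namely \cref{eq:QED.QED.1-F,eq:QED.QED.PA,eq:QED.QED.Q}, and expand every weight enumerator in powers of $z$. The ingredients are the low-order coefficients of $\symL$, $\D$, $\StL$ and $\all$.

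By the definition of the distance, $\symL = a z^d + O(z^{d+1})$. Since $d\ge 2$, there are no weight-$1$ nontrivial logicals, so $W_1(\pmSL)=W_1(\pmS)$. This gives $\D = (3n-W_1(\pmS))z + O(z^2)$, exactly as in \cref{eq:WP-SL}, and $\StL = 1 + W_1(\pmS) z + O(z^2)$. Finally $\all=(1+3z)^n = 1+3nz+O(z^2)$.

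For the infidelity, we substitute these into the numerator of \cref{eq:QED.QED.1-F}. The first term is $(1-\pQED(1-2^{k-n}))\symL = a z^d + O(z^{d+1})$; the prefactor is bounded by $1$, so its $\pQED$-dependence is absorbed into the error term. The second term is $(1-4^{-k})2^{k-n}\pQED(3n-W_1(\pmS))z + O(\pQED z^2)$.

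For the denominator we write
\begin{equation*}
Q_{\rm QED} = (1-\pQED)\StL + 2^{k-n}\pQED\all = 1-(1-2^{k-n})\pQED + O(z).
\end{equation*}
The $O(z)$ correction multiplies a numerator that is already $O(z^d+\pQED z)$, so it only produces terms of order $O(z^{d+1}+\pQED z^2)$. This yields \cref{eq:QED.infid.1}. One care point: $1-(1-2^{k-n})\pQED \ge 2^{k-n}>0$ uniformly in $\pQED$, so the division and all $O(\cdot)$ constants are uniform in $\pQED$. This is what justifies ``any dependence of $\pQED$ on $z$.''

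To get \cref{eq:QED.infid.2}, we expand $1/(1-(1-2^{k-n})\pQED) = 1+O(\pQED)$, uniformly since $\pQED\in[0,1]$.
\begin{itemize}
\item Multiplying the $\pQED z$ term by this factor gives errors $O(\pQED^2 z)$.
\item Multiplying the $az^d$ term gives an $O(\pQED z^d)$ error. Since $d\ge 2$, this is $O(\pQED z^2)$.
\end{itemize}
The main subtlety is this bookkeeping: $\pQED$ is not assumed small, so the expansion must rely only on $\pQED\le 1$ and the uniform lower bound $2^{k-n}$. Strictly, the expansion $1/(1-x)=1+O(x)$ holds uniformly for $x\le 1-2^{k-n}$ with a constant depending only on $n-k$. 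So it is legitimate, and I expect this to be the only delicate step.

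For the rejection probability, $P_{A,\rm QED}=Q_{\rm QED}/\all$, so
\begin{equation*}
1-P_{A,\rm QED} = \frac{\all - (1-\pQED)\StL - 2^{k-n}\pQED\all}{\all}.
\end{equation*}
The numerator equals $(1-2^{k-n})\pQED\all + (1-\pQED)(\all-\StL)$, and $\all-\StL = \D = (3n-W_1(\pmS))z+O(z^2)$. Dividing by $\all=1+O(z)$ gives
\begin{equation*}
(1-2^{k-n})\pQED + (3n-W_1(\pmS))z + O(z^2+z\pQED),
\end{equation*}
where the $-\pQED\,\D$ piece and the $O(z)$ corrections to the first term are both absorbed into $O(z\pQED)$. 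This is \cref{eq:QED.reject}.
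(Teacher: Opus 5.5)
Your argument is essentially the paper's proof: expand $\symL$, $\D$, $\StL$, and $\all$ to the needed order, use $d\ge 2$ to get $W_1(\pmSL)=W_1(\pmS)$, and peel off $Q_{\rm QED}=1-(1-2^{k-n})\pQED+O(z)$; your exact identity $\all-Q_{\rm QED}=(1-2^{k-n})\pQED\,\all+(1-\pQED)\D$ and the uniform bound $1-(1-2^{k-n})\pQED\ge 2^{k-n}$ are tidy additions to the paper's bookkeeping. One slip to fix: $(1-\pQED(1-2^{k-n}))\symL$ equals $a z^d+O(z^{d+1}+\pQED z^d)$, not $a z^d+O(z^{d+1})$, because a bounded prefactor does not remove the $-(1-2^{k-n})\pQED\, a z^d$ piece, and that piece fits into the $O(\pQED z^2)$ remainder only because $d\ge 2$---the same observation you already use for \cref{eq:QED.infid.2}, and the one the paper invokes at this step.
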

\end{widetext}

The expansion for $1-\FQED$ separates two conceptually different contributions.
The term $a z^d$ is the ``ideal QED'' contribution: conditioned on accepting the trivial syndrome,
the leading logical faults come from undetected logical Paulis of minimum weight $d$.
The linear-in-$z$ term proportional to $\pQED$ is instead induced by QED failure: when the syndrome-extraction step fails,
a detected physical error can be accepted, and for small $z$ the dominant such contribution comes from weight-$1$ detected Paulis.
The coefficient $3n-W_1(\pmS)$ is exactly the number of weight-$1$ Paulis that are not stabilizers (and hence would
be rejected under ideal detection).

Similarly, the rejection probability has a constant term $(1-2^{k-n})\pQED$ coming from QED failure even when $z=0$:
upon failure, we model the syndrome as random, so acceptance occurs only with probability $2^{k-n}$.
The linear term $(3n-W_1(\pmS))z$ is the physical-noise contribution to rejection under ideal detection, again dominated by
weight-$1$ detected Paulis.

Let us now prove \cref{lem:QED.QED.asymptotics}.
\begin{proof}
  We expand \cref{eq:QED.QED.Q,eq:QED.QED.1-F,eq:QED.QED.PA} in the limit $z\to 0$.
  First,
  \begin{equation}
  \label{eq:all-QED}
  \all=W(\tmP_n;z)=(1+3z)^n=1+3n z+O(z^2).
  \end{equation}
  Next, since $\pmSL$ contains the identity,
  \begin{equation}
    \StL \equiv W(\pmSL;z)=1+W_1(\pmSL) z+O(z^2).
  \end{equation}
  Also, by definition $\D \equiv W(\tmP_n\setminus \pmSL;z)=\all-\StL$,
  and since $d\geq 2$ implies $W_1(\pmSL)=W_1(\pmS)$, we have
  \begin{equation}
    \D = \bigl(3n-W_1(\pmS)\bigr) z + O(z^2).
  \end{equation}

  For $\symL\equiv W(\pmSL\setminus\pmS;z)$, the code distance assumption implies that
  $W_w(\pmSL\setminus\pmS)=0$ for all $w<d$, and $W_d(\pmSL\setminus\pmS)=a$.
  Hence
  \begin{equation}
    \symL = a z^d + O(z^{d+1}).
  \end{equation}

  Now consider the infidelity formula \cref{eq:QED.QED.1-F}.
  Using the above expansions and $d\ge 2$ [so that $\pQED z^d$ can be absorbed into the $O(\pQED z^2)$ remainder],
  the numerator becomes
  \begin{align}
    &(1-\pQED(1-2^{k-n})) \symL + (1-4^{-k})2^{k-n}\pQED \D \notag\\
    &\quad\quad=
    a z^d + (1-4^{-k})2^{k-n}\bigl(3n-W_1(\pmS)\bigr)\pQED z\notag\\
    &\quad\quad\quad+ O(z^{d+1}+\pQED z^2).
  \end{align}
  For the normalization factor \cref{eq:QED.QED.Q},
\begin{subequations}
\begin{align}
    Q_{{\rm QED}}^{-1}
    &=[(1-\pQED)\StL+2^{k-n}\pQED \all]^{-1}\\
    &=\bigl[1-(1-2^{k-n})\pQED\bigr]^{-1} + O(z).
  \end{align}
  \end{subequations}
  Multiplying by the numerator only produces additional contributions of order $O(z^{d+1})$ and $O(\pQED z^2)$, so we obtain \cref{eq:QED.infid.1}.

  To obtain \cref{eq:QED.infid.2}, expand
  \begin{equation}
    \bigl[1-(1-2^{k-n})\pQED\bigr]^{-1}
    = 1 + (1-2^{k-n})\pQED + O(\pQED^2),
  \end{equation}
  and note again that the cross-term $(1-2^{k-n})\pQED\cdot a z^d$ can be absorbed into $O(\pQED z^2)$ since $d\ge 2$.

Finally, for the acceptance probability \cref{eq:QED.QED.PA},
\begin{subequations}
 \begin{align}
P_{A,{\rm QED}} &=\frac{Q_{{\rm QED}}}{\all}\\
&=\frac{\bigl[1-(1-2^{k-n})\pQED\bigr] + W_1(\pmSL) z }{\all}\notag\\
&\quad\quad +\frac{O(z^2+z\pQED)}{\all},
 \end{align}
 \end{subequations}
 where, by \cref{eq:all-QED}, $1/\all = (1+3n z+O(z^2))^{-1}=1-3n z+O(z^2)$.
  Using $W_1(\pmSL)=W_1(\pmS)$ then yields
\cref{eq:QED.reject} as claimed.
\end{proof}

Now consider the other strategies and the resulting hierarchy in the low-$p$ (or low-$z$) regime.
First, the DD-only and LDD-only fidelities are exactly the same as in the QEC setting [i.e.,  \cref{eq:QED.DD.F,eq:QED.LDD.1-F}],
so their small-$z$ expansions are unchanged.
In particular, for any fixed $\pDD>0$ one has
$1-\FDD=\Theta(z)$,
while \cref{lem:F3.asymptotics} implies that for any nontrivial code the linear coefficient of $1-\FLDD$ is strictly larger than that of $1-\FDD$,
so the conclusion of \cref{eq:fdd>fldd} continues to hold in the QED setting:
$\FDD>\FLDD$ for all sufficiently small $p$ and any nontrivial code.

By contrast, \cref{lem:QED.QED.asymptotics} shows that for a distance-$d$ code with $d\ge 2$,
\begin{equation}
  1-\FQED = a z^d + O(\pQED z),
\end{equation}
so if $\pQED=o(1)$ as $z\to 0$ then $1-\FQED=o(z)$.
Since $1-\FDD=\Theta(z)$ for fixed $\pDD>0$, it follows that
\begin{equation}
  \FQED>\FDD>\FLDD
\end{equation}
for all sufficiently small $p$ and any nontrivial code with $d\ge 2$ and $\pQED=o(1)$.

Finally, the hybrid LDD+QED protocol reduces to QED-only when $\pDD=1$, and for $\pDD\in[0,1)$ it is compared to QED-only by
\cref{thm:QED.Hyb.vs.QED}: depending on the code and the chosen logical decoupling group, LDD can either improve or degrade the conditional
fidelity relative to QED-only.
In particular, whenever the fidelity condition in \cref{thm:QED.Hyb.vs.QED} holds (so that $\FHybD\ge \FQED$), we obtain the sharpened low-$p$ hierarchy
\begin{equation}
  \FHybD \ge \FQED > \FDD > \FLDD
\end{equation}
for all sufficiently small $p$ (under the same assumptions as above).

\begin{figure*}[t]
\includegraphics[width=\textwidth]{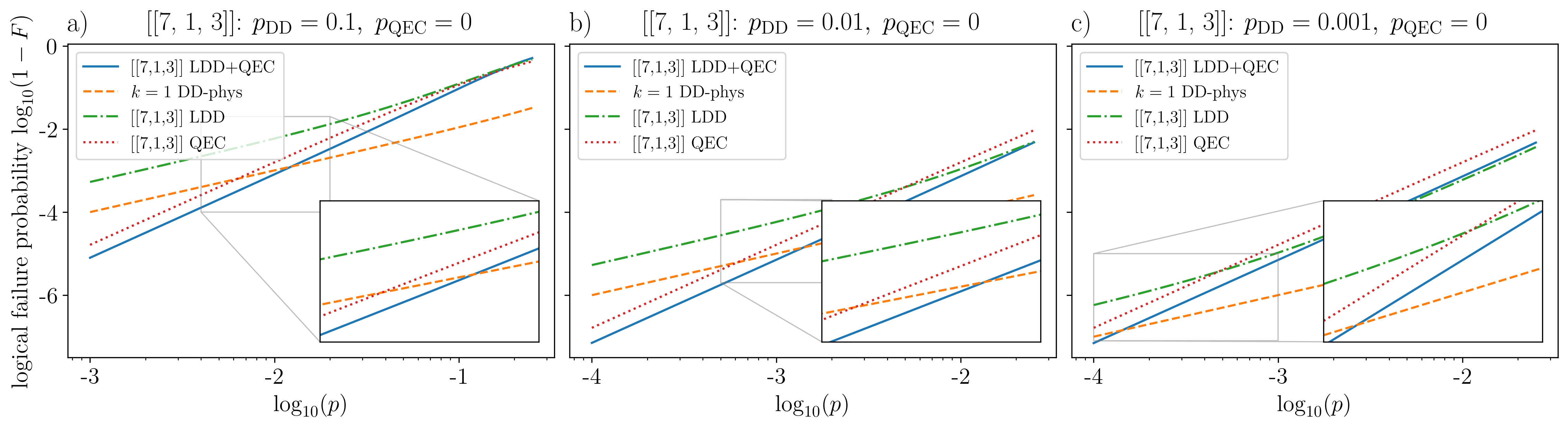} \caption{Logical failure probability for our four strategies using the $[[7,1,3]]$ code and the decoding map of \cref{ss:713-Decoding-Map}, shown as a function of the physical Pauli error probability $p$ for (a) $\pDD=0.1$, (b) $\pDD=0.01$, and (c) $\pDD = 0.001$, with perfect recovery ($\pQEC=0$).
For sufficiently small $p$, LDD+QEC achieves the lowest logical failure probability among the encoded strategies, in agreement with \cref{thm:Hyb.vs.QEC-asymptotics}.}
  \label{fig:7_1_3_comparison_Theorem2_setting}
\end{figure*}

\section{Numerical Calculations}
\label{sec:numerics}

This section numerically evaluates the closed-form fidelity expressions derived in \cref{sec:theorem-2-ec,sec:QED} and uses the resulting plots to (i) illustrate the theorem regimes and (ii) explore parameter settings beyond the small-$p$ assumptions that underlie parts of the asymptotic analysis.
Throughout this section we compare four memory strategies. In the QEC numerics these are DD-phys, LDD-only, QEC-only, and LDD+QEC; in the QED numerics, QEC-only and LDD+QEC are replaced by QED-only and LDD+QED. As noted in \cref{sec:QED}, LDD-only in the detection setting omits postselection and therefore is not obtained by setting $\pQED=1$ in the hybrid formulas.

These numerical calculations assume the same phenomenological effective error model as in the rest of this paper. In particular, they do not attempt to simulate any specific sequence of physical pulses or any microscopic physical noise model. Instead, they numerically compute the logical failure probability as a function of the model parameters, such as the QEC code, $p$, $\pDD$, and $\pQEC$.

\begin{figure*}[th]
  \centering
 \includegraphics[width=\textwidth]{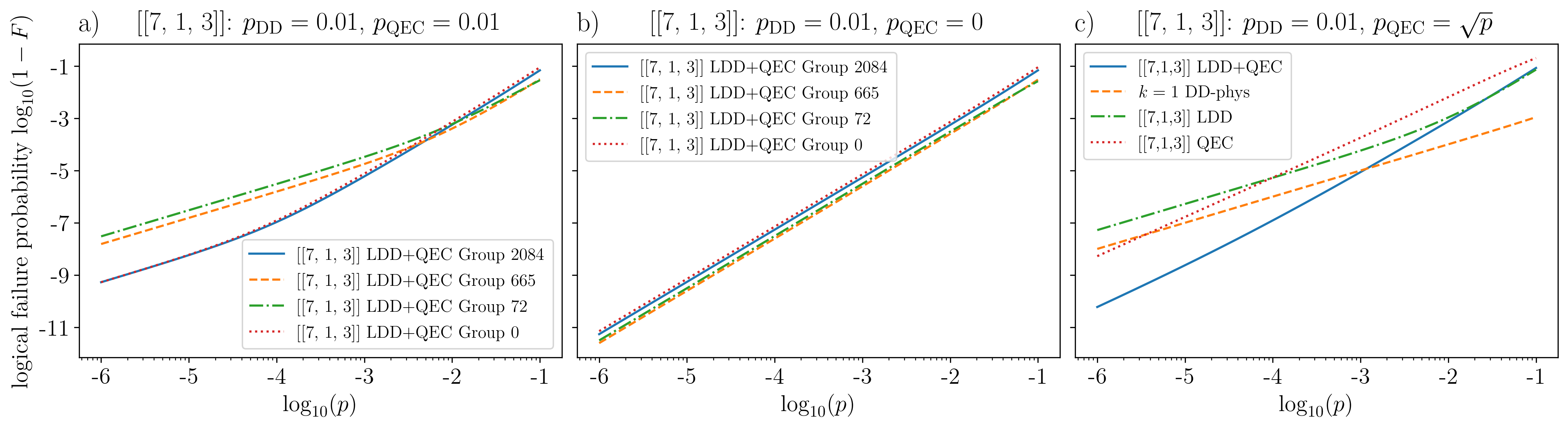}
 \caption{Panels (a) and (b): comparison of LDD+QEC performance for different choices of LDD generators using the $[[7,1,3]]$ code at fixed $\pDD=0.01$.  (a) Imperfect recovery $\pQEC=0.01$. (b) Perfect recovery $\pQEC=0$.
The generating sets shown are LDD group 2084, generated by $\langle \texttt{YXYXYXY},  \texttt{XZXZXZX}\rangle$; LDD group 665, generated by $\langle \texttt{XIIYYZZ},  \texttt{ZIIXXYY}\rangle$; LDD group 72, generated by $\langle \texttt{XXXIIII},  \texttt{ZZZIIII}\rangle$; and the standard choice, $\langle \texttt{XXXXXXX},  \texttt{ZZZZZZZ}\rangle$.
The identity of the best-performing LDD group depends on both $p$ and $\pQEC$, illustrating that LDD should be co-designed with the recovery map and the expected recovery imperfection level. (c) Protocol comparison for $\pQEC=\sqrt{p}$.
Although $\pQEC=\sqrt{p}$ is outside the perfect-recovery setting of \cref{thm:Hyb.vs.QEC}, the hybrid LDD+QEC protocol remains favorable in the small-$p$ regime for the parameters shown.}
  \label{fig:7_1_3_LDD+QEC_comparing_LDD_generating_sets}
\end{figure*}

\subsection{Methodology}
\label{sss:Methodology}

All plots are generated by direct numerical evaluation of the analytic formulas for the fidelities $\FDD$, $\FLDD$, $\FQEC$, and $\FHyb$ [see \cref{eq:F2.general,eq:1-F3.general,eq:1-F4.general.b,eq:1-F1.general}],
together with the corresponding QED conditional-fidelity formulas for $\FQED$ and $\FHybD$
[\cref{eq:QED.QED.1-F,eq:QED.Hyb.1-F}],
and, where relevant, the acceptance-probability formulas
[\cref{eq:QED.QED.PA,eq:QED.Hyb.PA}]
derived in \cref{sec:QED}.
We implement the relevant weight enumerator polynomials for the tagged Pauli subsets defined in \cref{tab:tags,tab:tags-QED} and evaluate the resulting polynomial expressions as functions of the physical error probability $p$, the DD suppression parameter $\pDD$, and the phenomenological recovery (or detection) failure rate $\pQEC$ (or $\pQED$).
The polynomial arithmetic is handled using NumPy's \texttt{Polynomial} class~\cite{numpy_poly_doc}.

To compare strategies in the $(\pDD,\pQEC)$ plane we use the logical failure probability
\begin{equation}
  \epsilon \coloneqq 1-F,
\end{equation}
and plot the relative improvement of the hybrid method over a comparator strategy,
\begin{equation}
  R \coloneqq \log_{10}\frac{\epsilon_{\mathrm{comp}}}{\epsilon_{\mathrm{hyb}}} .
  \label{eq:relative-advantage}
\end{equation}
Here $R>0$ indicates that LDD+QEC has lower logical failure probability than the comparator, while $R<0$ indicates the opposite.

Since most of the expressions we evaluate in our numerical calculations are ratios of polynomials with integer coefficients, the resulting values can, in principle, be computed and compared exactly for rational input parameters without any numerical approximations. For simplicity, however, we use standard double-precision floating-point arithmetic, which introduces small numerical errors. These errors can affect some claims of optimality and degeneracy. For example, when numerical calculations yield identical fidelities for different DD groups or decoder choices, the exact analytical fidelities (under the assumption of our model) may either be truly identical or differ by a small amount. As a reminder of this effect, we use the wording ``up to numerical precision'' when presenting the corresponding results.

\subsection{Steane \texorpdfstring{$[[7,1,3]]$}{[[7,1,3]]} code}
\label{sec:713}

We first use the $[[7,1,3]]$ Steane code~\cite{Steane:1997active,RyanAnderson:2021prx} with the recovery map specified in \cref{ss:713-Decoding-Map}.
This recovery corrects all single-qubit Pauli errors and additionally corrects a subset of weight-2 errors (42 nontrivial weight-2 Pauli errors for the chosen decoding map).
Unless stated otherwise, the ``standard'' LDD group is generated by the conventional logical operators $\langle X_L,Z_L\rangle=\langle \texttt{XXXXXXX},  \texttt{ZZZZZZZ}\rangle$.

\subsubsection{Ideal recovery regime \texorpdfstring{($\pQEC=0$)}{pQEC=0}}
\Cref{fig:7_1_3_comparison_Theorem2_setting} shows the logical failure probability as a function of the physical Pauli error probability $p$ for three suppression strengths $\pDD\in\{0.1,0.01,0.001\}$ in the ideal-recovery setting $\pQEC=0$.
Each panel compares four strategies: physical DD on a single unencoded qubit (DD-phys), LDD-only on the encoded block, QEC-only, and the hybrid LDD+QEC protocol.

Two qualitative features are apparent.
First, in the low-noise regime the encoded strategies (QEC-only and LDD+QEC) eventually outperform the uncorrected strategies (DD-phys and LDD-only) as $p$ decreases.
This is because with $\pQEC=0$ the recovery map removes all errors in the correctable set exactly, so both QEC-only and LDD+QEC fail only due to {uncorrectable} errors; consequently, their small-$p$ behavior is governed by the minimum weight of an uncorrectable error for the chosen recovery map, leading to a higher-order power-law scaling in $p$.
By contrast, DD-phys and LDD-only do not include an active correction step, so their logical failure probabilities remain dominated by lower-weight errors and scale as $O(p)$ (up to $\pDD$-dependent prefactors).
The resulting separation in slopes on the log-log plots explains why the encoded strategies dominate as $p\to 0$ in every panel.

Second, within the encoded strategies, the hybrid protocol consistently improves upon QEC-only in the plotted range.
Importantly, this ordering is not a small-$p$ statement: \cref{thm:Hyb.vs.QEC} provides an exact criterion at $\pQEC=0$ for each fixed $p$, given by \cref{eq:F1.vs.F4}.
Recall that this criterion means that LDD+QEC is advantageous precisely when the suppressed sector contains a larger fraction of uncorrectable errors than the unsuppressed sector, so that suppressing that sector preferentially targets the errors that survive perfect recovery.
In the present Steane-code example (with the specified recovery map and LDD choice), this mechanism manifests visually as the blue curve lying below the red curve throughout the displayed $p$ range.
In the small-$p$ limit, the same mechanism reduces to a constant-factor (prefactor) improvement: QEC-only and LDD+QEC share the same leading power of $p$ set by the minimum uncorrectable weight, but LDD reduces the coefficient by suppressing a subset of the uncorrectable contributions.

Finally, the comparison with DD-phys highlights the role of $\pDD$.
At larger physical error rates, DD-phys can yield a lower logical failure probability than the encoded strategies, because it directly suppresses the physical error channel while the finite-distance code and recovery map admit many higher-weight uncorrectable errors whose total probability grows rapidly with $p$.
Decreasing $\pDD$ strengthens DD suppression and therefore shifts the orange DD-phys curve downward, pushing the crossover with the encoded curves to smaller values of $p$ (and, in the formal limit $\pDD\to 0$ of perfect suppression in this effective model, favoring DD-phys over an increasingly wide $p$ range).
The LDD-only strategy remains uncompetitive in these plots because, without an error-correction step, encoding alone does not prevent low-weight errors from producing logical failure.

\subsubsection{Dependence on the LDD generating set}
The performance of LDD+QEC depends not only on the code and the recovery map, but also on the specific representatives chosen for the logical generators that define the LDD group.
For a single-logical-qubit stabilizer code, an LDD group is specified by a choice of $(X_L,Z_L)$ in the normalizer.
Multiplying either representative by an element of the stabilizer yields an equivalent logical action, but it can change the physical support of the applied pulses and therefore change which physical Pauli errors commute with the decoupling group (and hence which Pauli errors lie in the suppressed versus unsuppressed sectors in our effective model).
For the $[[7,1,3]]$ code, the stabilizer has size $2^{n-k}=64$, so there are $64^2=4096$ stabilizer-equivalent choices of $(X_L,Z_L)$ and thus many distinct candidate LDD groups to consider.

\Cref{fig:7_1_3_LDD+QEC_comparing_LDD_generating_sets} illustrates this dependence by comparing several representative LDD choices at fixed $\pDD=0.01$.
The plotted generating sets (defined in the caption) are selected to be optimal (up to numerical precision) in different noise regimes among the $4096$ possibilities.
The left and right panels use imperfect recovery ($\pQEC=0.01$ and $\pQEC=\sqrt{p}$, respectively), whereas the center panel uses perfect recovery ($\pQEC=0$).
The qualitative difference between the panels can be understood by identifying which error mechanisms dominate the logical failure probability in each setting.

In \cref{fig:7_1_3_LDD+QEC_comparing_LDD_generating_sets}(a), recovery is imperfect, so even correctable physical errors can lead to logical failure with probability proportional to $\pQEC$.
In particular, for small $p$ the dominant contribution can come from events in which a low-weight (typically weight-$1$) Pauli error occurs during the memory interval and the recovery step fails; schematically, these contributions scale like $\pQEC p$ and can dominate uncorrectable-error contributions that scale like $p^2$ when $p\ll \pQEC$.
In this low-$p$ regime, the best-performing LDD choices are therefore those that most strongly suppress the most likely low-weight errors and (to the extent possible) also suppress the subset of low-weight uncorrectable errors singled out by the particular decoding map.
This is the role played by LDD group 2084 (defined in \cref{tab:LDD-Groups}) in \cref{fig:7_1_3_LDD+QEC_comparing_LDD_generating_sets}(a).
As $p$ increases into an intermediate regime, multi-qubit errors become more probable and the logical failure probability becomes increasingly influenced by higher-weight uncorrectable contributions rather than by recovery failures on otherwise correctable errors.
Accordingly, the optimal LDD choice shifts toward groups that suppress a larger share of the uncorrectable (and typically higher-weight) errors relevant to the chosen recovery map, even if they do not prioritize suppressing every single-qubit error.
This shift is exemplified by the emergence of LDD group 665 as the best performer over much of the intermediate-$p$ range in \cref{fig:7_1_3_LDD+QEC_comparing_LDD_generating_sets}(a).
At still larger $p$, additional groups can become optimal within LDD+QEC (see \cref{tab:LDD-Groups}), although in that regime the encoded strategies may cease to be competitive with DD-phys (see \cref{fig:7_1_3_comparison_Theorem2_setting}), so the optimal choice of LDD group is less operationally relevant.

\begin{table}[ht]
\centering
\renewcommand{\arraystretch}{1.2}
\setlength{\tabcolsep}{7pt}
\begin{tabular}{|c c c|}
\hline
\multicolumn{3}{|c|}{\textit{Low-$p$ regime}} \\
\hline
\textit{Index} & $\mathbf{X_L}$ & $\mathbf{Z_L}$ \\
\hline
2084 & \texttt{YXYXYXY} & \texttt{XZXZXZX} \\
2308 & \texttt{ZXZXZXZ} & \texttt{YZYZYZY} \\
2605 & \texttt{YXYYXYX} & \texttt{XZXXZXZ} \\
2885 & \texttt{ZXZZXZX} & \texttt{YZYYZYZ} \\
3126 & \texttt{YYXXYYX} & \texttt{XXZZXXZ} \\
3462 & \texttt{ZZXXZZX} & \texttt{YYZZYYZ} \\
3647 & \texttt{YYXYXXY} & \texttt{XXZXZZX} \\
4039 & \texttt{ZZXZXXZ} & \texttt{YYZYZZY} \\
\hline
\multicolumn{3}{|c|}{\textit{Intermediate-$p$ regime}} \\
\hline
\textit{Index} & $\mathbf{X_L}$ & $\mathbf{Z_L}$ \\
\hline
665 & \texttt{XIIYYZZ} & \texttt{ZIIXXYY} \\
721 & \texttt{XIIZZYY} & \texttt{ZIIYYXX} \\
\hline
\multicolumn{3}{|c|}{\textit{High-$p$ regime}} \\
\hline
\textit{Index} & $\mathbf{X_L}$ & $\mathbf{Z_L}$ \\
\hline
72 & \texttt{XXXIIII} & \texttt{ZZZIIII} \\
\hline
\end{tabular}
\caption{Exhaustive list of LDD generating sets that attain the minimum logical failure probability (up to numerical precision) of LDD+QEC in different $p$ regimes for the imperfect-recovery setting $\pQEC>0$ in \cref{fig:7_1_3_LDD+QEC_comparing_LDD_generating_sets}(a).}
\label{tab:LDD-Groups}
\end{table}

\Cref{fig:7_1_3_LDD+QEC_comparing_LDD_generating_sets}(b) concerns the opposite limit: perfect recovery ($\pQEC=0$).
In this case all correctable errors contribute zero logical failure probability, and the dominant contributions come from uncorrectable errors (beginning at the minimum uncorrectable weight for the chosen recovery map).
Consequently, LDD groups that primarily suppress errors that are already correctable provide little benefit, whereas groups that preferentially suppress the uncorrectable sector yield the largest improvement.
This explains why a single LDD choice (here, group 665) can dominate over nearly the entire $p$ range in \cref{fig:7_1_3_LDD+QEC_comparing_LDD_generating_sets}(b).
In other words, the best LDD generators depend on $\pQEC$ because $\pQEC$ changes which error classes are most responsible for logical failure.

Finally, \cref{tab:LDD-Groups} summarizes this behavior for the imperfect-recovery setting by listing all LDD choices that attain the minimum logical failure probability (up to numerical precision) within the low-, intermediate-, and high-$p$ regimes of \cref{fig:7_1_3_LDD+QEC_comparing_LDD_generating_sets}(a).
In our scan over the $4096$ stabilizer-equivalent possibilities, these regimes typically contain small families of degenerate (up to numerical precision) optima; the table reports an exhaustive list of those choices.
Overall, these results reinforce that the best LDD group is the one that suppresses the errors that actually dominate logical failure for the chosen recovery map and the given (or expected) recovery imperfection level.

\subsubsection{Imperfect recovery}
Realistic implementations of QEC are imperfect, and even phenomenologically small recovery failure rates can qualitatively change which error mechanisms dominate the logical failure probability.
When $\pQEC>0$, an error that would otherwise be correctable can still lead to logical failure if the recovery step fails.
Consequently, the logical failure probability can receive contributions of order $\pQEC p$ from single-qubit errors followed by recovery failure, in addition to the purely uncorrectable-error contributions that scale as $p^\alpha$ in the perfect-recovery model.
This motivates exploring hybrid performance outside the idealized $\pQEC=0$ setting of \cref{thm:Hyb.vs.QEC}, and more broadly, understanding how the interplay between $(p,\pDD,\pQEC)$ shapes the relative ordering of strategies.

As a representative example, \cref{fig:7_1_3_LDD+QEC_comparing_LDD_generating_sets}(c) sets $\pQEC=\sqrt{p}$ to deliberately model a regime in which recovery faults vanish with $p$ but do not become negligible fast enough for the perfect-recovery asymptotics to apply.
In particular, for the present decoding map the perfect-recovery logical failure probability is governed by uncorrectable errors beginning at weight $\alpha$ (leading to $O(p^\alpha)$ behavior), whereas the $\pQEC p$ contribution scales as $p^{3/2}$ under $\pQEC=\sqrt{p}$ and can dominate the small-$p$ behavior when $\alpha\ge 2$.
\Cref{fig:7_1_3_LDD+QEC_comparing_LDD_generating_sets}(c) shows that, even in this imperfect-recovery setting, LDD+QEC remains favorable in the low-noise regime for the parameters shown.
Intuitively, the hybrid protocol benefits from both mechanisms: QEC removes errors in the correctable set when recovery succeeds, while LDD reduces the rate at which errors (including low-weight ones that become harmful when recovery fails) occur in the first place.

\begin{figure*}[th]
  \centering
\includegraphics[width=\textwidth]{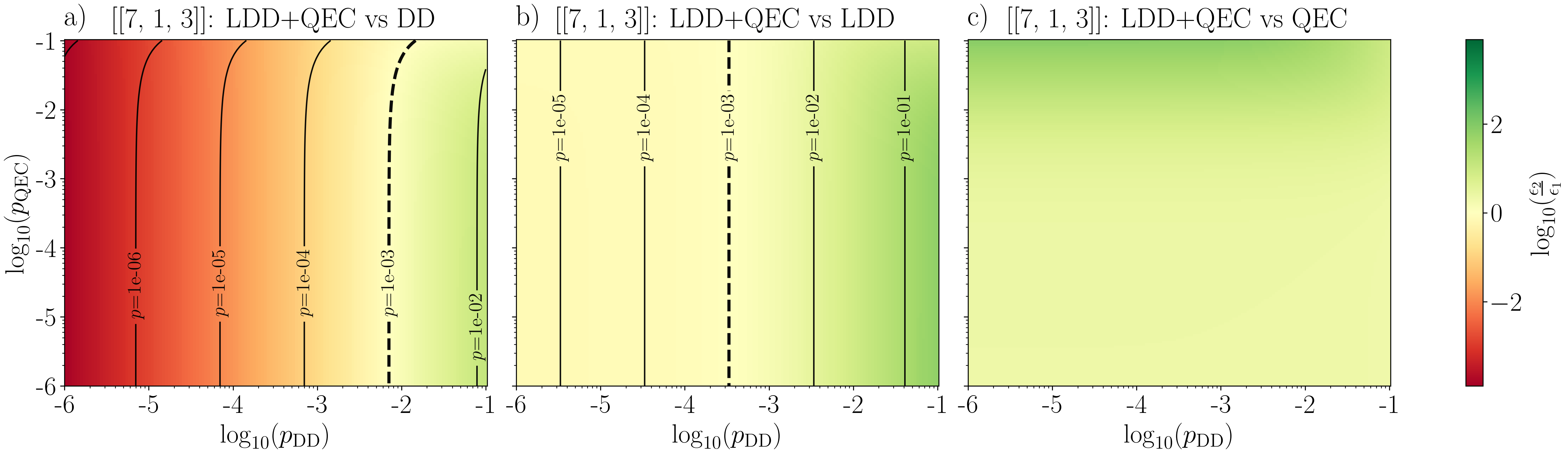}
\caption{Relative improvement $R=\log_{10}(\epsilon_{\mathrm{comp}}/\epsilon_{\mathrm{hyb}})$ for the $[[7,1,3]]$ code at fixed $p=10^{-3}$, comparing LDD+QEC against (a) DD-phys, (b) LDD-only, and (c) QEC-only.
Green (red) indicates parameter regions where LDD+QEC has lower (higher) logical failure probability than the comparator.
Contours mark the boundary $R=0$ for several values of $p$ (as labeled), illustrating how the advantage region evolves as $p$ decreases.
In (c), LDD+QEC dominates throughout the $(\pDD,\pQEC)$ plane for all values of $p \in [10^{-6},0.1]$.}
  \label{fig:7_1_3_relative_advantage_comparison}
\end{figure*}
To map where this advantage persists, \cref{fig:7_1_3_relative_advantage_comparison} sweeps the parameters $(\pDD,\pQEC)$ at fixed $p=10^{-3}$ and compares LDD+QEC pairwise against DD-phys, LDD-only, and QEC-only using the relative-improvement metric $R$ defined in \cref{eq:relative-advantage}.
Green regions ($R>0$) indicate parameter settings where the hybrid method has lower logical failure probability than the comparator, while red regions ($R<0$) indicate the opposite.
The contour lines mark the boundary $R=0$ for several other values of $p$ (as labeled) and therefore summarize how the advantage region (always to the right of the contour line) evolves as the physical noise strength decreases.

Two trends are particularly relevant.
First, at fixed $p$ there can be a region in which DD-phys outperforms encoded strategies when DD is sufficiently effective (small $\pDD$), reflecting the fact that strong physical suppression can be more valuable than a finite-distance encoding when the physical error rate is not yet in the regime where the code distance yields a large separation in logical failure.
Second, as $p$ decreases, the $R=0$ contours shift toward smaller $\pDD$ and/or smaller $\pQEC$, indicating that the region in which LDD+QEC outperforms the comparator expands in the low-noise regime.
This behavior is consistent with the design principle emphasized throughout this work: hybrid-protocol benefit becomes most pronounced when logical decoupling suppresses the error classes that dominate logical failure for the chosen recovery map, including the low-weight errors that become relevant when recovery is imperfect.

\subsection{\texorpdfstring{$[[13,1,3]]$}{[[13,1,3]]} code}
\label{sec:1313}

We repeat the same numerical analysis for a $[[13,1,3]]$ stabilizer code~\cite{YuBierbrauerDongChenOh:2013aa} to emphasize that the benefit of combining LDD with QEC depends on the compatibility between (i) the code and recovery map (which determine which Pauli errors are correctable \textit{vs.}\ uncorrectable) and (ii) the chosen LDD generating set (which determines which Pauli errors are placed in the suppressed sector of the effective noise model).
In the notation of \cref{thm:Hyb.vs.QEC-asymptotics}, let $\alpha$ denote the minimum weight of an uncorrectable Pauli error for the chosen recovery map, and let $\beta$ denote the minimum weight among uncorrectable Pauli errors that are suppressed by LDD (i.e., anticommute with at least one element of the logical decoupling group $\tG$).
In the plots below we take $\tG$ to be generated by the logical operators $\langle \texttt{IIIYYYYZIXYXY},  \texttt{XXXXXXXXXIIII}\rangle$.
For this choice (together with the fixed recovery map used in our numerics), the minimum-weight uncorrectable errors commute with $\tG$ and therefore remain unsuppressed, so $\beta>\alpha$.
As we discuss next, this directly impacts the small-$p$ relationship between LDD+QEC and QEC-only.

\subsubsection{Ideal recovery regime \texorpdfstring{($\pQEC=0$)}{pQEC=0}}
\begin{figure*}[th]
  \centering
 \includegraphics[width=\textwidth]{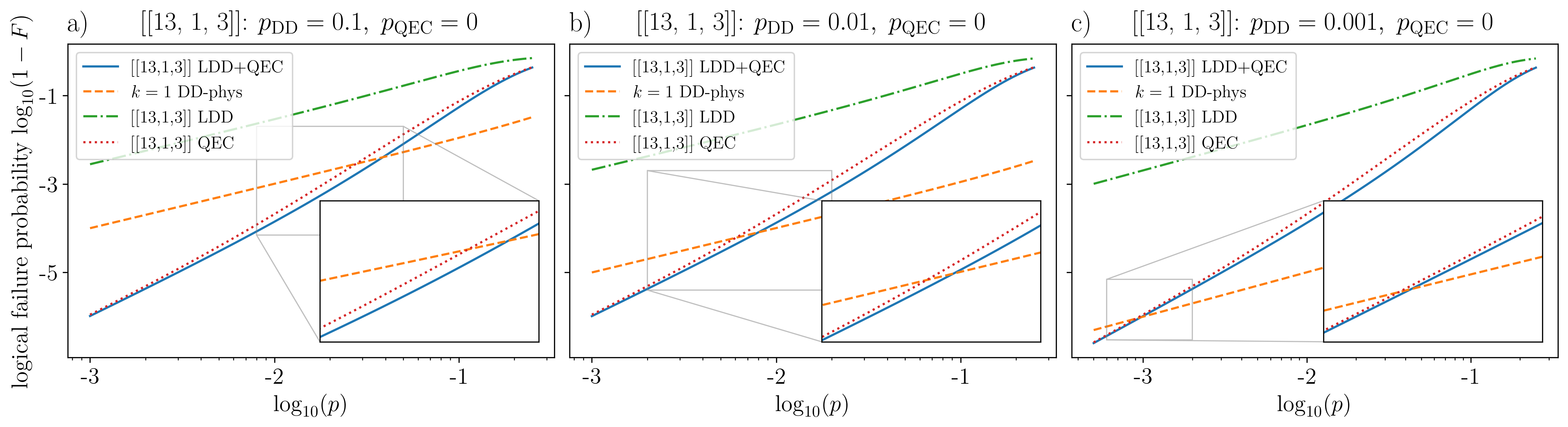}
 \caption{Logical failure probability for our protocols using a $[[13,1,3]]$ code with perfect recovery ($\pQEC=0$) and (a) $\pDD=0.1$, (b) $\pDD=0.01$, and (c) $\pDD = 0.001$.
For the LDD generators used here, we have $\beta>\alpha$ in the notation of \cref{thm:Hyb.vs.QEC-asymptotics} (no minimal-weight uncorrectable error lies in the suppressed sector), i.e., the sufficient low-$p$ condition $\beta=\alpha$ is not satisfied, and correspondingly the LDD+QEC and QEC-only curves approach one another as $p\to 0$.
In all other respects, the protocol ordering and overall scaling behavior are qualitatively similar to those of the Steane code shown in \cref{fig:7_1_3_comparison_Theorem2_setting}.}
  \label{fig:13_1_3_comparison_Theorem2_setting}
\end{figure*}
\Cref{fig:13_1_3_comparison_Theorem2_setting} shows the logical failure probability as a function of $p$ for representative suppression strengths $\pDD\in\{0.1,0.01,0.001\}$ with perfect recovery ($\pQEC=0$).
As in the Steane-code example, both QEC-only and LDD+QEC exhibit the higher-order small-$p$ scaling associated with a finite-distance code and ideal recovery: in this regime logical failure is driven by uncorrectable errors rather than by correctable single-qubit errors.
However, in contrast to the $[[7,1,3]]$ case, the LDD+QEC and QEC-only curves approach one another as $p\to 0$.

This convergence is expected given that $\beta>\alpha$ for the chosen LDD generators, as mentioned above.
Since no minimum-weight uncorrectable error lies in the suppressed sector, LDD cannot reduce the leading uncorrectable contribution that sets the QEC-only logical failure rate at small $p$; it can only suppress higher-weight uncorrectable errors and therefore modifies subleading corrections.
As a result, the hybrid protocol's improvement is most visible at intermediate $p$, before the asymptotic small-$p$ regime is fully dominated by the weight-$\alpha$ uncorrectable errors.

Varying $\pDD$ changes the overall strength of suppression and therefore shifts the crossovers with DD-phys and LDD-only, but it does not alter the qualitative small-$p$ conclusion in this example: when $\beta>\alpha$, decreasing $\pDD$ cannot produce a leading-order separation between LDD+QEC and QEC-only because the dominant minimum-weight uncorrectable errors remain unsuppressed for this LDD choice.
This $[[13,1,3]]$ case therefore provides a concrete illustration of the co-design message: hybrid advantage is not automatic and can depend critically on whether the chosen LDD generators suppress the errors that control logical failure for the specific code and recovery map.

\subsubsection{Imperfect recovery}
\begin{figure}[t]
  \hspace{0cm}{\includegraphics[width=\columnwidth]{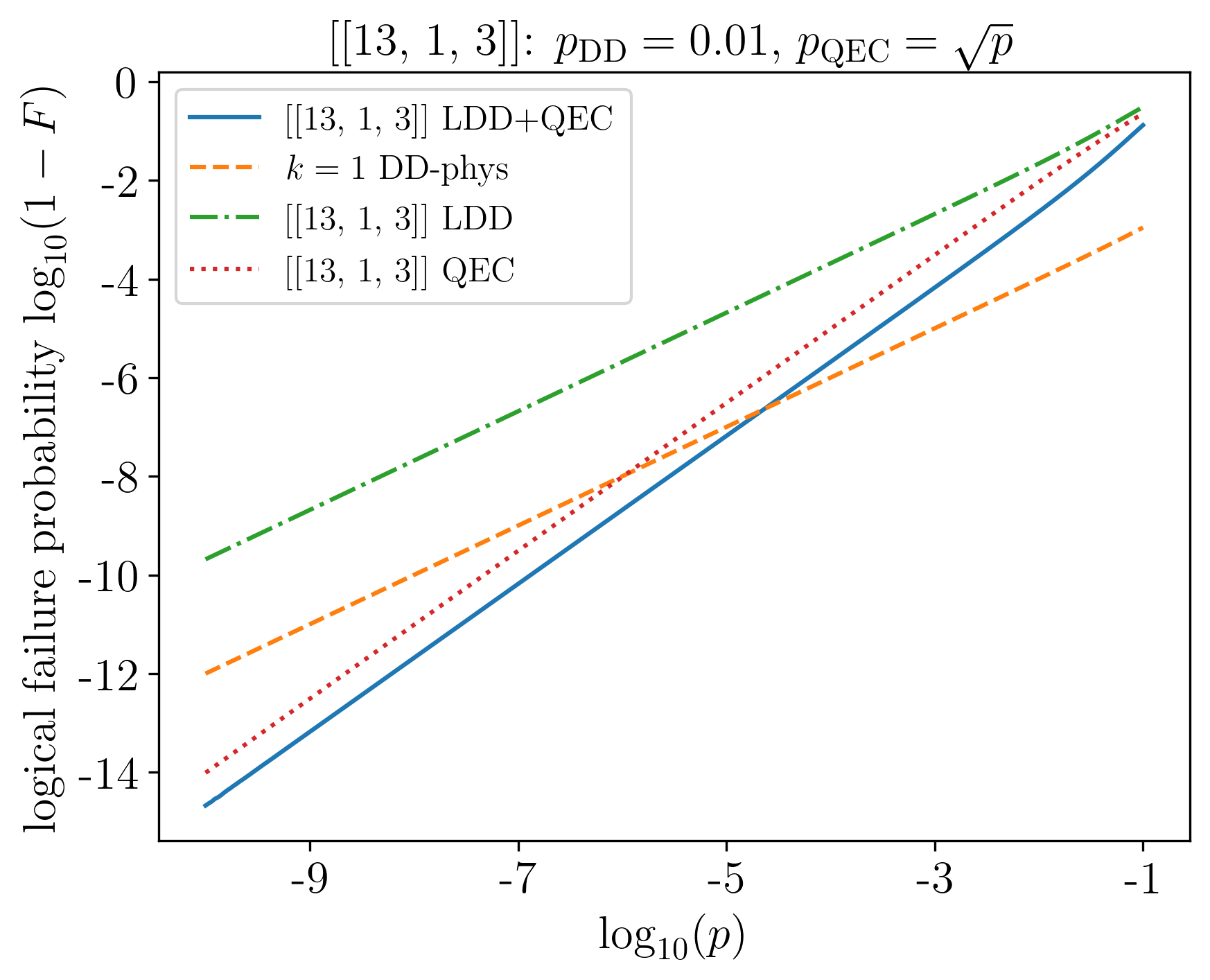}}
  \caption{Logical failure probability for a $[[13,1,3]]$ code with $\pQEC=\sqrt{p}$, comparing physical DD, LDD-only, QEC-only, and the hybrid LDD+QEC protocol.
The results are qualitatively similar to those for the Steane code [see \cref{fig:7_1_3_LDD+QEC_comparing_LDD_generating_sets}(c)].}
  \label{fig:13_1_3_w_pQEC_equal_sqrt_p}
\end{figure}
As we did for the Steane code in \cref{fig:7_1_3_LDD+QEC_comparing_LDD_generating_sets}(c), we now examine the regime in which $\pQEC$ is not negligible relative to $p$.
\Cref{fig:13_1_3_w_pQEC_equal_sqrt_p} illustrates this regime by again setting $\pQEC=\sqrt{p}$.
Recall that in this case, the $\pQEC p$ contribution scales as $p^{3/2}$ and can compete with (or dominate over) the uncorrectable-error contribution $p^\alpha$ when $\alpha\ge 2$.
In this imperfect-recovery setting, LDD can improve performance not only by suppressing uncorrectable errors, but also by reducing the rate of low-weight errors whose impact is amplified by recovery failures.
Consequently, LDD+QEC is again favorable in the low-$p$ regime even though the sufficient condition in \cref{thm:Hyb.vs.QEC-asymptotics} does not hold for the chosen LDD generators.
This setting illustrates that imperfect recovery can change which errors dominate logical failure and can qualitatively modify the comparison between strategies, restoring the hybrid protocol's advantage for sufficiently small $p$.

\begin{figure*}[th]
  \centering
\includegraphics[width=\textwidth]{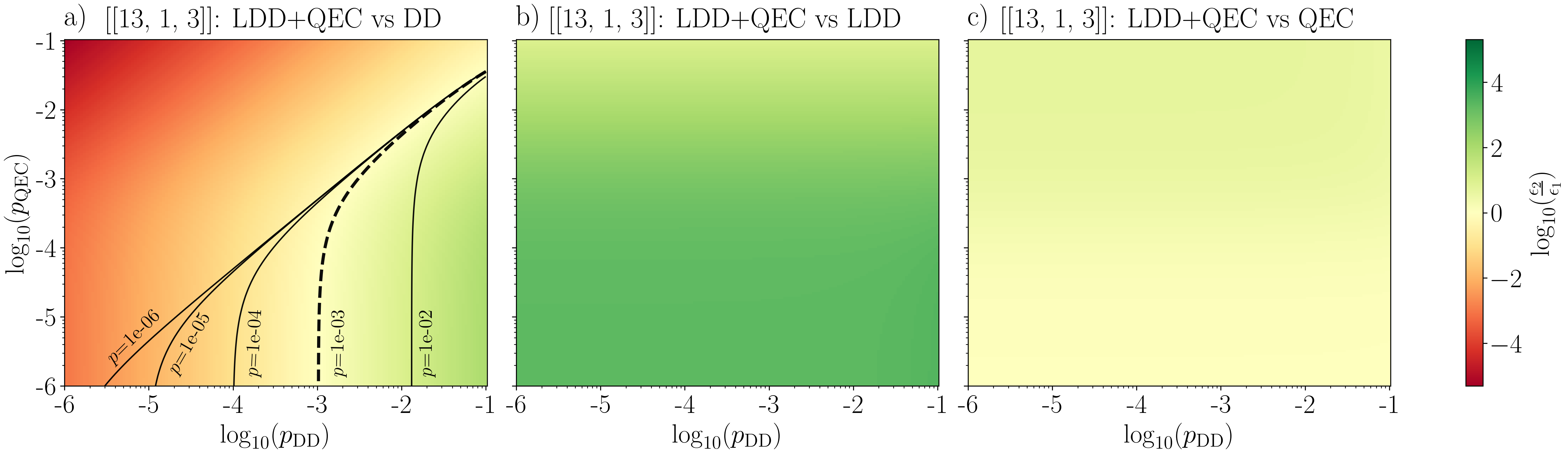}
\caption{Relative improvement metric $R$ for the $[[13,1,3]]$ code, as in \cref{fig:7_1_3_relative_advantage_comparison}.
In contrast to the $[[7,1,3]]$ example, the LDD+QEC advantage region relative to physical DD approaches a limiting boundary as $p$ decreases, reflecting sensitivity to the choice of code and LDD generators. Panels (b) and (c) remain positive throughout the plotted domain, indicating that LDD+QEC outperforms LDD-only and QEC-only for these parameters.}
  \label{fig:13_1_3_relative_advantage_comparison}
\end{figure*}
In analogy to \cref{fig:7_1_3_relative_advantage_comparison}, the relative-improvement metric $R$ is plotted in \cref{fig:13_1_3_relative_advantage_comparison} in the $(\pDD,\pQEC)$ plane at fixed $p=10^{-3}$; recall that $R>0$ indicates that LDD+QEC has lower logical failure probability than the comparator protocol.
In the plotted domain, LDD+QEC dominates both QEC-only and LDD-only across the full $(\pDD,\pQEC)$ range shown, indicating that adding LDD to the encoded cycle is beneficial once recovery imperfections are included for these parameters.
The comparison with DD-phys, however, exhibits a qualitatively different feature: the hybrid advantage region is separated from a DD-phys-dominant region by a boundary that persists as $p$ decreases, rather than sweeping monotonically across the entire plane as in the Steane code case [\cref{fig:7_1_3_relative_advantage_comparison}].
This boundary is captured by the contour lines (which mark $R=0$ for several values of $p$): as $p$ is reduced, the contours move but approach an apparent limiting curve, indicating that there remains a portion of the $(\pDD,\pQEC)$ plane where DD-phys outperforms the encoded hybrid strategy.
In contrast to the $[[7,1,3]]$ example, this behavior signals that for the present $[[13,1,3]]$ code, recovery map, and LDD generators, strengthening DD suppression (smaller $\pDD$) does not translate into an ever-expanding region of hybrid-protocol dominance relative to DD-phys.

The boundary observed when comparing LDD+QEC to DD-phys can be understood by inspecting the low-weight structure of the commutant $\tGp$.
Recall that in our model, DD/LDD suppresses precisely those Pauli errors that \textit{anticommute} with the decoupling group (their probabilities are multiplied by $\pDD$, followed by renormalization), whereas errors in $\tGp$ are left unsuppressed (up to the overall renormalization).
For the Steane-code choice $\tG=\langle X^{\otimes 7},Z^{\otimes 7}\rangle$, every single-qubit Pauli error anticommutes with at least one generator, so all weight-$1$ Pauli errors are suppressed by LDD.
In contrast, for the $[[13,1,3]]$ choice $\tG=\langle \texttt{IIIYYYYZIXYXY},  \texttt{XXXXXXXXXIIII}\rangle$, the commutant contains several weight-$1$ Paulis (e.g., $X_1,X_2,X_3,X_9,X_{10},X_{12},Y_{11},Y_{13}\in\tGp$), which therefore remain essentially unsuppressed by LDD.

This matters in the imperfect-recovery regime.
When $\pQEC>0$, a weight-$1$ error that produces a nontrivial syndrome can contribute to logical failure through a recovery fault: conditioned on a nontrivial syndrome, a decoder failure occurs with probability $\pQEC$ and induces a uniformly random logical Pauli, so the probability of a nontrivial logical fault is $1-4^{-k}$ (equal to $3/4$ for $k=1$).
Consequently, any \textit{unsuppressed} weight-$1$ errors in $\tGp$ generate an $O(\pQEC p)$ contribution to the hybrid logical failure probability that is only weakly dependent on $\pDD$.
By comparison, DD-phys on a single qubit suppresses all nontrivial single-qubit Pauli errors (in this effective model), leading to a logical failure probability that scales as $O(\pDD p)$ at small $p$.
Therefore, for sufficiently small $\pDD$, DD-phys can outperform LDD+QEC in a region determined by the relative coefficients of these linear-in-$p$ contributions, yielding a limiting boundary in the $(\pDD,\pQEC)$ plane as $p$ decreases, rather than an advantage region that expands without bound.

Overall, these results reinforce that hybrid-protocol benefit is not automatic; it depends on whether the chosen LDD group suppresses the error classes that dominate logical failure for the specific code and recovery model, including how recovery imperfections determine which errors are most consequential.

\begin{figure*}[t]
\includegraphics[width=\textwidth]{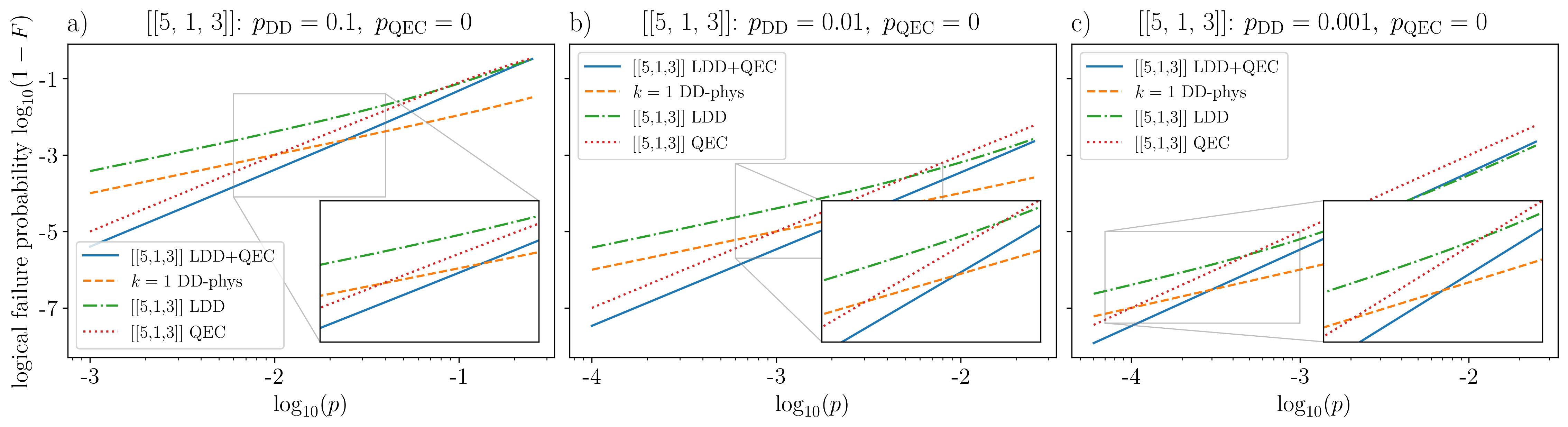} \caption{Logical failure probability for our four strategies using the $[[5,1,3]]$ code and the decoding map of \cref{ss:513-Decoding-Map}, shown as a function of the physical Pauli error probability $p$ for (a) $\pDD=0.1$, (b) $\pDD=0.01$, (c) $\pDD = 0.001$, with perfect recovery ($\pQEC=0$).
For sufficiently small $p$, LDD+QEC achieves the lowest logical failure probability among the encoded strategies, in agreement with \cref{thm:Hyb.vs.QEC-asymptotics}.}
  \label{fig:5_1_3_comparison_Theorem2_setting}
\end{figure*}

\subsection{\texorpdfstring{$[[5,1,3]]$}{[[5,1,3]]} code}

The $[[5,1,3]]$ code~\cite{BennettDiVincenzoSmolinWootters:1996aa,LaflammeMiquelPazZurek:1996aa,ChaoReichardt:2018aa} provides another distance-$3$ testbed for the LDD+QEC protocol.  We take the logical decoupling group to be generated by the standard logical Paulis,
$\langle X_L, Z_L \rangle = \langle XXXXX,  ZZZZZ\rangle$.
For the recovery map used in our numerics (\cref{ss:513-Decoding-Map}), the low-$p$ behavior is consistent with the sufficient-condition regime of \cref{thm:Hyb.vs.QEC-asymptotics}, so the hybrid and QEC-only curves separate in the expected direction as $p\to 0$.  We examine this behavior both in the ideal-recovery regime and in an imperfect-recovery setting.

\subsubsection{Ideal recovery regime \texorpdfstring{($\pQEC=0$)}{pQEC=0}}

\Cref{fig:5_1_3_comparison_Theorem2_setting} illustrates the ideal-recovery regime for the $[[5,1,3]]$ code: we take $\pDD\in\{0.1,0.01,0.001\}$ and set $\pQEC=0$.  As $p\to 0$, the finite distance of the code implies that QEC-based strategies exhibit higher-order scaling than unencoded DD.  In particular, consistent with \cref{thm:Hyb.vs.QEC-asymptotics}, we observe that the hybrid protocol achieves strictly lower logical failure probability than QEC-only in the sufficiently small-$p$ regime (for each fixed $\pDD<1$).  The crossover scale in $p$ at which this separation becomes visible depends on $\pDD$, because $\pDD$ controls the strength of suppression within the LDD-modified sector.

\subsubsection{Imperfect recovery regime \texorpdfstring{($\pQEC>0$)}{pQEC>0}}

We next move beyond the ideal-recovery regime by allowing recovery faults.  As a representative example, \cref{fig:5_1_3_w_pQEC_equal_sqrt_p} sets $\pQEC=\sqrt{p}$ at fixed $\pDD=0.01$.  This choice lies outside the hypothesis $\pQEC=0$ of \cref{thm:Hyb.vs.QEC-asymptotics} and can also fall outside the scaling regime in which QEC-only retains its ideal $O(p^\alpha)$ behavior.  Nevertheless, in the low-noise regime shown, the hybrid protocol remains the best-performing encoded strategy, motivating a systematic sweep over $(\pDD,\pQEC)$.

\Cref{fig:5_1_3_relative_advantage_comparison} maps the pairwise performance using the relative-improvement metric $R$ defined in \cref{eq:relative-advantage}, at fixed $p=10^{-3}$ with contours indicating the sign-change boundary for several values of $p$.  As $p$ decreases, the hybrid-advantage region expands (the contours move toward smaller $\pDD$ and/or smaller $\pQEC$), behavior that is qualitatively similar to the $[[7,1,3]]$ example and contrasts with cases where the boundary approaches a limiting curve as $p\to 0$ (as in the $[[13,1,3]]$ example discussed above).

\begin{figure}[t]
  \hspace{0cm}{\includegraphics[width=\columnwidth]{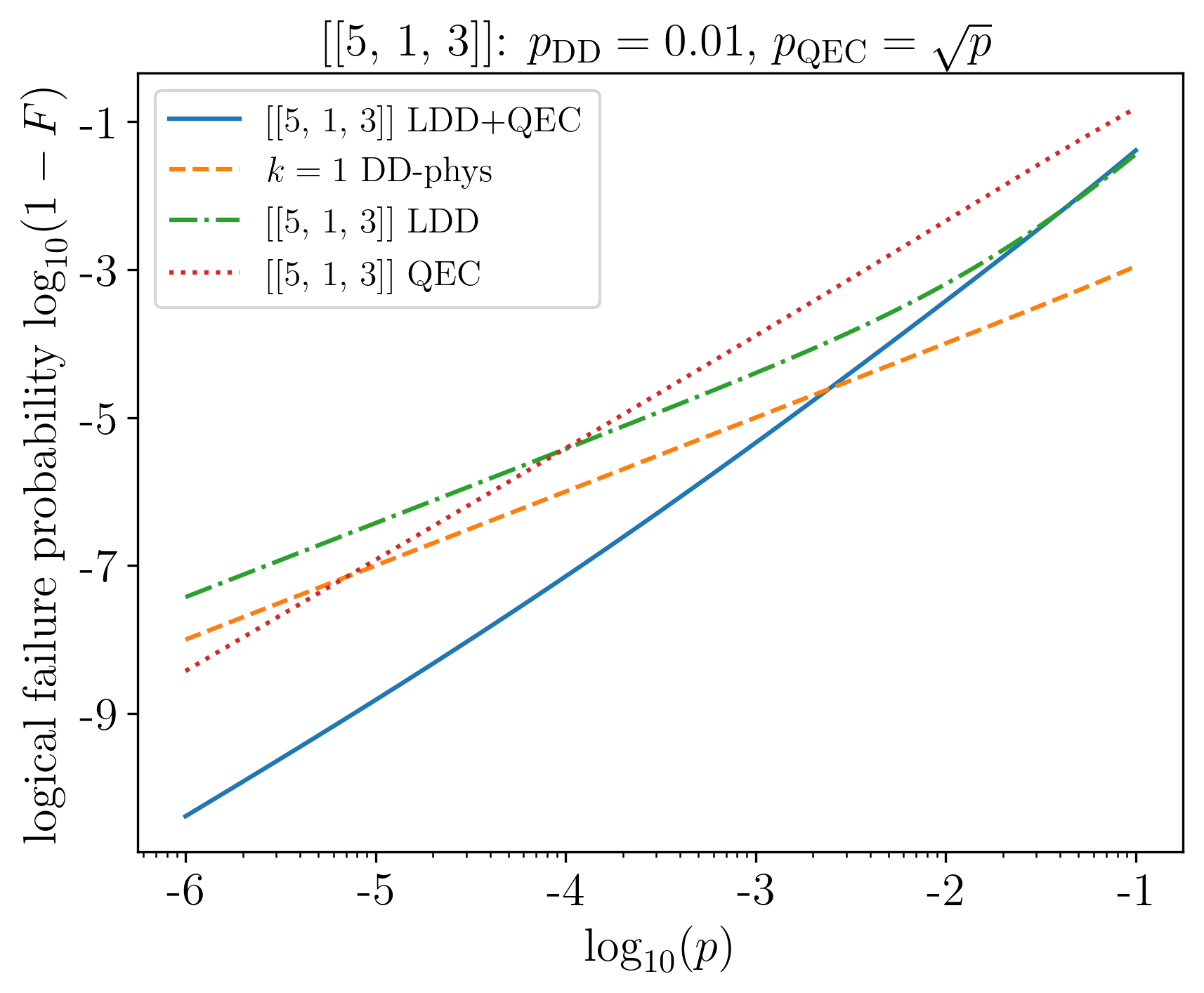}}
  \caption{Logical failure probability for a $[[5,1,3]]$ code with $\pQEC=\sqrt{p}$ at fixed $\pDD=0.01$, comparing physical DD, LDD-only, QEC-only, and the hybrid LDD+QEC protocol.
The hybrid protocol remains the best-performing encoded strategy in the low-noise regime shown, qualitatively similar to the corresponding $[[7,1,3]]$ imperfect-recovery comparison [see \cref{fig:7_1_3_LDD+QEC_comparing_LDD_generating_sets}(c)].}
  \label{fig:5_1_3_w_pQEC_equal_sqrt_p}
\end{figure}

\begin{figure*}[th]
  \centering
\includegraphics[width=\textwidth]{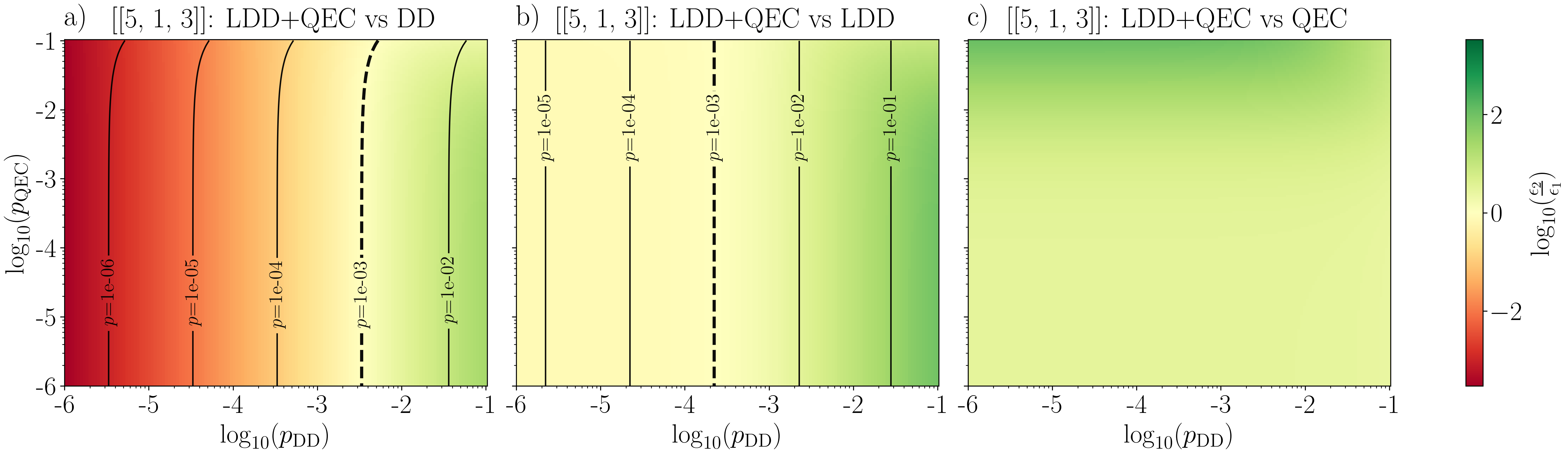}
\caption{Relative improvement metric $R$ for the $[[5,1,3]]$ code, as in \cref{fig:7_1_3_relative_advantage_comparison,fig:13_1_3_relative_advantage_comparison}.
Panels (a) and (b) show that, similarly to the $[[7,1,3]]$ example, the LDD+QEC advantage region relative to both physical DD and LDD-only continues to grow as $p$ decreases, indicating that the $[[5,1,3]]$ code is favorable for the LDD+QEC protocol within the present effective-noise and decoding model. Panel (c) remains positive throughout the plotted domain, indicating that LDD+QEC outperforms QEC-only for these parameters.}
  \label{fig:5_1_3_relative_advantage_comparison}
\end{figure*}

\begin{figure*}[t]
  \centering
\includegraphics[width=\textwidth]{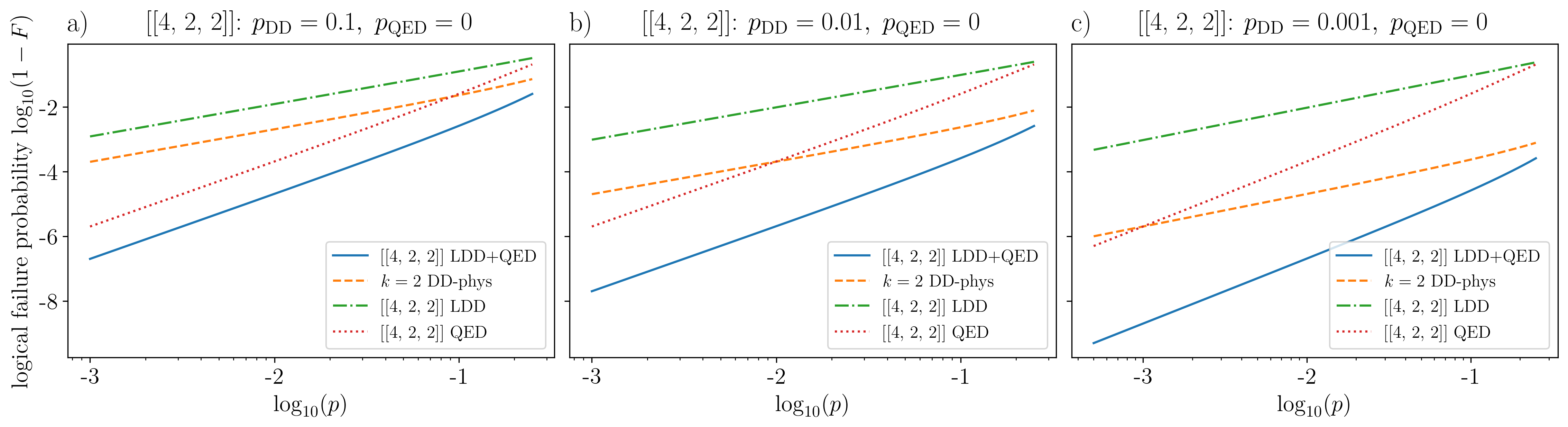}
\caption{Conditional logical failure probability for our four strategies using the $[[4,2,2]]$ code, shown as a function of the physical Pauli error probability $p$ for (a) $\pDD=0.1$, (b) $\pDD=0.01$, (c) $\pDD=0.001$, with perfect error detection ($\pQED=0$).
For sufficiently small $p$, LDD+QED achieves the lowest logical failure probability among the encoded strategies, in agreement with \cref{thm:QED.Hyb.vs.QED}.}
  \label{fig:4_2_2_comparison_Theorem2_setting}
\end{figure*}

\subsection{\texorpdfstring{$[[4,2,2]]$}{[[4,2,2]]} code (QED case)}
\label{sec:422}

We now turn to the QED setting and use the $[[4,2,2]]$ code as a testbed, motivated both by prior four-qubit error-detection experiments~\cite{CorcolesMagesanSrinivasanCrossSteffenGambettaChow:2015aa,Pokharel2024npj} and by the LDD+QED results of Ref.~\cite{vezvaee2025demonstrationhighfidelityentangledlogical}. Its stabilizer generators are $S=\langle XXXX, ZZZZ\rangle$, and we use the logical generators $\langle X_{L_1}, X_{L_2}, Z_{L_1}, Z_{L_2}\rangle=\langle XIIX, IIXX, IIZZ, ZIIZ\rangle$ to define the LDD sequence throughout.
We examine both the ideal- and imperfect-detection regimes in order to assess how this code interacts with LDD+QED within and beyond the hypotheses of \cref{thm:QED.Hyb.vs.QED}.

\subsubsection{Ideal detection regime \texorpdfstring{($\pQED=0$)}{pQED=0}}

\cref{fig:4_2_2_comparison_Theorem2_setting} compares the conditional logical failure probability (i.e., $1-F$ conditioned on acceptance) for the $[[4,2,2]]$ code in the perfect-detection setting $\pQED=0$, for $\pDD\in\{0.1,0.01,0.001\}$. Varying $\pDD$ primarily shifts the crossover locations between DD-based and QED-based strategies by changing the strength of suppression in the LDD-modified sector. In the low-noise regime, the numerical curves show that LDD+QED improves upon QED-only, consistent with the analytical comparison criterion of \cref{thm:QED.Hyb.vs.QED}.
In the plotted range, it also attains the smallest logical failure probability among the four strategies.

Because we are in the QED setting, one must also track the acceptance probability $P_A$. For DD-phys and LDD-only, $P_A=1$ because there is no postselection. For QED-only and LDD+QED, the expected number of trials required to obtain an accepted run is $1/P_A$. This acceptance-vs-restart trade-off is also central in more general schemes that combine postselection with active correction~\cite{PrabhuReichardt:2024aa}. \Cref{fig:4_2_2_probability_of_acceptance_comparison} shows that $P_{A,\mathrm{Hyb}}$ and $P_{A,\mathrm{QED}}$ both approach $1$ as $p\to 0$, with $P_{A,\mathrm{Hyb}}$ approaching $1$ more rapidly; the dependence on $\pDD$ is comparatively weak. Thus the shot-overhead penalty of postselection becomes mild in the low-noise regime and is smaller for the hybrid protocol than for QED-only in the plotted parameter range.

\begin{figure*}[t]
  \centering
  \includegraphics[width=\textwidth]{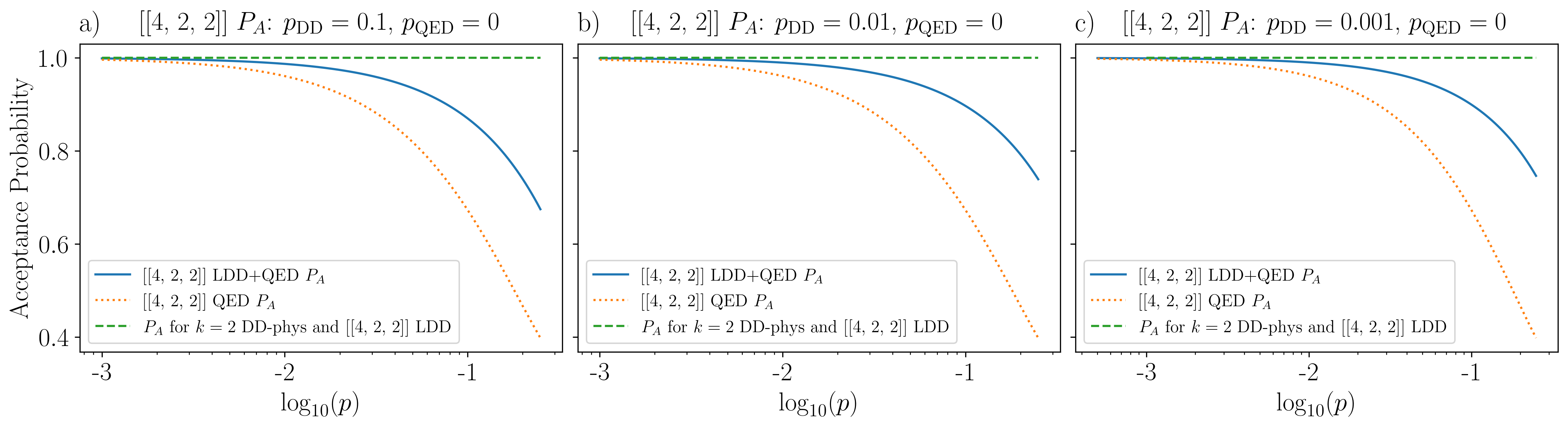}
  \caption{
    Acceptance probability $P_A$ for the four strategies using the $[[4,2,2]]$ code, shown as a function of the physical Pauli error probability $p$ for (a) $\pDD=0.1$, (b) $\pDD=0.01$, (c) $\pDD=0.001$, with perfect error detection ($\pQED=0$).
    The non-postselected strategies have $P_A=1$ by definition, while the QED-based strategies have $P_A<1$ due to postselection on a trivial syndrome.
  The expected number of trials required to obtain an accepted run is $1/P_A$, and this overhead approaches unity as $p\to 0$.}
  \label{fig:4_2_2_probability_of_acceptance_comparison}
\end{figure*}

\subsubsection{Imperfect detection regime \texorpdfstring{($\pQED>0$)}{pQED>0}}
We next relax the perfect-detection assumption and allow $\pQED>0$.
As a representative example, \cref{fig:4_2_2_w_pQED_equal_sqrt_p} sets $\pQED=\sqrt{p}$ at fixed $\pDD=0.01$.
This lies outside the hypotheses of \cref{thm:QED.Hyb.vs.QED}.
In the plotted range, the hybrid protocol remains favorable in the low-noise regime, although DD-phys becomes competitive again at larger $p$.

\cref{fig:4_2_2_relative_advantage_comparison} maps the relative-improvement metric $R$ in the $(\pDD,\pQED)$ plane at fixed $p=10^{-3}$.
Panels (b) and (c) show that LDD+QED outperforms LDD-only and QED-only throughout the plotted domain. The comparison with DD-phys in panel (a) is qualitatively different: the advantage boundary is approximately diagonal and does not sweep across the entire plane as $p$ decreases.
This behavior is consistent with the fact that imperfect detection can reintroduce leading-order ($O(p)$) failure mechanisms through missed nontrivial syndromes, so that both DD-phys and the hybrid protocol can share the same leading scaling in $p$ and the ordering is then controlled by relative prefactors.

\begin{figure}[t]
  \centering
  \includegraphics[width=\columnwidth]{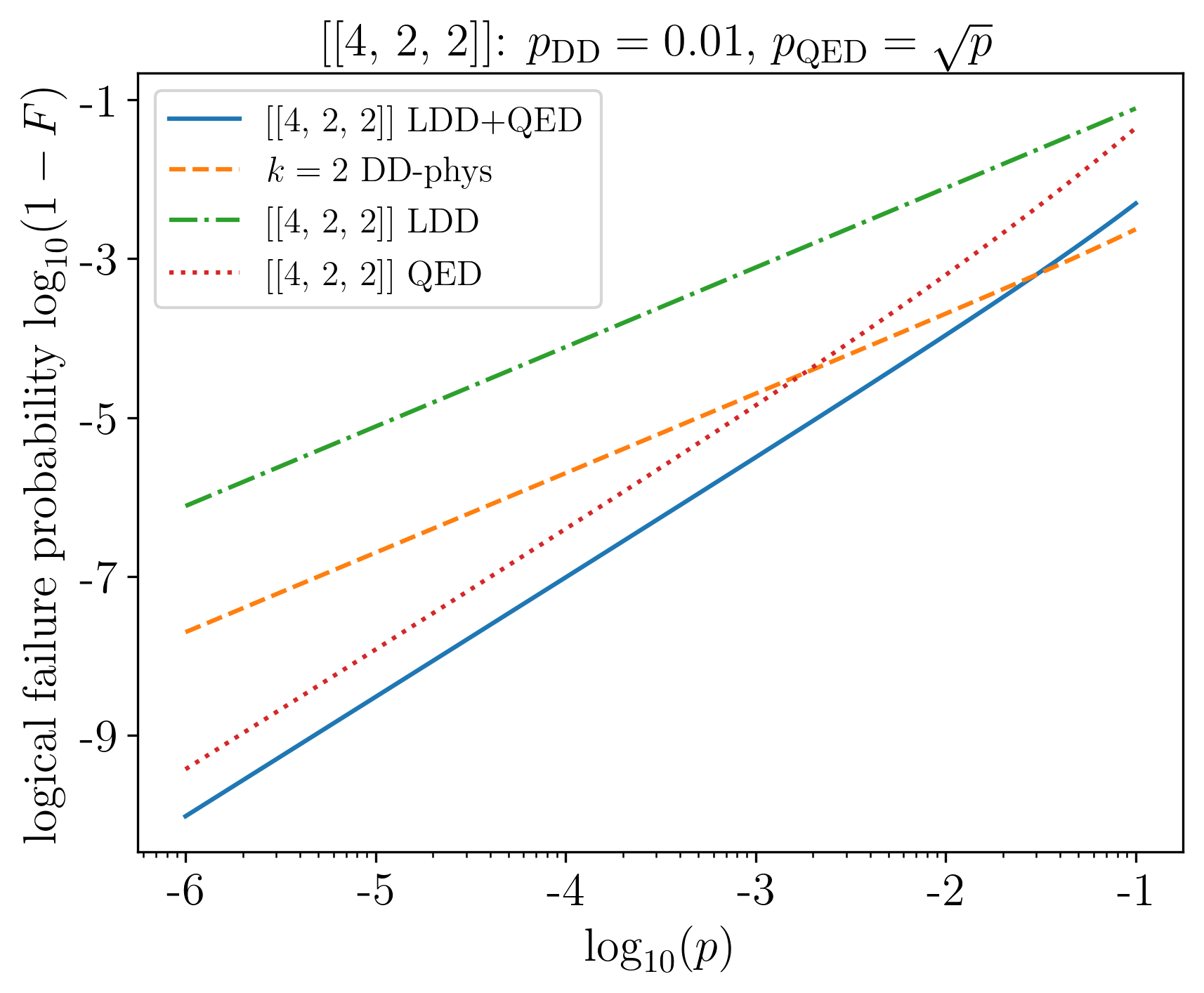}
  \caption{Logical failure probability for a $[[4,2,2]]$ code with $\pQED=\sqrt{p}$ at fixed $\pDD=0.01$, comparing physical DD, LDD-only, QED-only, and the hybrid LDD+QED protocol.
The hybrid protocol is favorable in the low-noise regime shown, while DD-phys becomes competitive again at larger $p$.}
  \label{fig:4_2_2_w_pQED_equal_sqrt_p}
\end{figure}

\begin{figure*}[th]
  \centering
\includegraphics[width=\textwidth]{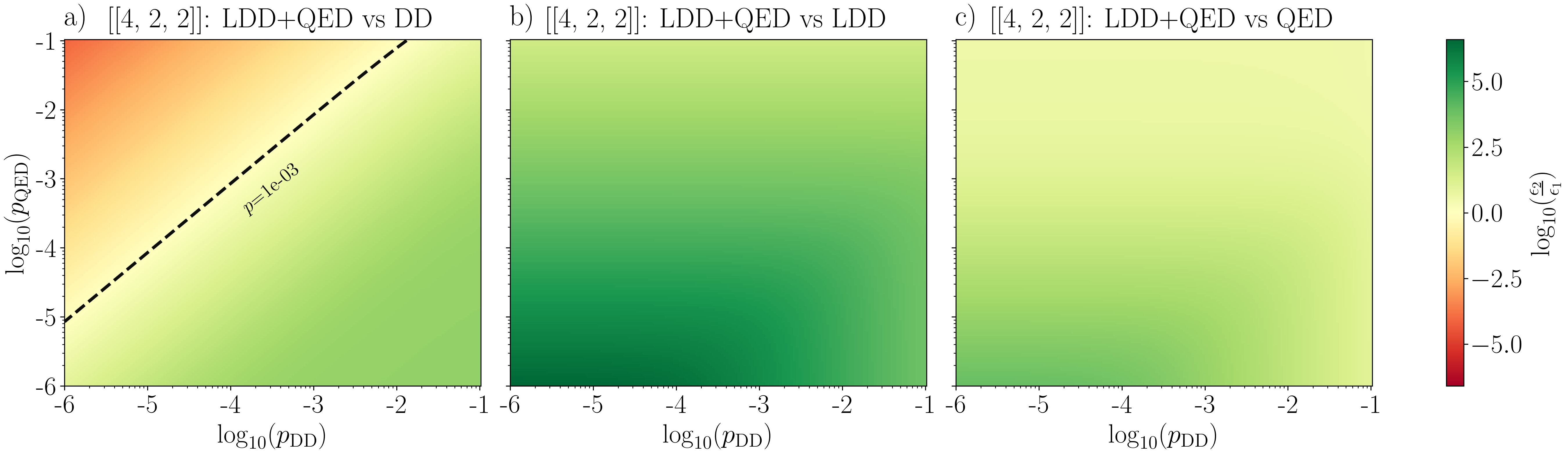}
\caption{Relative improvement metric $R=\log_{10}(\epsilon_{\mathrm{comp}}/\epsilon_{\mathrm{hyb}})$ for the $[[4,2,2]]$ code, comparing LDD+QED against (a) DD-phys, (b) LDD-only, and (c) QED-only. The contour in panel (a) marks the sign-change boundary $R=0$ for $p=10^{-3}$. Panels (b) and (c) remain positive throughout the plotted domain, indicating that LDD+QED outperforms LDD-only and QED-only for these parameters.
 }
\label{fig:4_2_2_relative_advantage_comparison}
\end{figure*}

\begin{figure*}[th]
  \centering
\includegraphics[width=\textwidth]{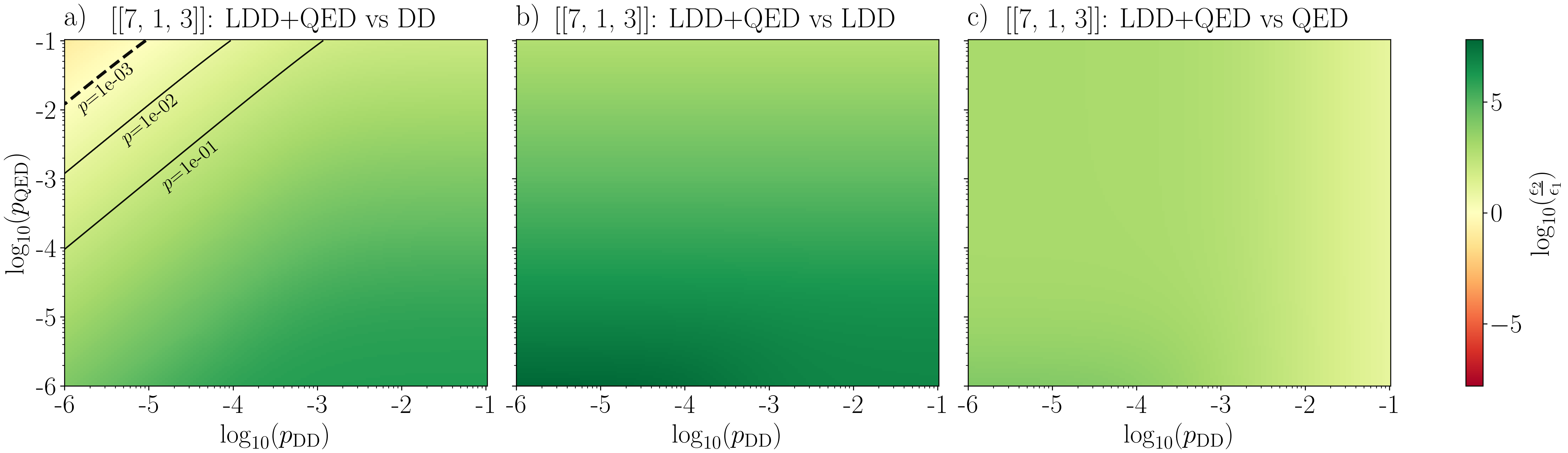}
\caption{
Relative improvement metric $R=\log_{10}(\epsilon_{\mathrm{comp}}/\epsilon_{\mathrm{hyb}})$ for the $[[7,1,3]]$ code in the QED setting, comparing LDD+QED against (a) DD-phys, (b) LDD-only, and (c) QED-only.
The contours in panel (a) mark the sign-change boundary $R=0$ for several values of $p$, while panels (b) and (c) remain positive throughout the plotted domain.}
\label{fig:7_1_3_relative_LDD-QED_advantage_comparison}
\end{figure*}

\subsection{\texorpdfstring{$[[7,1,3]]$}{[[7,1,3]]} code in the QED setting}
\label{ss:713-LDD+QED-Setting}

To test whether the saturation seen in the $[[4,2,2]]$ example is generic, we also evaluate the $[[7,1,3]]$ Steane code in the QED setting, using the same Steane-code LDD choice as in \cref{sec:713}. As in the preceding QED discussion, the comparison is made through the relative-improvement metric $R$ defined in \cref{eq:relative-advantage}, i.e., through the conditional logical failure probability; acceptance-probability considerations are the same as for the other postselected QED strategies and are not replotted here.

As shown in \cref{fig:7_1_3_relative_LDD-QED_advantage_comparison}, the comparison with DD-phys is qualitatively different from the $[[4,2,2]]$ case: as $p$ decreases, the LDD+QED advantage region continues to expand rather than approaching a limiting boundary.
Panels (b) and (c) remain favorable throughout the plotted domain, so for these parameters the hybrid protocol also outperforms LDD-only and QED-only.
This provides additional numerical evidence that the QED hybrid-advantage region is code- and LDD-dependent rather than universal, and complements \cref{thm:QED.Hyb.vs.QED} by probing the imperfect-detection regime $\pQED>0$.

\section{Summary}
\label{sec:summary}

This work develops a unified weight-enumerator-polynomial-based framework for analyzing how logical dynamical decoupling combines with both QEC and QED in a single memory cycle.
For QEC, the framework yields exact logical-fidelity expressions for DD-phys, QEC-only, LDD-only, and LDD+QEC.
For QED, it yields exact formulas for both the logical fidelity conditioned on acceptance and the acceptance probability of QED-only and LDD+QED, together with the corresponding DD-phys and LDD-only baselines.

In the perfect-recovery QEC setting, \cref{thm:Hyb.vs.QEC} shows that LDD+QEC outperforms QEC-only precisely when the suppressed sector contains a larger fraction of uncorrectable errors than the unsuppressed sector.
The low-$p$ analysis sharpens this into a practical design rule: if the LDD group suppresses at least one minimum-weight uncorrectable error, then the hybrid protocol wins for sufficiently small $p$, and stabilizer dressing can often be used to enforce this condition.
In the QED setting, the relevant trade-off is two-dimensional, involving both the logical fidelity conditioned on acceptance and the acceptance probability $P_A$; \cref{thm:QED.Hyb.vs.QED} gives the corresponding ideal-readout comparison criterion for LDD+QED versus QED-only.

In \cref{sec:numerics} we evaluate the analytic formulas for representative QEC and QED examples to illustrate the theory and to probe regimes beyond the clean asymptotic assumptions.
For the $[[7,1,3]]$ Steane code with the decoding map specified in \cref{ss:713-Decoding-Map}, we find that in the ideal-recovery regime ($\pQEC=0$) the hybrid LDD+QEC strategy outperforms QEC-only, LDD-only, and physical DD as $p\to 0$ for the tested suppression values, consistent with \cref{thm:Hyb.vs.QEC-asymptotics}.
When $\pQEC$ is allowed to vary (including settings that violate $\pQEC=o(p^{\alpha-1})$), heat-map comparisons in the $(\pDD,\pQEC)$ plane show that the region where LDD+QEC has lower logical failure probability than competing strategies expands as $p$ decreases, although very strong physical DD can remain competitive when $\pDD$ is extremely small.
The numerics additionally highlight that the best LDD generating set depends on both the physical noise strength and the QEC imperfection level: when QEC is imperfect, suppressing low-weight errors can be especially beneficial, whereas in the near-perfect-QEC regime the main gains come from suppressing the higher-weight uncorrectable errors.

The $[[5,1,3]]$ code shows qualitatively similar QEC behavior.
With the decoding map of \cref{ss:513-Decoding-Map}, the hybrid and QEC-only curves separate in the expected direction in the ideal-recovery regime, again consistent with the sufficient low-$p$ mechanism of \cref{thm:Hyb.vs.QEC-asymptotics}.
In imperfect-recovery comparisons, the hybrid protocol remains the best-performing encoded strategy in the low-noise regime shown, and the region of hybrid advantage in the $(\pDD,\pQEC)$ plane expands as $p$ decreases.
This indicates that the favorable hybrid ordering is not unique to the Steane code but also appears in a distinct distance-$3$ example with a different decoding map.

By contrast, the $[[13,1,3]]$ example is deliberately chosen so that the sufficient low-$p$ condition $\beta=\alpha$ from \cref{thm:Hyb.vs.QEC-asymptotics} fails: the selected LDD generators do not suppress any minimum-weight uncorrectable error.
In that case, the hybrid protocol can lose its asymptotic separation from QEC-only, and in the $(\pDD,\pQEC)$ plane its advantage relative to physical DD approaches a limiting boundary rather than expanding across the full plane as $p\to 0$.
This provides a concrete reminder that hybrid advantage is not automatic, but depends on how well the chosen LDD group is matched to the code and decoding map.

In the QED setting, the $[[4,2,2]]$ code illustrates both the benefit of combining LDD with postselected detection and the importance of tracking acceptance as well as conditional fidelity.
In the ideal-detection regime ($\pQED=0$), LDD+QED improves upon QED-only and, in the plotted range, attains the smallest conditional logical failure probability among the four compared strategies; at the same time, the acceptance probabilities of the QED-based strategies approach unity as $p\to 0$, with the hybrid protocol doing so more rapidly.
In the imperfect-detection regime, LDD+QED continues to outperform LDD-only and QED-only in conditional logical failure probability throughout the plotted $(\pDD,\pQED)$ domain, but its comparison with physical DD is more delicate: missed nontrivial syndromes reintroduce leading-order $O(p)$ failure mechanisms, so the advantage boundary relative to DD-phys appears to approach a limiting curve controlled by relative prefactors rather than sweeping across the full plane as $p\to 0$.
Thus, in the detection setting as well, whether hybrid advantage expands or saturates is code- and LDD-dependent.

These results can be summarized as the statement that DD and QEC/QED are ``better together''
when one co-designs the code, the decoding/detection map, and the LDD group so that suppression is concentrated on the errors that actually control logical failure and, in the detection setting, acceptance.

There are several natural directions to build on this work.
First, it would be valuable to connect microscopic noise models
(including biased noise and coherent/non-Pauli components)
to the effective parameters $(p,\pDD,\pQEC/\pQED)$ used here,
and to incorporate control-induced errors so that $\pDD$
reflects both suppression and pulse imperfections.
In particular, one can simulate pulse-level dynamics
for a specific hardware platform to do so.
Second, the WEP-based formulas provide a fast evaluation tool that can be used to optimize
over LDD generating sets and decoding choices for a given hardware noise profile.
Third, one can optimize fine-grained decisions abstracted away in our effective model
for a specific hardware implementation or simulation
and observe how they impact the values of the effective parameters $(p,\pDD,\pQEC/\pQED)$.
These include, e.g., the choice of pulse count, DD sequence (for a given DD group),
pulse shapes, and timing.
Finally, extending the analysis from a single memory cycle to multi-round fault-tolerant protocols (where syndrome extraction and control must be interleaved repeatedly) would clarify how the advantages identified here translate into end-to-end logical performance and resource tradeoffs in realistic architectures.

\begin{acknowledgments}
Research was sponsored by IARPA and the Army Research Office, under the Entangled Logical Qubits program, and was accomplished under Cooperative Agreement Number W911NF23-2-0216. The views and conclusions contained in this document are those of the authors and should not be interpreted as representing the official policies, either expressed or implied, of IARPA, the Army Research Office, or the U.S. Government. The U.S. Government is authorized to reproduce and distribute reprints for Government purposes notwithstanding any copyright notation herein. This material is also based upon work supported by, or in part by, the U.S. Army Research Laboratory and the U.S. Army Research Office under contract/grant number W911NF2310255.
\end{acknowledgments}

\appendix

\section{Alternative model for DD suppression}
\label{app:alt-DD-model}

The model of DD suppression used in the main text,
which we call here the ``renormalized'' (ren) model,
is one of many possible effective models
that one can attempt to fit to a given microscopic noise-and-control scenario.
For example, one may consider replacing the renormalization procedure we used
with an assumption that errors suppressed by DD or LDD are instead ``transferred'' to the identity sector, rather than redistributed over the remaining errors.
In such an ``identity-transfer'' (IT) model,
$\FHyb\ge \FQEC$ is built in by construction,
so the model provides an optimistic benchmark for the hybrid fidelity.
In fact, for fixed sector weights one finds $\FHyb^{\mathrm{ren}}\le \FHyb^{\rm IT}$. More generally, one can consider a whole family of redistribution rules,
and the renormalized sector-reweighting model is just one such choice.
It is more conservative than the IT model, but it is not a worst-case bound.
Some degree of pessimism is warranted because there is no \textit{a priori} reason for all probability mass removed from the DD-suppressed sector to flow to the identity sector.
Determining the redistribution rule appropriate to a given microscopic noise-and-control model would require a much more detailed first-principles derivation
than is attempted here and lies outside the scope of our effective-model analysis,
and we expect the details of the redistribution to be case-dependent.
This motivates, for present purposes, the use of a model that is analytically tractable and yields a genuinely nontrivial hybrid-vs-QEC comparison, namely the renormalized model.

We next present a formalization of these considerations.
First note that in the renormalized model, the possibility that LDD+QEC performs worse than QEC-only is already present in the ideal-recovery regime $\pQEC=0$ and is not caused by imperfect DD ($\pDD>0$).
DD partitions the Pauli errors into a DD-unsuppressed sector
\begin{equation}
  \mathrm{U}:=\tGp=\bigl\{E'\in\tmP_n: [E',g']=0\ \ \forall g'\in\tG\bigr\},
\end{equation}
and a DD-suppressed sector $\mathrm{S}:=\tmP_n\setminus\tGp$,
while the decoder partitions them into correctable and uncorrectable errors.
\Cref{thm:Hyb.vs.QEC} shows that the sign of the fidelity difference $\FHyb-\FQEC$
is determined entirely by how the uncorrectable set is distributed across these two DD sectors.
In particular, for any fixed $\pDD\in[0,1)$, the hybrid protocol helps precisely when the DD-suppressed sector contains a larger fraction of uncorrectable errors than the DD-unsuppressed sector, and hurts when the opposite is true.
Thus the role of $\pDD$ in the regime of \Cref{thm:Hyb.vs.QEC} is only to determine how strongly the DD-controlled interval reweights the two sectors; it does not determine the sign of $\FHyb-\FQEC$, and this is true even for $\pDD=0$.

Under the ``identity-transfer'' (IT) redistribution rule, by contrast, monotonic improvement is built in by construction. To see this, let us introduce a reference effective Pauli distribution
\begin{equation}
\eref:\tmP_n\to[0,1],
\qquad
\sum_{E'\in\tmP_n}\eref(E')=1,
\end{equation}
which in our model is the effective distribution corresponding to the case $\pDD=1$.
According to \cref{eq:error-model},
\begin{equation}
\eref(E')=(1-p)^{n-\wt(E')}\Bigl(\frac{p}{3}\Bigr)^{\wt(E')}.
\end{equation}
Define the reference weights of the unsuppressed and suppressed sectors by
\begin{equation}
w_{\mathrm{U,ref}}:=\sum_{E'\in \mathrm{U}}\eref(E'),
\qquad
w_{\mathrm{S,ref}}:=\sum_{E'\in \mathrm{S}}\eref(E'),
\end{equation}
so that
\begin{equation}
w_{\mathrm{U,ref}}+w_{\mathrm{S,ref}}=1.
\end{equation}

For notational simplicity, set $\lambda:=\pDD\in[0,1]$, and define the intermediate distribution
\begin{equation}
\label{eq:widetildeP}
\widetilde P_\lambda(E')=
\begin{cases}
\eref(E'), & E'\in \tGp,\\[3pt]
\lambda \eref(E'), & E'\in \tmP_n \setminus \tGp.
\end{cases}
\end{equation}
Its total mass is
\begin{equation}
Z_\lambda=\sum_{E'\in\tmP_n}\widetilde P_\lambda(E')
=
w_{\mathrm{U,ref}}+\lambda w_{\mathrm{S,ref}}
=
1-(1-\lambda)w_{\mathrm{S,ref}},
\end{equation}
so the removed mass is
\begin{equation}
m_\lambda:=1-Z_\lambda=(1-\lambda)w_{\mathrm{S,ref}}.
\end{equation}
The identity-transfer model is then
\begin{equation}
\label{eq:app-delta-model}
P_\lambda^{\rm IT}(E')
=
\widetilde P_\lambda(E')+m_\lambda \delta_{E',\Id},
\end{equation}
where $\delta_{E',\Id}$ is the Kronecker delta on $\tmP_n$. By contrast, the renormalized model used in the main text is the renormalized sector-reweighting model
\begin{equation}
\label{eq:app-ren-model}
P_\lambda^{\mathrm{ren}}(E')
=
\frac{\widetilde P_\lambda(E')}{Z_\lambda}
=
\frac{\widetilde P_\lambda(E')}{w_{\mathrm{U,ref}}+\lambda w_{\mathrm{S,ref}}}.
\end{equation}
Using $Z_\lambda=1-m_\lambda$, this can be rewritten as
\begin{equation}
\label{eq:app-ren2-model}
P_\lambda^{\mathrm{ren}}(E')
=
\widetilde P_\lambda(E')+m_\lambda P_\lambda^{\mathrm{ren}}(E'),
\end{equation}
which will be useful below.

In the correction setting, let $M$ denote the random phase-stripped Pauli error produced during the DD-controlled wait interval with parameter $\lambda=\pDD$ under the hybrid protocol with the identity-transfer rule. By definition,
\begin{equation}
\Pr(M=E')=P_\lambda^{\rm IT}(E'),
\qquad E'\in\tmP_n.
\end{equation}
Let $L$ denote the event that the full hybrid correction cycle ends in logical failure, so that
\begin{equation}
1-\FHyb^{\rm IT}=\Pr(L).
\end{equation}
Now define
\begin{equation}
b(E'):=\Pr(L\mid M=E'),
\end{equation}
i.e., $b(E')$ is the conditional probability that the remainder of the protocol
(syndrome measurement, recovery, and any recovery-fault mechanism) produces a logical failure, given that the memory-interval error was $E'$.
In particular, $b(E')$ depends only on the subsequent correction stage conditioned on $E'$, and not on the redistribution rule used to assign the probabilities of the different $E'$.
In the ideal-recovery case, $b(E')$ is simply the indicator function of whether $E'$ is uncorrectable, while for imperfect recovery it also incorporates the conditional failure probability coming from the recovery step. Applying the law of total probability with respect to the random variable $M$, we obtain
\begin{multline}
\label{eq:1-FhybIT}
1-\FHyb^{\rm IT}=\Pr(L)
=
\sum_{E'\in\tmP_n}\Pr(M=E') \Pr(L\mid M=E') \\
=
\sum_{E'\in\tmP_n}P_\lambda^{\rm IT}(E') b(E').
\end{multline}
Define
\begin{equation}
B_{\mathrm U}:=\sum_{E'\in \mathrm U}\eref(E') b(E'),
\qquad
B_{\mathrm S}:=\sum_{E'\in \mathrm S}\eref(E') b(E').
\end{equation}
Assuming $b(\Id)=0$ (this is, indeed, the case for the QEC error model considered in the main text; see \cref{ss:protocol-and-error-model}),
it follows by combining \cref{eq:widetildeP,eq:app-delta-model,eq:1-FhybIT} that
\begin{equation}
  1-\FHyb^{\mathrm{IT}}
  =
  B_{\mathrm{U}}+\lambda B_{\mathrm{S}}.
\end{equation}
For QEC-only, we have
\begin{equation}
  1-\FQEC=B_{\mathrm{U}}+B_{\mathrm{S}},
\end{equation}
and therefore
\begin{equation}
\FHyb^{\rm IT}\ge \FQEC
\qquad \forall\lambda\in[0,1].
\end{equation}
Thus, under the IT model, all regions in which the hybrid protocol is worse than QEC-only disappear.
The corresponding QED statement is analogous: after decomposing the acceptance and accepted-error weights into their unsuppressed and suppressed contributions,
one again finds that the IT model is monotone-improving relative to QED-only.
We omit the algebra here because it is parallel to the QEC calculation above.

Next, we compare the IT model directly to the renormalized model. Under the renormalized model, \cref{eq:1-FhybIT} applies again, but in the form
\begin{equation}
\label{eq:1-Fhybren}
1-\FHyb^{\rm ren} = \sum_{E'\in\tmP_n}P_\lambda^{\rm ren}(E') b(E'),
\end{equation}
so that combining this with \cref{eq:widetildeP,eq:app-ren-model} now yields
\begin{equation}
1-\FHyb
\equiv
1-\FHyb^{\mathrm{ren}}
=
\frac{B_{\mathrm U}+\lambda B_{\mathrm S}}{w_{\mathrm{U,ref}}+\lambda w_{\mathrm{S,ref}}}.
\end{equation}
Since
$w_{\mathrm{U,ref}}+\lambda w_{\mathrm{S,ref}}\le 1$,
we immediately obtain
\begin{equation}
\FHyb^{\mathrm{ren}}\le \FHyb^{\rm IT}.
\end{equation}
So, between these two redistribution models, the IT rule is an optimistic best case for the hybrid fidelity, while the renormalized model is the more conservative one.

To situate the renormalized model, consider the following general redistribution framework. Let $R$ be an arbitrary probability distribution on $\tmP_n$:
\begin{equation}
R:\tmP_n\to[0,1],
\qquad
\sum_{E'\in\tmP_n}R(E')=1.
\end{equation}
Generalizing the IT and renormalized models [\cref{eq:app-delta-model,eq:app-ren2-model}, respectively], we may then define a general redistributed effective model by
\begin{equation}
P_\lambda^{R}(E')
=
\widetilde P_\lambda(E')+m_\lambda R(E').
\end{equation}
In the correction setting, the corresponding hybrid logical-failure probability is
\begin{equation}
1-\FHyb^{R}
=
\sum_{E'\in\tmP_n}P_\lambda^{R}(E') b(E')
=
B_{\mathrm U}+\lambda B_{\mathrm S}+m_\lambda r_{\mathrm{bad}},
\end{equation}
where
\begin{equation}
r_{\mathrm{bad}}
:=
\sum_{E'\in\tmP_n}R(E') b(E').
\end{equation}
Since $r_{\mathrm{bad}}$ is a convex combination of the values of the conditional logical-failure probability $b(E')$, we have
\begin{equation}
0 = \min_{E'\in\tmP_n} b(E')
\le
 r_{\mathrm{bad}}
\le
\max_{E'\in\tmP_n} b(E') \le 1.
\end{equation}

In particular, under the same assumption that $b(\Id)=0$, the IT model corresponds to a best-case choice
\begin{equation}
R(E')=\delta_{E',\Id},
\qquad\Longrightarrow\qquad
r_{\mathrm{bad}}=0,
\end{equation}
so that
\begin{equation}
1-\FHyb^{\rm IT}=B_{\mathrm U}+\lambda B_{\mathrm S}.
\end{equation}
At the opposite extreme, if the redistributed mass is placed entirely on events
that maximize $b(E')$, then $r_{\mathrm{bad}}=\max_{E'}b(E') =: r_{\mathrm{bad}}^{\mathrm{worst}}$, which yields the most pessimistic redistribution allowed by this framework, and for which
\begin{equation}
1-\FHyb^{\mathrm{worst}} = B_{\mathrm U}+\lambda B_{\mathrm S}+m_\lambda r_{\mathrm{bad}}^{\mathrm{worst}} .
\end{equation}

The renormalized model is neither of these extremes.
Indeed, from \cref{eq:app-ren2-model}, the renormalized model corresponds to
\begin{equation}
R(E')=P_\lambda^{\mathrm{ren}}(E').
\end{equation}
Therefore
\begin{equation}
r_{\mathrm{bad}}^{\mathrm{worst}} \ge r_{\mathrm{bad}}^{\mathrm{ren}}
=
\sum_{E'\in\tmP_n}P_\lambda^{\mathrm{ren}}(E') b(E')
=
1-\FHyb^{\mathrm{ren}} .
\end{equation}

Thus, as claimed, the renormalized model is an intermediate choice of redistribution rule:
\begin{equation}
\FHyb^{\mathrm{worst}} \le \FHyb^{\mathrm{ren}}\le \FHyb^{\rm IT} ,
\end{equation}
i.e., it is situated between the best-case IT model and the most pessimistic possible redistribution. Moreover, it is a simple analytically tractable effective model in which the hybrid-vs-QEC comparison remains nontrivial.

Finally, the corresponding QED statement is analogous:
after decomposing the acceptance and accepted-error weights into their unsuppressed and suppressed contributions, we again find similar inequalities.
We omit the algebra here because it is parallel to the QEC calculation above.

\section{Equivalence of our fidelity measure to the entanglement fidelity}
\label{app:ent-fid}

Recall the standard definition of the entanglement fidelity $F_{\rm e}$ for a CPTP map $\Lambda(\rho)=\sum_{\alpha}K_{\alpha}\rho K_{\alpha}^{\dagger}$ acting on a $d$-dimensional system:
\begin{subequations}
\begin{align}
  F_{\rm e}(\Lambda)&\coloneqq
  \bra{\Phi}(\Lambda\otimes\operatorname{id})\bigl(\ketbra{\Phi}\bigr)\ket{\Phi}\\
  &= \frac{1}{d^2} \sum_\alpha \Bigl| \Tr \bigl(K_\alpha\bigr) \Bigr|^{2},
\end{align}
\end{subequations}
where $\ket{\Phi}=\frac{1}{\sqrt{d}}\sum_{j=1}^{d}\ket{j}\otimes\ket{j}$ is a maximally entangled state.
In our setting, $d=2^k$.

The set of Kraus operators for the logical Pauli channel \cref{eq:logical-Pauli-channel} can be taken as
$\{K_\alpha\} = \{\sqrt{p_{L}} L\}$ (choosing an arbitrary representative $L$ of each logical Pauli class acting on the code space).
Thus,
\begin{equation}
F_{\rm e}(\Lambda) =
\sum_{L}p_{L} \frac{|\Tr (L)|^{2}}{d^{2}}.
\end{equation}
For any nontrivial logical Pauli (i.e., any Pauli operator on $k$ qubits not proportional to the identity), we have $\Tr(L)=0$, while $\Tr(I)=d$.
Therefore,
\begin{equation}
F_{\rm e}(\Lambda)=p_{I} = F,
\end{equation}
where $F$ is defined in \cref{eq:F}.

\section{QEC-only}
\label{ss:QEC-only}

In this appendix, we give an independent derivation of the QEC-only fidelity expression.

The phase-stripped Pauli group $\tmP_n$ splits into the following four disjoint subsets.
For each subset, we state the logical outcome of QEC recovery on an error $E'\in\tmP_n$, referring to the notation in \cref{tab:tags}.
\begin{enumerate}
  \item $E'\in\pmS$: $E'$ is a stabilizer element, so $\syn(E')=0$.
  This never results in a logical error (independent of $\pQEC$).
  Because $D(0)=\pi(I)$, every stabilizer is trivially correctable, so $\pmS\subset\tmcEc$.
  \item $E'\in \tmcEc \setminus \pmS$: $E'$ is a correctable (non-stabilizer) error, so $\syn(E')\ne 0$, and its WEP contribution is $\C$.
  There are two cases:
  \begin{enumerate}
    \item With probability $1-\pQEC$ the decoder applies $D(\syn(E'))$ and the error is corrected to the trivial logical class.
    \item With probability $\pQEC$ the decoder fails and applies a uniformly random syndrome-consistent recovery; this yields a logical error with probability $1-4^{-k}$.
  \end{enumerate}
  \item $E'\in \pmSL \setminus \pmS$: $E'$ is an undetectable nontrivial logical operator, so $\syn(E')=0$, and its WEP contribution is $\symL$.
  This always produces a logical error (since decoder failure only affects nontrivial syndromes in our model).
  \item $E'\in \tmP_n \setminus (\tmcEc \cup \pmSL)$: $E'$ is an uncorrectable but detectable error, so $\syn(E')\ne 0$, and its WEP contribution is $\uCD$.
  There are two cases:
  \begin{enumerate}
    \item With probability $1-\pQEC$ the decoder applies $D(\syn(E'))$ and mis-corrects, resulting in a logical error.
    \item With probability $\pQEC$ the decoder fails and applies a uniformly random syndrome-consistent recovery, which yields a logical error with probability $1-4^{-k}$.
  \end{enumerate}
\end{enumerate}

\begin{mylemma}
  The infidelity under QEC is
  \begin{subequations}
  \begin{align}
    \label{eq:1-F4.general.b-1}
    1 - \FQEC &= \frac{1}{\all}\Big[\symL + \uCD + \pQEC\bigl[\C - 4^{-k}(\uCD + \C)\bigr]\Big] \\
    \label{eq:1-F4.general.b-2}
    &= \frac{1}{\all}\Big[\uC + \pQEC(\C - 4^{-k}\D)\Big] ,
  \end{align}
  \end{subequations}
  in agreement with \cref{eq:1-F4.general.b}.
\end{mylemma}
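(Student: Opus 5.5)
The plan is to compute the QEC-only infidelity directly from the four-way partition of $\tmP_n$ listed just above: $\pmS$, $\tmcEc\setminus\pmS$, $\pmSL\setminus\pmS$, and $\tmP_n\setminus(\tmcEc\cup\pmSL)$. First I will check that these four sets are pairwise disjoint and cover $\tmP_n$. The key point is that $\pmS\subset\tmcEc$ because $D(0)=\pi(I)$. Moreover, an element of $\pmSL\setminus\pmS$ has zero syndrome, and $D(0)E'=E'\notin\pmS$, so it is not correctable. Since there is no DD, no rescaling occurs, and by the WEP identity $\sum_{E'\in A}\Pr(E')=(1-p)^n W(A;z)$ the total weight is $(1-p)^n\all$. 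The common factor $(1-p)^n$ then cancels and leaves the normalization $1/\all$.

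Next I will assign to each error its conditional logical-failure probability, read off from the case analysis above.
\begin{itemize}
\item Stabilizers contribute $0$.
\item Correctable non-stabilizers contribute $\pQEC(1-4^{-k})$.
\item Undetected nontrivial logicals contribute $1$, since recovery failure acts only on nonzero syndromes.
\item Detected uncorrectables contribute $(1-\pQEC)+\pQEC(1-4^{-k})=1-4^{-k}\pQEC$.
\end{itemize}
Summing with WEP weights gives
\begin{equation*}
\symL+\uCD+\pQEC\bigl[\C-4^{-k}(\uCD+\C)\bigr],
\end{equation*}
which is the bracket in \cref{eq:1-F4.general.b-1} after grouping terms.

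To pass to \cref{eq:1-F4.general.b-2}, I will use the two set identities
\begin{equation*}
\tmP_n\setminus\tmcEc=(\pmSL\setminus\pmS)\,\dot\cup\,(\tmP_n\setminus\tmcEc\setminus\pmSL)
\end{equation*}
and
\begin{equation*}
\tmP_n\setminus\pmSL=(\tmcEc\setminus\pmS)\,\dot\cup\,(\tmP_n\setminus\tmcEc\setminus\pmSL).
\end{equation*}
The second relies on the fact that correctable non-stabilizers have nonzero syndrome, because a zero-syndrome correctable error lies in $\pmS$. By additivity of $W(\cdot;z)$ over disjoint unions, these give $\uC=\symL+\uCD$ and $\D=\C+\uCD$. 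Finally, I will note that the result coincides with \cref{eq:1-F4.general.b}, which was obtained from \cref{thm:F1.general} at $\pDD=1$.

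The only delicate step is verifying disjointness and coverage of the partition, specifically that every nonzero-syndrome error is either correctable or lies in the detected-uncorrectable class. This follows directly from the definitions of $\tmcEc$ and $\pmSL$. Everything else is bookkeeping.
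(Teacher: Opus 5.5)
Your proposal is correct and follows essentially the same route as the paper. Both use the same four-way partition of $\tmP_n$ and assign each class its logical-failure weight; the paper enumerates these as branches 3, 4(a), 2(b), and 4(b). Both then pass to the second form via the disjoint-union identities $\symL+\uCD=\uC$ and $\C+\uCD=\D$. Your explicit check of disjointness and coverage (using $D(0)=\pi(I)$) is a small addition that the paper leaves implicit.
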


\begin{proof}
  The fidelity $\FQEC$ is the probability that after one round of the QEC-only strategy, the logical state is unchanged.
  Thus, $1-\FQEC$ is the probability of logical failure.
  Of the subsets enumerated above, only $\pmS$ (subset 1) and properly corrected errors in $\tmcEc \setminus \pmS$ (subset 2a) do not contribute to logical failures.
  Let us consider the remaining branches:
  \begin{enumerate}
    \item[3.] Nontrivial logical errors occur whether or not the decoder acts.
      Thus, this branch contributes with weight $\symL$.
    \item[4.(a)] The uncorrectable but detectable errors have weight $\uCD$ and are converted into logical errors with probability $1-\pQEC$.
    \item[2.(b)] The correctable errors have weight $\C$; the decoder fails with probability $\pQEC$ and then yields a logical error with probability $1-4^{-k}$.
    \item[4.(b)] The same happens with the uncorrectable but detectable errors, again with weight $\uCD$.
  \end{enumerate}

  Adding these four branches and dividing by the total phase-stripped Pauli group weight $W(\tmP_n;z)=\all$ yields
  \begin{equation}
  \label{eq:1-F4.general.a}
  \begin{split}
    1 - \FQEC &= \frac{1}{\all}
    \Big[\symL + (1-\pQEC)\uCD\\
    &\quad\quad + \pQEC(1-4^{-k}) \big( \C + \uCD \big)\Big] ,
  \end{split}
  \end{equation}
  which rearranges into \cref{eq:1-F4.general.b-1}.
\Cref{eq:1-F4.general.b-2} follows since the corresponding sets form disjoint unions, giving $\symL+\uCD=\uC$ and $\C+\uCD=\D$.
\end{proof}

\section{LDD-only}
\label{ss:LDD-only}

In this appendix, we give an independent derivation of the LDD-only fidelity expression.

We now have an $[[n,k,d]]$ code and a DD group $\tG$, which is a subgroup of $\tmP_n$ (in our comparisons we take $|\tG|=4^k$).
The recovery step is skipped, but the syndrome measurement still acts, which means that detected errors (i.e., errors with a nonzero syndrome) have a probability $4^{-k}$ of being mapped back to the correct logical class by a random projection.
Thus, relative to pure DD, the contribution of detected errors to logical failure is reduced by a factor $4^{-k}$.

\begin{mylemma}
  The infidelity under LDD is
  \begin{subequations}
    \begin{align}
      1 - \FLDD =\frac{1}{\uS+\pDD\symS}\bigl(&(\uSuSt -4^{-k}\uSD)\\
      +\pDD&(\SuSt-4^{-k}\SD)\bigr) ,
    \end{align}
  \end{subequations}
  in agreement with \cref{eq:1-F3.general}.
\end{mylemma}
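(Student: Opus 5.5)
The plan is a direct derivation that mirrors the QEC-only appendix: partition the phase-stripped Pauli group into disjoint classes, record what the LDD-only protocol does to each class, and then sum the weights. First I would fix the unnormalized probability weights. Using \cref{eq:error-model} with $z=p/(3-3p)$, an error $E'\in\tGp$ carries weight $z^{\wt(E')}$, while an error $E'\notin\tGp$ carries weight $\pDD z^{\wt(E')}$. The common factor $(1-p)^n$ cancels after normalization. Summing over all of $\tmP_n$ gives the denominator $\uS+\pDD\symS$, exactly as in \cref{eq:phys-DD-denom-again}.

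Next I would split each DD sector into three disjoint pieces and assign each piece its logical-failure probability.
\begin{enumerate}
\item Stabilizer errors in $\pmS$ have zero syndrome, receive no recovery, and act trivially on the codespace. Their failure probability is $0$.
\item Nontrivial logical errors in $\pmSL\setminus\pmS$ have zero syndrome, so the protocol leaves them untouched. Their failure probability is $1$.
\item Detected errors in $\tmP_n\setminus\pmSL$ have nonzero syndrome. By the convention that LDD-only corresponds to $\pQEC=1$, a uniformly random syndrome-consistent recovery is applied. The induced logical Pauli is then uniform over the $4^k$ classes, which gives failure probability $1-4^{-k}$.
\end{enumerate}

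Summing within the unsuppressed sector gives the failure weight $\uSL+(1-4^{-k})\uSD$. In the suppressed sector the analogous expression $\SL+(1-4^{-k})\SD$ is multiplied by $\pDD$. I would then apply the disjoint-union identities $\uSL+\uSD=\uSuSt$ and $\SL+\SD=\SuSt$. These hold because the non-stabilizer part of each sector splits into zero-syndrome logical errors and nonzero-syndrome errors. The identities collapse the numerator to $(\uSuSt-4^{-k}\uSD)+\pDD(\SuSt-4^{-k}\SD)$. Dividing by the denominator yields the claim.

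As a consistency check, this is the same as setting $\pQEC=1$ in \cref{thm:F1.general} and using $\uSuC+\uSC=\uSuSt$ and $\SuC+\SC=\SuSt$, which is \cref{eq:1-F3.general}. The only delicate point, and not a real obstacle, is justifying the $4^{-k}$ uniformity for detected errors. This follows the argument in \cref{ss:protocol-and-error-model}. For a fixed nonzero syndrome, the syndrome-consistent recoveries form a coset of $\pi(C(\mS))$, and uniform sampling over that coset induces a uniform distribution on $\pi(C(\mS))/\pmS\cong\pi(\mL)$, for any error $E'$ with that syndrome.
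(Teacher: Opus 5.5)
Your proposal is correct and follows essentially the same route as the paper: the same denominator as in the DD-phys case, and a sector-by-sector count in which every non-stabilizer error causes failure except that detected ones return to the trivial logical class with probability $4^{-k}$. The differences are cosmetic. You first split the non-stabilizers into zero-syndrome logicals and detected errors and then recombine them via $\uSL+\uSD=\uSuSt$, whereas the paper writes $\uSuSt-4^{-k}\uSD$ directly. You also spell out the coset argument for the uniform $4^{-k}$, which the paper simply takes from the protocol description.
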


\begin{proof}
  The denominator is unchanged relative to the DD-phys case (\cref{ss:DD-phys}): the weight of the total number of errors is the sum of the unsuppressed errors ($\uS$), which keep their probability, and the suppressed errors ($\symS$), which acquire a factor $\pDD$.

  In the numerator, we count the total weight of logical failures, split into unsuppressed-sector terms (no $\pDD$) and suppressed-sector terms (with $\pDD$).
  The term $\uSuSt$ accounts for unsuppressed non-stabilizer errors (all of which would cause a logical failure were it not for the random projection), while $\uSD$ accounts for unsuppressed detected errors, which are corrected to the trivial logical class with probability $4^{-k}$ after a random projection.
  Hence the net unsuppressed contribution is $\uSuSt-4^{-k}\uSD$.

  The suppressed-sector contribution is identical, but restricted to errors in $\tmP_n\setminus\tGp$, yielding $\SuSt-4^{-k}\SD$, and multiplied by the DD rescaling factor $\pDD$.
  Combining these contributions and dividing by the normalization $\uS+\pDD\symS$ gives the stated infidelity formula.
\end{proof}

\section{$[[7, 1, 3]]$ Decoding Map}
\label{ss:713-Decoding-Map}

\begin{table}[h]
\centering
\begin{tabular}{|c|c|c|c|}
\hline
Syndrome & Recovery Op. & Syndrome & Recovery Op. \\
\hline
000000 & \texttt{IIIIIII} & 101111 & \texttt{IIIIXIZ} \\
111000 & \texttt{IIIIIIX} & 101010 & \texttt{IIIIYIZ} \\
111111 & \texttt{IIIIIIY} & 110001 & \texttt{IIIYIIX} \\
000111 & \texttt{IIIIIIZ} & 111001 & \texttt{IIIZIIX} \\
011000 & \texttt{IIIIIXI} & 110111 & \texttt{IIIXIIY} \\
011011 & \texttt{IIIIIYI} & 111110 & \texttt{IIIZIIY} \\
000011 & \texttt{IIIIIZI} & 001111 & \texttt{IIIXIIZ} \\
101000 & \texttt{IIIIXII} & 001110 & \texttt{IIIYIIZ} \\
101101 & \texttt{IIIIYII} & 110101 & \texttt{IIIIYXI} \\
000101 & \texttt{IIIIZII} & 011101 & \texttt{IIIIZXI} \\
001000 & \texttt{IIIXIII} & 110011 & \texttt{IIIIXYI} \\
001001 & \texttt{IIIYIII} & 011110 & \texttt{IIIIZYI} \\
000001 & \texttt{IIIZIII} & 101011 & \texttt{IIIIXZI} \\
110000 & \texttt{IIXIIII} & 101110 & \texttt{IIIIYZI} \\
110110 & \texttt{IIYIIII} & 010001 & \texttt{IIIYIXI} \\
000110 & \texttt{IIZIIII} & 011001 & \texttt{IIIZIXI} \\
010000 & \texttt{IXIIIII} & 010011 & \texttt{IIIXIYI} \\
010010 & \texttt{IYIIIII} & 011010 & \texttt{IIIZIYI} \\
000010 & \texttt{IZIIIII} & 001011 & \texttt{IIIXIZI} \\
100000 & \texttt{XIIIIII} & 001010 & \texttt{IIIYIZI} \\
100100 & \texttt{YIIIIII} & 100001 & \texttt{IIIYXII} \\
000100 & \texttt{ZIIIIII} & 101001 & \texttt{IIIZXII} \\
100011 & \texttt{IIIIIYX} & 100101 & \texttt{IIIXYII} \\
111011 & \texttt{IIIIIZX} & 101100 & \texttt{IIIZYII} \\
100111 & \texttt{IIIIIXY} & 001101 & \texttt{IIIXZII} \\
111100 & \texttt{IIIIIZY} & 001100 & \texttt{IIIYZII} \\
011111 & \texttt{IIIIIXZ} & 100010 & \texttt{IYXIIII} \\
011100 & \texttt{IIIIIYZ} & 110010 & \texttt{IZXIIII} \\
010101 & \texttt{IIIIYIX} & 100110 & \texttt{IXYIIII} \\
111101 & \texttt{IIIIZIX} & 110100 & \texttt{IZYIIII} \\
010111 & \texttt{IIIIXIY} & 010110 & \texttt{IXZIIII} \\
111010 & \texttt{IIIIZIY} & 010100 & \texttt{IYZIIII} \\
\hline
\end{tabular}
\caption{Decoding map of the $[[7, 1, 3]]$ code.}
\label{tab:713}
\end{table}

For the $[[7,1,3]]$ Steane code, we use the standard CSS stabilizer generators, ordered as
\begin{equation}
\begin{aligned}
  S_1&=\texttt{ZIZIZIZ},\quad
  S_2=\texttt{IZZIIZZ},\quad
  S_3=\texttt{IIIZZZZ}\\
  S_4&=\texttt{XIXIXIX},\quad
  S_5=\texttt{IXXIIXX},\quad
  S_6=\texttt{IIIXXXX}.
  \end{aligned}
\end{equation}
We represent a syndrome $\sigma\in\mS^*$ as the $6$-bit string $s_1s_2s_3s_4s_5s_6$ with
$s_j=\sigma(S_j)\in\{0,1\}$, i.e., $s_j=1$ iff the error anticommutes with $S_j$
(equivalently, the measurement outcome of $S_j$ is $-1$).
With this convention, \cref{tab:713} specifies which recovery operation $D(\sigma)$ is applied for each syndrome.
The table lists all $|\mS^*|=2^{n-k}=64$ syndromes, including the trivial syndrome $000000$.
With this decoding map,
\begin{equation}
\label{eq:tmcEc}
  \tmcEc=\bigcup_{\sigma\in\mS^*} D(\sigma) \pmS,
\end{equation}
has size $|\mS^*| |\pmS|=2^{2(n-k)}=4096$, i.e., the decoder corrects exactly one stabilizer coset per syndrome. Each corrected error is stabilizer-equivalent to one of the $64$ weight-$0$ through weight-$2$ Pauli strings listed as recovery operations (one per syndrome).

The way the decoding map in \cref{tab:713} was chosen is the following. First, we wish to correct all weight-$1$ errors. Hence, if, for a given syndrome $\sigma$, there is $E' \in \tmP_n$ with $\syn(E') = \sigma$ and $\wt(E') \leq 1$, we pick $D(\sigma) = E'$. For the remaining $42$ syndromes $\sigma$, the minimal weight errors $E'$ with $\syn(E') = \sigma$ have weight $2$, and there $3$ such errors for each syndrome, so the choice of $D(\sigma)$ is somewhat arbitrary: for any such choice all $63$ $\texttt{XX}$, $\texttt{YY}$, and $\texttt{ZZ}$ errors will be uncorrectable (because they share a syndrome with a single-qubit error); out of the $126$ $\texttt{XZ}$, $\texttt{ZY}$, $\texttt{YX}$ errors, $42$ will be correctable and the remaining $84$ will be uncorrectable. We make the following choice.
First, we enumerate all weight-$2$ Pauli strings on $7$ qubits.  These weight-$2$ Pauli strings are enumerated as follows.  First, choose the support pair $Q = (q_1, q_2)$ lexicographically.  Then, assign $(\sigma_1, \sigma_2) \in \{X, Y, Z \}^{2}$ lexicographically.  Each $\sigma_j$ is placed on qubit $q_j$, and all other qubits receive $I$ support.  After enumerating all weight-$2$ Pauli strings using this rule, we iterate through the list.  For each weight-$2$ Pauli string, we check which syndrome is generated.  If no error with this syndrome is already in the recovery map, we add the error to the recovery map.  Otherwise, we continue and the error is deemed uncorrectable.

\section{[[5, 1, 3]] Decoding Map}
\label{ss:513-Decoding-Map}

\begin{table}[h]
\centering
\begin{tabular}{|c|c|c|c|}
\hline
Syndrome & Recovery Op. & Syndrome & Recovery Op. \\
\hline
0000 & \texttt{IIIII} & 0111 & \texttt{IIYII} \\
1100 & \texttt{IIIIX} & 0100 & \texttt{IIZII} \\
1110 & \texttt{IIIIY} & 0001 & \texttt{IXIII} \\
0010 & \texttt{IIIIZ} & 1011 & \texttt{IYIII} \\
0110 & \texttt{IIIXI} & 1010 & \texttt{IZIII} \\
1111 & \texttt{IIIYI} & 1000 & \texttt{XIIII} \\
1001 & \texttt{IIIZI} & 1101 & \texttt{YIIII} \\
0011 & \texttt{IIXII} & 0101 & \texttt{ZIIII} \\
\hline
\end{tabular}
\caption{Decoding map of the $[[5, 1, 3]]$ code.}
\label{tab:513}
\end{table}
For the $[[5,1,3]]$ code, we use the standard stabilizer generators, ordered as
\begin{equation}
  S_1 = \texttt{ZXIXZ},\quad
  S_2 = \texttt{XIXZZ},\quad
  S_3 = \texttt{IXZZX},\quad
  S_4 = \texttt{XZZXI}.
\end{equation}
We represent a syndrome $\sigma\in\mS^*$ as the $4$-bit string $s_1s_2s_3s_4$ with
$s_j=\sigma(S_j)\in\{0,1\}$.
With this convention, \cref{tab:513} specifies which recovery operation $D(\sigma)$ is applied for each syndrome.
The table lists all $|\mS^*|=2^{n-k}=16$ syndromes, including the trivial syndrome $0000$.
With this decoding map, $\tmcEc$ as defined in \cref{eq:tmcEc}
has size $|\mS^*| |\pmS|=2^{2(n-k)}=256$, i.e., the decoder again corrects exactly one stabilizer coset per syndrome.
Each corrected error is stabilizer-equivalent to one of the $16$ weight-$0$ through weight-$1$ Pauli strings listed in \cref{tab:513} as recovery operations (one per syndrome).

\end{document}